\def\plist@algorithm{Alg.\space}
\newcommand{\RR}[0]{\mathbb{R}}
\newcommand{\pl}{\parallel}
\newcommand{\openr}{\hbox{${\rm I\kern-.2em R}$}}
\newcommand{\openn}{\hbox{${\rm I\kern-.2em N}$}}
\newtheorem{theorem}{Theorem}
\newtheorem{lemma}{Lemma}
\title{Adaptive-TMLE for the Average Treatment Effect based on Randomized Controlled Trial Augmented with Real-World Data}
\author{Mark van der Laan$^1$, Sky Qiu$^{1}$\footnote{Corresponding author; E-mail: sky.qiu@berkeley.edu}, Jens Magelund Tarp$^2$, Lars van der Laan$^3$\\
\vspace{0.3cm}\\
$^1$Division of Biostatistics, University of California, Berkeley\\
$^2$Novo Nordisk, Søborg, Denmark\\
$^3$Department of Statistics, University of Washington, Seattle\\}
\date{\today}
\begin{document}
\maketitle
\begin{abstract}
We consider the problem of estimating the average treatment effect (ATE) when both randomized control trial (RCT) data and external real-world data (RWD) are available. We decompose the ATE estimand as the difference between a pooled-ATE estimand that integrates RCT and RWD and a bias estimand that captures the conditional effect of RCT enrollment on the outcome. We introduce an adaptive targeted maximum likelihood estimation (A-TMLE) framework to estimate them. We prove that the A-TMLE estimator is $\sqrt{n}$-consistent and asymptotically normal. Moreover, in finite sample, it achieves the super-efficiency one would obtain had one known the oracle model for the conditional effect of the RCT enrollment on the outcome. Consequently, the smaller and more parsimonious the working model of the bias induced by the RWD is, the greater our estimator's efficiency, while our estimator will always be at least as efficient as an efficient estimator that uses the RCT data only. A-TMLE outperforms existing methods in simulations by having smaller mean-squared-error and 95\% confidence intervals. We apply A-TMLE to augment the DEVOTE trial with external data from the Optum Clinformatics Data Mart, demonstrating its potential to establish treatment superiority in noninferiority trials. A-TMLE could utilize external RWD to help improve the power of randomized trials without biasing the estimates of intervention effects. This approach could allow for smaller, faster trials, decreasing the time until patients can receive effective treatments.
\end{abstract}

\section{Introduction}
The 21st Century Cures Act, enacted by the United States Congress in 2016, was intended to enhance the development process of drugs and medical devices through the use of real-world evidence (RWE), aiming to expedite their availability to patients in need \cite{cures_act}. In response, the U.S. Food and Drug Administration (FDA) released a framework in 2018 for its Real-World Evidence Program, providing comprehensive guidelines on the use of RWE to either support the approval of new drug indications or satisfy post-approval study requirements \cite{fda_rwe}. This framework highlighted the incorporation of external controls, such as data from previous clinical studies or electronic health records, to strengthen the evidence gathered from randomized controlled trials (RCTs), especially when evaluating the safety of an already approved drug for secondary outcomes. Combining RCT data with external real-world data (RWD) would potentially allow for a more efficient and precise estimation of treatment effects, especially if the drug had been approved in specific regions, making data on externally treated patients also available. However, this hybrid design of combining RCT and external RWD, sometimes referred to as \textit{data augmentation}, \textit{data fusion} \cite{bareinboim_causal_2016}, or \textit{data integration}, faces challenges, including differences in the trial population versus the real-world population, which may violate the positivity assumption. Researchers often employ matching techniques to balance patient characteristics between the trial and real-world \cite{zhang_estimation_2021,lin_matching_2023}. Furthermore, existing methods often rely on the assumption of mean exchangeability over the studies, as described in \cite{rudolph_robust_2017,dahabreh_generalizing_2019}, which states that enrolling in the RCT does not affect patients' mean potential outcomes, given observed baseline characteristics. However, this assumption may not hold in reality for many reasons. For example, regional differences in standard-of-care might affect the comparability of the control arm of the RCT and the RWD. If being in the RCT setting increases patients' adherence to their assigned treatment regimes, then the treatment arms also might not be comparable. Additional bias might be introduced from non-concurrency between the RCT and the external RWD, inconsistencies in the outcome measurement, and unmeasured confounding in the non-randomized external study.

In the absence of additional assumptions, investigators are limited to using data from the RCT alone, completely ignoring the external data sources. That is, from an asymptotic perspective, an efficient estimator can totally ignore the external data. However, efficient estimators could still utilize the external data to obtain useful scores from the covariates as dimension reductions to be used in the primary study, which might improve their finite sample performance. For instance, the PROCOVA method learns a prognostic score from historical control arm data and adjusts for this score in the outcome regression to improve the efficiency of estimators in analyzing the current trial \cite{schuler_increasing_2022}. On the other hand, if investigators are willing to make assumptions about the impact of being enrolled in the RCT on the outcome of interest, conditioned on treatment and covariates, then the resulting new statistical model would allow for construction of efficient estimators that are significantly more precise than an estimator ignoring the external study data. Specifically, knowledge on the conditional effect of RCT enrollment on the outcome captures the bias induced from pooling the RCT and external data and therefore allows for their integration.

Making unrealistic assumptions will generally cause biased inference, thereby destroying the sole purpose of the RCT as the \textit{gold-standard} and a study that can provide unbiased estimates of the desired causal effect. To address this problem, we propose an estimation framework that data-adaptively learns a statistical working model that approximates the true impact of the study indicator on the outcome (which represents the bias function for the external study) and then constructs an efficient estimator for the working model specific ATE estimand, defined as first projecting the true data distribution onto the working model, and then plugging into the original target parameter mapping. Our method follows precisely the general adaptive targeted maximum likelihood estimation (A-TMLE) proposed in \cite{adml}. Accordingly, we refer to our method as A-TMLE for data integration.

Our proposed estimator has the following advantages. First, A-TMLE has been shown to be asymptotically normal for the original target parameter, as the difference between the projected and the original estimand being second-order. This holds under the same assumptions as a regular TMLE, provided a sensible nonparametric method for model selection is used, such as cross-validation among a set of candidate working models (e.g., highly adaptive lasso). In our particular application of augmenting RCT data with external RWD, since a regular TMLE would always be asymptotically normal by only using the RCT data in the limit, our estimator would fully preserve the robustness of a regular TMLE of the true target parameter due to being equivalent to a regular TMLE asymptotically. Second, A-TMLE has a super-efficient influence function that equals the efficient influence function of the target parameter when assuming an oracle model that is approximated by the data-adaptive working model. A-TMLE could be viewed as a regularized version of TMLE by performing finite sample bias-variance trade-off in an adaptive fashion. In our application, this implies that the simpler and more parsimonious the impact of the RCT enrollment on the outcome as a function of the treatment and baseline covariates, the smaller the working model would be, leading to more efficiency gain. This further suggests that even if the bias is large in magnitude, provided it is a simple function of treatment and baseline covariates, A-TMLE is still expected to yield efficiency gain. This shows a key advantage over many existing data integration estimators, where efficiency gains are often limited by the magnitude of the bias. Additionally, if the true bias is highly complex, it may still be reasonably well-approximated by a much simpler function in finite sample. Therefore, A-TMLE may still offer efficiency gains in these scenarios. Third, A-TMLE corresponds to a well-behaved TMLE of the working model specific ATE estimand, which is itself of interest, beyond that its approximation error with respect to the true target estimand is of second-order.

To summarize our contributions, A-TMLE provides valid nonparametric inference in problems of integrating RCT data and external RWD while fully utilizing the external data. These estimators are not regular with respect to the original statistical model but are regular under the oracle submodel. As sample size grows, the data-adaptively learned working model will eventually approximate the true model, but in finite sample A-TMLE acts as a super-efficient estimator of the true estimand and efficient estimator of the projection estimand. In our particular estimation problem where even efficient estimators could be underpowered, this class of A-TMLE provides a way forward to achieve more efficiency gain in finite sample without sacrificing nonparametric consistency and valid statistical inference. 

\subsection{Related work}
The decision to integrate or discard external data in the analysis of an RCT often hinges on whether pooling introduces bias. A simple ``test-then-pool'' strategy involves first conducting a hypothesis test to determine the comparability between the RCT and RWD before deciding to pool them together or rely solely on the RCT data \cite{viele_use_2014}. The threshold of the test statistic above which the hypothesis test would be rejected could be determined data-adaptively \cite{yang_elastic_2023}. However, methods of this type can be limited by the small sample sizes typical of RCTs, potentially resulting in underpowered hypothesis tests \cite{li_revisit_2020}. Additionally, using the ``test-then-pool'' strategy, either an efficiency gain is realized, or it is not, without gradation. Bayesian dynamic borrowing methods adjust the weight of external data based on the bias it introduces \cite{pocock_combination_1976,ibrahim_power_2000,hobbs_commensurate_2012,schmidli_robust_2014,lin_many_2023}. Similar frequentist approaches data-adaptively choose between data sources by balancing the bias-variance trade-off, often resulting in a weighted combination of the RCT-only and the pooled-ATE estimands \cite{escvtmle,chen_minimax_2021,cheng_adaptive_2021}. In particular, the experiment-selector cross-validated targeted maximum likelihood estimator (ES-CVTMLE) method \cite{escvtmle} also allows the incorporation of a negative control outcome (NCO) in the selection criteria. The gain in efficiency of these methods largely depends on the bias magnitude, with larger biases diminishing efficiency gains. Other methods incorporate a bias correction by initially generating a biased estimate, then learning a bias function, and finally adjusting the biased estimate to achieve an unbiased estimate of the causal effect, which has a similar flavor to the method we are proposing \cite{kallus_removing_2018,wu_integrative_2022,shyr_multi_2023}. However, the key aspect distinguishing our work and theirs is that their techniques rely on conditional independence between potential outcomes and the study indicator $S$ given observed covariates, an assumption that could easily be violated in reality. For instance, subjects might adhere more strictly to assigned treatment regimens had they been enrolled in the RCT, so being in the RCT or not modifies the treatment effects. This can be viewed as a slight difference in the definition of the treatment in the RCT and the RWD. These methods also focus primarily on unmeasured confounding bias, yet other biases, such as differences in outcome measurement between RCT and RWD settings, may also occur. The reliance on this independence assumption also limits their applicability in scenarios involving surrogate outcomes in the RWD, in which case the endpoints in the RCT and RWD are either inconsistently measured or are differently defined. For instance, digital health endpoints measured using different medical devices may introduce batch effects, where device manufacturer discrepancies cause potential outcomes in the RWD to differ from those in the RCT, even for the same patient. We define our bias directly as the difference between the target estimand and the pooled estimand, therefore it covers any kind of bias, including unmeasured confounding bias. Thus, our method can be applied more broadly, even in cases where the outcomes differ in the two studies.

\subsection{Organization of the article}
This article is organized as follows. In Section \ref{sec:formulation} we formally define the estimation problem in terms of the statistical model and target estimand that identifies the desired average treatment effect. In Section \ref{sec:decomposition}, we decompose our target estimand into a difference between a pooled-ATE estimand $\tilde{\Psi}$ and a bias estimand $\Psi^{\#}$. In Sections \ref{sec:psi_tilde} and \ref{sec:psi_pound}, we provide the key ingredients for constructing A-TMLEs of $\tilde{\Psi}$ and $\Psi^{\#}$, respectively. In Section \ref{sec:psi_tilde}, we present a semiparametric regression working model for the pooled-ATE estimand $\tilde{\Psi}$ and define the corresponding projection parameter. The working model is designed for the conditional average treatment effect, with the projection defined as a squared-error projection of the true outcome regression $E_0(Y\mid A,W)$ (without conditioning on the study indicator) onto the working model $\mathcal{M}_{A,w}$. We derive the canonical gradient and the corresponding TMLE of $\tilde{\Psi}_{{\cal M}_{A,w}}$. Similarly, in Section \ref{sec:psi_pound}, we present the semiparametric regression working model for the conditional effect of the study indicator on the outcome and the corresponding projection parameter $\Psi^{\#}_{{\cal M}_{S,w}}$ that replaces the outcome regression by its squared-error projection onto the working model $\mathcal{M}_{S,w}$. This then defines a working model specific target estimand on the original statistical model and thus defines a new estimation problem that would approximate the desired estimation problem if the working model approximates the true data density. If the study indicator has zero impact on the outcome, then the pooled-ATE estimand that just combines the data provides a valid estimator, thereby allowing full integration of the external data into the estimator; If the study indicator has an impact explained by a parametric form (e.g. a linear combination of a finite set of spline basis functions), then the estimand still integrates the data but carries out a bias correction according to this parametric form. As the parametric form becomes nonparametric, the estimand becomes the estimand that ignores the outcome data in the external study when estimating the outcome regression, corresponding with an efficient estimator. We derive the canonical gradient and the corresponding TMLE of the projection parameter $\Psi^{\#}_{{\cal M}_{S,w}}$. In Section \ref{sec:theory} we analyze the A-TMLE analogue to \cite{adml}, to make this article self-contained. The analysis can be applied to both components of the target parameter, thereby also providing an asymptotic linearity theorem for the resulting A-TMLE of the target parameter. Specifically, we prove that the A-TMLEs of $\tilde{\Psi}$ and $\Psi^{\#}$ are $\sqrt{n}$-consistent, asymptotically normal with possible super-efficient variances. Under an asymptotic stability condition for the working model, they are also asymptotically linear with efficient influence functions that equal the efficient influence functions of the limit of the submodel (oracle model). We discuss the implementation of the proposed A-TMLE in Section \ref{sec:implementation}. In Section \ref{sec:simulation}, we carry out simulation studies to evaluate the performance of our proposed A-TMLE against other methods including ES-CVTMLE \cite{escvtmle}, PROCOVA (a covariate-adjustment method) \cite{schuler_increasing_2022}, a regular TMLE for the target estimand, and a TMLE using RCT data alone. In Section \ref{sec:data_app}, we apply our method to estimate the 1-, 1.5-, and 2-year risks of major adverse cardiac events (MACE) among patients taking insulin degludec versus glargine, using the DEVOTE trial augmented with both an external treatment and comparator arm from Optum’s de-identified Clinformatics database. We conclude with a discussion in Section \ref{sec:discussion}.

\section{The Estimation Problem}\label{sec:formulation}
We observe $n$ independent and identically distributed observations of the random variable $O=(S,W,A,Y)\sim P_0\in\mathcal{M}$, where $P_0$ is the true data-generating distribution and $\mathcal{M}$ is the statistical model. $S\in \{0,1\}$ is an indicator of the unit belonging to an RCT (or a well-designed observational study with no unmeasured confounding), where the causal effect of the treatment on the outcome can be identified; $W\in\RR^d$ is a vector of baseline covariates; $A\in \{0,1\}$ is an indicator of being in the treatment arm; $Y\in\RR$ is a clinical outcome of interest. Note that for the external data, we consider two scenarios, one with only external control arm and the other with both external treatment and external control arms. The problem formulation and estimation are generally the same for those two scenarios. We will make additional remarks at places where there are differences. Throughout this article, we focus on the setup where the RCT has both treatment and control arms. Single-arm trials could be a more challenging setup, as additional non-testable assumptions might be required in order to identify the desired causal effect. Table \ref{tab:notations} contains a list of notations we will use throughout the article. 

\begin{table}[]
\centering
\resizebox{\linewidth}{!}{
\begin{tabular}{c|c}
\toprule
Notation & Description \\ \midrule
$S$ & RCT indicator \\
$W$ & Patient baseline characteristics \\
$A$ & Treatment indicator \\
$Y$ & Outcome \\ \midrule
$Q_P(S,W,A)=E_P(Y\mid S,W,A)$ & Outcome regression \\
$\bar{Q}_P(W,A)=E_P(Y\mid W,A)$ & Outcome regression, marginalized over $S$ \\
$\theta_P(W)=E_P(Y\mid W)$ & Outcome regression, marginalized over $S$ and $A$ \\
$g_P(a\mid W)=P(A=a\mid W)$ & Treatment mechanism, marginalized over $S$ \\
$\Pi_P(s\mid W,A)=P(S=s\mid W,A)$ & RCT enrollment mechanism \\
$\bar{\Pi}_P(s\mid W)=P(S=s\mid W)$ & RCT enrollment mechanism, marginalized over $A$ \\
$\tau_{S,P}(W,A)=E_P(Y\mid S=1,W,A)-E_P(Y\mid S=0,W,A)$ & Conditional average RCT-enrollment effect \\
$\tau_{A,P}(W)=E_P(Y\mid W,A=1)-E_P(Y\mid W,A=0)$ & Conditional average treatment effect (CATE) \\ \midrule
$\Psi^F(P_{O,U})=E_W[E(Y_1-Y_0\mid S=1,W)]$ & Covariate-pooled ATE full-data parameter \\
$\Psi^F_2(P_{O,U})=E_W[E(Y_1-Y_0\mid S=1,W)\mid S=1]$ & RCT-only ATE full-data parameter \\
$\tilde{\Psi}(P_0)=E_0[\tau_{A,0}(W)]$ & Pooled-ATE estimand \\
$\tilde{\Psi}_{\mathcal{M}_{A,w}}(P_0)=E_0[\tau_{A,\beta_0}(W)]$ & Pooled-ATE projection estimand \\
$\Psi^\#(P_0)=E_0[\Pi_0(0\mid W,0)\tau_{S,0}(W,0)-\Pi_0(0\mid W,1)\tau_{S,0}(W,1)]$ & Bias estimand \\
$\Psi^\#_{\mathcal{M}_{S,w}}(P_0)=E_0[\Pi_0(0\mid W,0)\tau_{S,\beta_0}(W,0)-\Pi_0(0\mid W,1)\tau_{S,\beta_0}(W,1)]$ & Bias projection estimand \\
\bottomrule
\end{tabular}
}
\caption{Notations and their descriptions.}
\label{tab:notations}
\end{table}

\subsection{Structural causal model}
We assume a structural causal model (SCM): $S=f_S(U_S)$; $W=f_W(S,U_W)$; $A=f_A(S,W,U_A)$; $Y=f_Y(S,W,A,U_Y)$, where $U=(U_S,U_W,U_A,U_Y)$ is a vector of exogenous errors. The joint distribution of $(U,O)$ is parametrized by a vector of functions $f=(f_S,f_W,f_A,f_Y)$ and the error distribution $P_U.$ The SCM ${\cal M}^F$ is defined by assumptions on these functions and the error distribution. Here ${\cal M}^F$ denotes the set of full data distributions $P_{U,O}$ that satisfy these assumptions on $f$ and $P_U$. The SCM allows us to define the potential outcomes $Y_1=f_Y(S,W,A=1,U_Y)$ and $Y_0=f_Y(S,W,A=0,U_Y)$ by intervening on the treatment node $A$.

\subsection{Statistical model}
We make the following three assumptions:
\begin{enumerate}[label=\textbf{A\arabic*}]
\item \label{A1} $(Y_0,Y_1)\perp A\mid S=1,W$ (randomization in the RCT); 
\item \label{A2} $0<P(A=1\mid S=1,W)<1, P_W\text{-a.e.}$ (positivity of treatment assignment in the RCT);
\item \label{A3} $P(S=1\mid W)>0,P_W\text{-a.e.}$ (positivity of RCT enrollment in the pooled population).
\end{enumerate}
Note that assumptions \ref{A1} and \ref{A2} are satisfied in an RCT (or a well-designed observational study without unmeasured confounding). Importantly, we do not make the assumption that $E(Y_a\mid S=0,W)=E(Y_a\mid W)$, sometimes referred to as the ``mean exchangeability over $S$'' \cite{dahabreh_generalizing_2019}. In other words, for the combined study, the treatment might not be conditionally independent of the potential outcomes given baseline covariates, due to the potential bias introduced by the external data, such as unmeasured confounding. Specifically, we are concerned that $E_WE(Y_1-Y_0\mid W)\not = E_WE(Y\mid W,A=1)-E_WE(Y\mid W,A=0)$, due to $E_WE(Y_1-Y_0\mid W,S=0)\not =E_WE(Y\mid S=0,W,A=1)-E_WE(Y\mid S=0,W,A=0)$. Beyond these three assumptions, we have additional knowledge on $P_0(A=1\mid S,W).$ In particular, if $S=1$ corresponds to an RCT, then $P_0(A=1\mid S=1,W)$ would just be the randomization probability; If the $S=1$-study is not an RCT, we might still be able to make a conditional independence assumption $P_0(A=1\mid S=1,W)=P_0(A=1\mid S=1,W=W_1)$ for some subset $W_1$ of $W$. In some cases, we might also be able to assume that $S$ is independent of $W$, i.e. $P_{W\mid S=1}=P_{W\mid S=0}.$ This independence could be achieved by sampling the external subjects from the same target population as in the $S=1$-study. As we will see later in the identification step, for our target parameter it is crucial that the support of $P_{W\mid S=0}$ is included in the support of $P_{W\mid S=1}$ so that assumption \ref{A3} holds. In this article we will not make assumptions on the joint distribution of $(S,W)$ beyond assumption \ref{A3}, but our results are not hard to generalize to the case that we assume $W\perp S$. In the important case that we augment an RCT with external controls only we have that $P_0(A=1\mid S=0,W)=0$, so that the only purpose of the $S=0$-study is to augment the control arm of the RCT. To briefly summarize, we only rely on assumptions known to hold for a well-designed RCT. The only additional assumption we require is \ref{A3} and could be made plausible in the selection of the external subjects.

Let ${\cal M}$ be the set of possible distributions $P$ of $O$ that satisfies the statistical assumptions \ref{A2} and \ref{A3}. Then, ${\cal M}=\{P_{P_{U,O}}:P_{U,O}\in {\cal M}^F\}$ is the model implied by the full data model ${\cal M}^F$. We can factorize the density of $O$ according to the time-ordering as follows:
$$
p(s,w,a,y)=p_S(s)p_W(w\mid s)p_A(a\mid s,w)p_Y(y\mid s,w,a).
$$ 
Our statistical model only leverages knowledge on $p_A$ and leaves the forms of other factors in the likelihood unspecified.

\subsection{Target causal parameters}
We could consider two candidate target parameters. The first one is the covariate-pooled ATE:
$$
\Psi^F(P_{O,U})=E_WE(Y_1-Y_0\mid S=1,W).
$$
Note that this parameter measures the conditional treatment effect in the RCT, while it takes an average with respect to the RCT and external RWD combined covariate distribution $p_W=p_{W\mid S=1}p_{S=1}+p_{W\mid S=0}p_{S=0}$. Alternatively, we could take the average with respect to the RCT-only covariate distribution $p_{W\mid S=1}$, in which case we have the RCT-only ATE target parameter:
$$
\Psi^F_2(P_{O,U})=E_W[E(Y_1-Y_0\mid S=1,W)\mid S=1].
$$
In the following subsections, we will discuss their identification results and compare their efficient influence functions.

\subsection{Statistical estimand}
Under assumptions \ref{A1}, \ref{A2} and \ref{A3}, the full-data target causal parameter $\Psi^F$ is identified by the following statistical target parameter $\Psi:{\cal M}\rightarrow\openr$ defined by
\[
\Psi(P_0)= E_0[E_0(Y\mid S=1,W,A=1)-E_0(Y\mid S=1,W,A=0)].\]
Note that this estimand is only well-defined if $\min_{a\in \{0,1\}}P(S=1,W=w,A=a)>0$ for $P_W$-a.e. So, if the $S=0$-study has a covariate distribution with a support not included in the support of $P_{W\mid S=1}$, then the estimand is not well-defined. This explains why assumption \ref{A3} is needed. As we will see in the next subsection, the efficient influence function of $\Psi$ indeed involves inverse weighting by $P(S=1\mid W)$. 
Similarly, $\Psi^F_2$ is identified by
\[
\Psi_2(P_0)= E_0\{[E_0(Y\mid S=1,W,A=1)-E_0(Y\mid S=1,W,A=0)]\mid S=1\}.\]
Thus, $\Psi^F(P_{U,O})=\Psi(P_{P_{U,O}})$ for any $P_{U,O}$ in our statistical model $\mathcal{M}.$ Note that the estimand $\Psi_2(P_0)$ only relies on $\min_{a\in \{0,1\}}P(S=1,W=w,A=a)>0$ for $P_{W\mid S=1}$-a.e, which thus always holds by assumption on the $S=1$-study.

We could construct an efficient estimator of $\Psi:{\cal M}\rightarrow\openr$. However, it may still lack power since it would only be using the $S=1$-observations for learning the conditional treatment effect, and similarly for $\Psi_2$. This point is further discussed in the next subsection by analyzing and comparing the nonparametric efficiency bounds of $\Psi$ and $\Psi_2.$ Therefore, we instead pursue a finite sample super-efficient estimator through adaptive-TMLE whose gain in efficiency is adapted to the complexity of the unknown (but learnable) $P_0$.

\subsection{Efficient influence functions of the target causal parameters}
For completeness, we will present canonical gradients $D_{\Psi,P}$ of $\Psi:{\cal M}\rightarrow\openr$ and $D_{\Psi_2,P}$ of $\Psi_2:{\cal M}\rightarrow\openr$, even though we will not utilize them in the construction of our A-TMLE.
\begin{lemma}\label{lem:gradient_psi_and_psi_2}
Consider a statistical model ${\cal M}$ for the distribution of $O$ only possibly making assumptions on $g_P(A\mid S,W)$. 
Let 
\begin{align*}
Q_P(S,W,A)&\equiv E_P(Y\mid S,W,A)\\
g_P(A\mid S,W)&\equiv P(A\mid S,W).
\end{align*}
The efficient influence function of $\Psi:{\cal M}\rightarrow\openr$ at $P$ is given by:
$$
D_{\Psi,P}(O)= Q_P(1,W,1)-Q_P(1,W,0)-\Psi(P)+\frac{S}{P(S=1\mid W)}\cdot\frac{2A-1}{g_P(A\mid 1,W)}(Y-Q_P(S,W,A)).
$$
This is also the canonical gradient in the statistical model that assumes additionally that $S$ is independent of $W$. The canonical gradient of $\Psi_2:{\cal M}\rightarrow\openr$ at $P$ is given by:
$$
D_{\Psi_2,P}(O)= \frac{S}{P(S=1)}(Q_P(1,W,1)-Q_P(1,W,0)-\Psi_2(P))+\frac{S}{P(S=1)}\cdot \frac{2A-1}{g_P(A\mid 1,W)}(Y-Q_P(S,W,A)).
$$
\end{lemma}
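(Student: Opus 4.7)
My plan is to derive both canonical gradients from first principles via pathwise differentiation along one-dimensional submodels, exploiting the orthogonal decomposition of the tangent space induced by the time-ordering factorization $p(s,w,a,y) = p_S(s)\,p_W(w\mid s)\,p_A(a\mid s,w)\,p_Y(y\mid s,w,a)$. A generic score decomposes as $s(O) = s_S(S) + s_W(W\mid S) + s_A(A\mid S,W) + s_Y(Y\mid S,W,A)$ with each piece conditionally mean zero, and the full tangent space is $T = T_S \oplus T_{W\mid S} \oplus T_{A\mid S,W} \oplus T_{Y\mid S,W,A}$, possibly with restrictions on $T_{A\mid S,W}$ encoding known structure of $g_P$. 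The strategy is to show that the proposed gradients lie in $T_S \oplus T_{W\mid S} \oplus T_{Y\mid S,W,A}$ and reproduce the pathwise derivative along every score in $T$; this simultaneously establishes canonicality and justifies the claim that the formulas do not depend on assumptions placed on $g_P$, since they are orthogonal to every subspace of $T_{A\mid S,W}$.

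For $\Psi$, I would split the pathwise derivative into an outer contribution $E[(Q_P(1,W,1)-Q_P(1,W,0))(s_S(S) + s_W(W\mid S))]$ arising from the marginal $p_W = \sum_s p_S(s)\,p_{W\mid S}(\cdot\mid s)$, and an inner contribution from differentiating $Q_{P_\epsilon}(1,W,a)$. Centering by $\Psi(P)$ in the outer piece is free because the scores are conditionally mean zero, producing the $W$-measurable term of $D_{\Psi,P}$. The inner piece evaluates to $E_W\{E[(Y-Q_P(1,W,1))\,s_Y\mid S=1,W,A=1] - E[(Y-Q_P(1,W,0))\,s_Y\mid S=1,W,A=0]\}$; multiplying and dividing by $P(S=1\mid W)\,g_P(A\mid 1,W)$ and applying iterated expectations rewrites this as $E\bigl[\tfrac{S}{P(S=1\mid W)}\cdot\tfrac{2A-1}{g_P(A\mid 1,W)}(Y-Q_P(S,W,A))\,s_Y(Y\mid S,W,A)\bigr]$, which is exactly the residual term in $D_{\Psi,P}$. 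The $Y$-residual piece has conditional mean zero given $(S,W,A)$, so it is orthogonal to $T_S$, $T_{W\mid S}$ and $T_{A\mid S,W}$; the $W$-measurable piece is orthogonal to $T_{A\mid S,W}$ and $T_{Y\mid S,W,A}$ by the same argument. Hence $D_{\Psi,P}$ lies in the desired subspace and is the canonical gradient regardless of the assumed structure on $g_P$. Under the additional assumption $S\perp W$, $T_{W\mid S}$ collapses to mean-zero functions of $W$ alone; since $Q_P(1,W,1)-Q_P(1,W,0)-\Psi(P)$ already has that form, the same formula remains in the new tangent space and continues to be canonical.

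For $\Psi_2$, I would write $\Psi_2(P) = N(P)/P(S=1)$ with $N(P) = E[S(Q_P(1,W,1)-Q_P(1,W,0))]$ and differentiate the quotient. The denominator derivative contributes $-\Psi_2(P)\cdot E[S\,s_S(S)]/P(S=1)$, which absorbs into the numerator expansion to produce the centered term $\tfrac{S}{P(S=1)}(Q_P(1,W,1)-Q_P(1,W,0)-\Psi_2(P))$ tested against $s_S + s_W$. The regression-differentiation piece is handled exactly as for $\Psi$, except that the outer expectation is taken with respect to $p_{W\mid S=1}$ rather than the marginal $p_W$, which replaces the weight $S/P(S=1\mid W)$ by $S/P(S=1)$. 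The same orthogonality checks then certify $D_{\Psi_2,P}$. The main obstacle I anticipate is the algebraic bookkeeping that converts the inner regression contributions $E_W\{E[(Y-Q_P)\,s_Y\mid S=1,W,A]\}$ into the clean inverse-probability-weighted form, keeping careful track of the sign convention $(2A-1)$ that compresses the $A=1$ and $A=0$ arms into a single expression; the remaining orthogonal-projection verification is routine.
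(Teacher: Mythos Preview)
Your argument is correct, and it proceeds by a genuinely different route than the paper's own proof. The paper first notes that $\Psi$ depends on $P$ only through $P_W$ and $Q_P$, so the tangent space of $g_P$ is orthogonal and the efficiency bound coincides with that in the fully nonparametric model. It then obtains the canonical gradient by computing the influence curve of the empirical plug-in estimator $\Psi(P_n)$ in the discrete-$W$ case, assembling it from the known influence curves of the cell-specific conditional means $E(Y\mid S=1,W=w,A=a)$ (namely $I(S=1,A=a,W=w)/p(1,w,a)\,(Y-Q(S,W,A))$) and of the empirical measure of $W$, and then appealing to a density argument. For the additional assumption $S\perp W$, the paper argues at the estimator level: the empirical measure of $W$ from the pooled sample remains efficient for $P_W$, so the influence curve is unchanged.

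By contrast, you work at the parameter level: you differentiate $\Psi(P_\epsilon)$ directly along score paths, decompose each score according to the likelihood factorization, and then verify both the gradient identity and membership in $T_S\oplus T_{W\mid S}\oplus T_{Y\mid S,W,A}$, which simultaneously yields canonicality and orthogonality to any sub-tangent-space of $T_{A\mid S,W}$. Your treatment of the $S\perp W$ case is likewise score-based rather than estimator-based. The paper's estimator-based route is shorter once one is willing to import the building-block influence curves, while your pathwise-derivative route is more self-contained and makes explicit the inverse-probability reweighting step that converts $E_W E[(Y-Q_P)s_Y\mid S=1,W,A=a]$ into the $\tfrac{S}{P(S=1\mid W)}\cdot\tfrac{2A-1}{g_P(A\mid 1,W)}$ form. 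Both lead to the same conclusions; neither has a gap.
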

The proof can be found in Appendix A. To compare the nonparametric efficiency bound of $\Psi(P_0)$ and $\Psi_2(P_0),$ let 
$$
\sigma^2_P(s,w,a)=E_P[(Y-Q_P(S,W,A))^2\mid S=s,W=w,A=a].
$$
We note that 
\begin{align*}
P D^2_{\Psi,P}(O)&=E_P(Q_P(1,W,1)-Q_P(1,W,0)-\Psi(P))^2\\
&+E_P\left[ \frac{1}{P(S=1\mid W)}\left(\frac{\sigma^2_P(1,W,1)}{g_P(1\mid 1,W)}+\frac{\sigma^2_P(1,W,0)}{g_P(0\mid 1,W)}\right)\right].
\end{align*}
Similarly,
\begin{align*}
PD^2_{\Psi_2,P}(O)&=E_P\left[\frac{P(S=1\mid W)}{P^2(S=1)}(Q_P(1,W,1)-Q_P(1,W,0)-\Psi_2(P))^2\right]\\
&+E_P\left[\frac{P(S=1\mid W)}{P^2(S=1)}\left(\frac{\sigma^2_P(1,W,1)}{g_P(1\mid 1,W)}+\frac{\sigma^2_P(1,W,0)}{g_P(0\mid 1,W)}\right)\right].
\end{align*}
Thus, the variance of the $W$-component of $D_{\Psi,P}$ is a factor of $\sim 1/P(S=1)$ smaller than the variance of the $W$-component of $D_{\Psi_2,P}$. However, the variance of the $Y$-component of $D_{\Psi,P}$ involves the factor $1/P(S=1\mid W)$ versus the factor $\sim 1/P(S=1)$ in $D_{\Psi_2,P,Y}$. Therefore, if $S$ is independent of $W$, then it follows that the variance of $D_{\Psi,P}$ is smaller than the variance of $D_{\Psi_2,P}$, due to a significantly smaller variance of its $W$-component, while having identical $Y$-components. On the other hand, if $S$ is highly dependent on $W$, then the inverse weighting $1/P(S=1\mid W)$ could easily cause the variance of the $Y$-component of $D_{\Psi,P}$ to be significantly larger than the variance of the $Y$-component of $D_{\Psi_2,P}$. Generally speaking, especially when $P(S=1\mid W)$ depends on $W$, the variance of the $Y$ component dominates the variance of the $W$ component. Therefore, without controlling the dependence of $S$ and $W$, one could easily have that the variance of $D_{\Psi,P}$ is larger than the variance of $D_{\Psi_2,P}$. If the goal of data integration is to maximize statistical efficiency and $\Psi$ is preferred over $\Psi_2$, it is important to sample external subjects such that $S$ is approximately independent of $W$. On the other hand, if the aim is to enhance the generalizability of trial findings, inducing dependence between $S$ and $W$ may be reasonable, as it aligns the covariate distribution in the pooled population more closely with that of the trial's intended target population. For example, if certain patient subgroups are underrepresented in the trial, increasing the likelihood of sampling these patients in the external cohort may be appropriate to enhance the interpretation of estimates. For this article, given that our data application prioritizes efficiency gains, we will focus our attention on the target causal parameter $\Psi(P)$, which, as we argued above, would be the preferred choice if $P(S=1\mid W)\approx P(S=1)$. However, our results can be easily generalized to $\Psi_2(P)$.

\section{Decomposition of the Target Estimand as a Difference Between the Pooled-ATE Estimand and a Bias Estimand}\label{sec:decomposition}
One could construct an A-TMLE directly for $\Psi(P_0)$. This requires data-adaptively learning a working model for the conditional average treatment effect (CATE) within the RCT and then constructing a TMLE for the projection estimand implied by this working model \cite{adml}. However, in our data augmentation problem context, such an A-TMLE would not utilize the external data for estimating the outcome regression, as the CATE function is conditioned on $S=1$, i.e., the RCT. This motivates us to consider the following alternative approach. First, we use A-TMLE to estimate a target estimand that prioritizes statistical power by fully utilizing the pooled data from the RCT and external RWD, while acknowledging that it may be biased. Next, we apply A-TMLE again to estimate and correct for this bias. In our setting, a potentially more efficient estimand to target is the pooled-ATE estimand, given by:
$$
\tilde{\Psi}(P_0)\equiv E_0[E_0(Y\mid A=1,W)-E_0(Y\mid A=0,W)].
$$
However, as we argued in Section \ref{sec:formulation}, making this estimand equal to $\Psi(P_0)$ would rely on identification assumptions that may not hold in many scenarios. Therefore, instead of assuming $\tilde{\Psi}=\Psi$, we define the bias estimand:
$$
\Psi^{\#}(P_0)\equiv \tilde{\Psi}(P_0)-\Psi(P_0).
$$
In other words, $\Psi^{\#}$ captures all the bias introduced by using $\tilde{\Psi}$ instead of $\Psi$ as the target. Then, it follows that we could write our target estimand as:
$$
\Psi(P_0)=\tilde{\Psi}(P_0)-\Psi^{\#}(P_0).
$$
Now, we can view our target estimand as applying a bias correction $\Psi^{\#}$ to the (potentially biased) pooled-ATE estimand $\tilde{\Psi}$. In order to apply the type of A-TMLE as presented in \cite{adml}, one needs to parameterize the target estimand as an average of conditional effects. Note that, $\tilde{\Psi}(P)$ is already an average of the conditional treatment effect, one could follow the approach outlined in \cite{adml} to construct an A-TMLE for it. We now turn our attention to the bias estimand. Lemma \ref{lem:decomposition} expresses $\Psi^{\#}$ in terms of the conditional average RCT-enrollment effect and the RCT enrollment mechanism.

\begin{lemma}\label{lem:decomposition}
Let $\tau_{S,P}(W,A)$ and $\Pi_P(s\mid W,A)$ be the conditional average RCT-enrollment effect and the trial enrollment mechanism, where
\begin{align*}
\tau_{S,P}(W,A)&\equiv E_P(Y\mid S=1,W,A)-E_P(Y\mid S=0,W,A)\\
\Pi_P(s\mid W,A)&\equiv P(S=s\mid W,A).
\end{align*}
We have 
$$
\Psi^{\#}(P)=E_P[\Pi_P(0\mid W,0) \tau_{S,P}(W,0)-\Pi_P(0\mid W,1) \tau_{S,P}(W,1)].
$$
For the special case that $P(S=0\mid W,A=1)=0$ (i.e., external data has only a control arm, no treatment arm). Then, the bias parameter becomes
\[
\Psi^{\#}(P)=E_P \Pi_P(0\mid W,0)\tau_{S,P}(W,0).\]
\end{lemma}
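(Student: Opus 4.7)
The plan is to expose $\Psi^{\#}(P) = \tilde{\Psi}(P) - \Psi(P)$ by re-expressing the marginalized outcome regression $\bar Q_P(W,A) = E_P(Y\mid W,A)$ in terms of the stratum-specific regression $Q_P(s,W,A) = E_P(Y\mid S=s,W,A)$ and the enrollment mechanism $\Pi_P(s\mid W,A)$. First I would apply the tower property conditioning on $S$ to write
\[
\bar Q_P(W,A) = \Pi_P(1\mid W,A)\,Q_P(1,W,A) + \Pi_P(0\mid W,A)\,Q_P(0,W,A).
\]
Then, using $Q_P(0,W,A) = Q_P(1,W,A) - \tau_{S,P}(W,A)$ and $\Pi_P(1\mid W,A) + \Pi_P(0\mid W,A)=1$, this simplifies to
\[
\bar Q_P(W,A) = Q_P(1,W,A) - \Pi_P(0\mid W,A)\,\tau_{S,P}(W,A).
\]

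Next I would substitute this identity into $\tilde{\Psi}(P) = E_P[\bar Q_P(W,1) - \bar Q_P(W,0)]$ to obtain
\[
\tilde{\Psi}(P) = E_P\bigl[Q_P(1,W,1) - Q_P(1,W,0)\bigr] + E_P\bigl[\Pi_P(0\mid W,0)\tau_{S,P}(W,0) - \Pi_P(0\mid W,1)\tau_{S,P}(W,1)\bigr].
\]
The first term on the right is precisely $\Psi(P)$, so subtracting yields the claimed expression for $\Psi^{\#}(P) = \tilde{\Psi}(P) - \Psi(P)$. The special case is then immediate: if $P(S=0\mid W,A=1)=0$, then $\Pi_P(0\mid W,1)=0$ almost surely and only the $A=0$ term survives.

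There is no real obstacle here beyond careful bookkeeping; the only step that requires a moment of care is noticing that the nontrivial piece comes entirely from rewriting $\bar Q_P$ via the tower property, since $\Psi(P)$ is already stated in terms of $Q_P(1,\cdot,\cdot)$ while $\tilde{\Psi}(P)$ is stated in terms of $\bar Q_P$. Once the identity $\bar Q_P(W,A) = Q_P(1,W,A) - \Pi_P(0\mid W,A)\tau_{S,P}(W,A)$ is in hand, the rest is bookkeeping and one takes expectations to finish. No integrability conditions beyond those already implicit in the definitions of $\Psi$ and $\tilde{\Psi}$ are needed, and assumption \ref{A3} is only used implicitly to ensure that $Q_P(1,W,A)$ and hence $\tau_{S,P}(W,A)$ are well defined on the support of $P_W$.
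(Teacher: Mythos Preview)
Your proposal is correct and takes essentially the same approach as the paper: both use the tower property to express $\bar Q_P(W,A)$ in terms of $Q_P(1,W,A)$ and $\Pi_P(0\mid W,A)\tau_{S,P}(W,A)$, then substitute into $\tilde\Psi(P)-\Psi(P)$. The only cosmetic difference is that the paper carries out the calculation separately for $A=1$ and $A=0$ before combining, whereas you derive the single identity $\bar Q_P(W,A) = Q_P(1,W,A) - \Pi_P(0\mid W,A)\tau_{S,P}(W,A)$ first and then specialize.
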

\begin{proof}
Note that 
\begin{align*}
E_P(Y\mid W,A=1)&=E_P(Y\mid S=1,W,A=1)P(S=1\mid W,A=1)\\
&+E_P(Y\mid S=0,W,A=1)P(S=0\mid W,A=1),
\end{align*}
so that
\begin{align*}
&E_P(Y\mid W,A=1)-E_P(Y\mid S=1,W,A=1)\\
&=-P(S=0\mid W,A=1)[E_P(Y\mid S=1,W,A=1)-E_P(Y\mid S=0,W,A=1)]
\end{align*}
Thus, $E_WE(Y\mid W,A=1)-E_WE(Y\mid S=1,W,A=1)=-E_W \Pi_P(0\mid W,1)\tau_{S,P}(W,1)$.
Similarly,
\begin{align*}
&E_P(Y\mid W,A=0)-E_P(Y\mid S=1,W,A=0)\\
&=-P(S=0\mid W,A=0)[E_P(Y\mid S=1,W,A=0)-E_P(Y\mid S=0,W,A=0)].
\end{align*}
Thus, $E_WE(Y\mid W,A=0)-E_WE(Y\mid S=1,W,A=0)=-E_W\Pi_P(0\mid W,0)\tau_{S,P}(W,0)$.
Therefore, $\Psi^{\#}(P)=\tilde{\Psi}(P)-\Psi(P)=E_W[\Pi_P(0\mid W,0)\tau_{S,P}(W,0)-\Pi_P(0\mid W,1)\tau_{S,P}(W,1)]$.
\end{proof}
Now, we have parameterized the bias estimand as a weighted combination of two conditional effects, $\tau_{S,P}(W,0)$ and $\tau_{S,P}(W,1)$, where the weights are the probabilities of enrolling in the RCT of the two arms. This estimand resembles the ATE estimand and thereby enables a similar type of A-TMLE as for $\tilde{\Psi}$ with the role of CATE replaced by $\tau_{S,P}$.

Under the decomposition of the target estimand, the amount of efficiency gain from A-TMLE is driven by the complexities of $E(Y\mid W,A=1)-E(Y\mid W,A=0)$, the CATE function in the pooled data, and the bias function $E(Y\mid S=1,W,A)-E(Y\mid S=0,W,A)$. If reasonable efforts are made to ensure that the bias $\Psi^\#$ is small, a simple function might suffice to approximate the true (potentially fully nonparametric) bias function well, thereby offering efficiency gains. On the other hand, if the bias is large in magnitude but can be well-approximated by a parsimonious model, such as an intercept-only model, then we would also expect gains in efficiency. Intuitively, the decomposition strategy can be viewed as a general way of leveraging A-TMLE to achieve efficiency gains. The process begins by identifying an ``easy-to-estimate'' target, which, while potentially biased, is straightforward to estimate efficiently using all the available data. In our context, this would naturally be the pooled-ATE estimand, which can be estimated using the pooled data. Next, we parameterize the bias in terms of conditional effects and use A-TMLE to estimate the bias. The choice of the easy-to-estimate target should be guided by the specific problem setting and the nature of the resulting bias function.

\section{Ingredients for Constructing an A-TMLE for the Pooled-ATE Estimand: Working Model, Canonical Gradient and TMLE}\label{sec:psi_tilde}
In this section, we outline the key components and steps to construct an A-TMLE for the pooled-ATE estimand, $\tilde{\Psi}(P_0)$. Since the pooled-ATE is the ATE on the pooled sample (RCT and external data combined), the estimation procedure mirrors that of the ATE example presented in \cite{adml}. However, for completeness, we present it here within the context of our problem.

\subsection{Semiparametric regression working model for the conditional average treatment effect}
Given the fact that $\bar{Q}_0(W,A)\equiv E_0(Y\mid W,A)=\bar{Q}_0(W,0)+A\tau_{A,0}(W)$, a working model $\mathcal{T}_{A,w}$ for $\tau_{A,0}$ corresponds to a semiparametric regression working model $\bar{{\cal Q}}_w=\{\theta+A\tau_{A,\beta}:\beta,\theta\}$ for $\bar{Q}_0$, where $\theta$ is an unspecified function of $W$. This working model $\bar{{\cal Q}}_{w}$ further implies a corresponding working model ${\cal M}_{A,w}=\{P\in {\cal M}: \bar{Q}_P\in \bar{{\cal Q}}_{w} \}\subset {\cal M}$ for the observed data distribution $P_0$. Let $\tau_{A,\beta_P}\in {\cal T}_{A,w}$ be a projection of $\tau_{A,P}$ onto the working model ${\cal T}_{A,w}$ with respect to a specified loss function. This generally corresponds to mapping a $P\in {\cal M}$ into a projection $\Pi_{\mathcal{M}_{A,w}}(P)\in {\cal M}_{A,w}$. We recommend using the squared-error projection of $\bar{Q}_P$ on $\bar{\mathcal{Q}}_{w}$, defined as:
$$
\bar{Q}_{w,P}=\arg\min_{\bar{Q}\in \bar{{\cal Q}}_{w}}P\left(\bar{Q}_P-\bar{Q}\right)^2.
$$
Lemma \ref{lem:psi_tilde_loss} provides some insights and justifications on using this projection.
\begin{lemma}\label{lem:psi_tilde_loss}
Let $\bar{\mathcal{Q}}_w=\{\theta+A\tau_{A,\beta}:\theta,\beta\}$ be the working model for the outcome regression $\bar{Q}_P(W,A)=E_P(Y\mid W,A)$, where $\theta$ is some unspecified function of $W$ and $\tau_{A,\beta}(W)=\sum_{j}\beta(j)\phi_j(W)$, a linear combination of basis functions. Let $\mathcal{T}_{A,w}=\{\sum_j\beta(j)\phi_j(W):\beta\}$ be the working model for the CATE function $\tau_{A,P}(W)=E_P(Y\mid W,A=1)-E_P(Y\mid W,A=0)$. Let $\bar{Q}_{w,P}=\arg\min_{\bar{Q}\in \bar{{\cal Q}}_w}P(\bar{Q}_P-\bar{Q})^2$
be the squared-error projection of $\bar{Q}_P$ onto the working model $\bar{\mathcal{Q}}_w$. We have that $\bar{Q}_{w,P}=E_P(Y\mid W)+(A-g_P(1\mid W))\sum_j\beta_{P}(j)\phi_j(W)$, where
\[
\beta_P=\arg\min_{\beta}E_Pg_P(1-g_P)(1\mid W)\left(\tau_{A,P}(W)-\sum_j \beta(j)\phi_j(W)\right)^2.\]
Thus, $\sum_j\beta_{P}(j)\phi_j$ represents the projection of $\tau_{A,P}$ onto the working model $\mathcal{T}_{A,w}=\{\sum_j \beta(j)\phi_j:\beta\}$ with respect to a weighted $L^2$-norm using weights $g_P(1-g_P)(1\mid W)$.
We also have 
\[
\beta_P=\arg\min_{\beta}E_P\left(\bar{Q}_P-(A-g_P(1\mid W))\sum_j \beta(j)\phi_j(W)\right)^2, \]
and thereby 
\[
\beta_P=\arg\min_{\beta}E_P\left(Y-(A-g_P(1\mid W))\sum_j\beta(j)\phi_j(W)\right)^2.\]
Here, one could replace $Y$ by $Y-E_P(Y\mid W)$. This also shows that
\[
\tau_{A,P}=\arg\min_{\tau\in\mathcal{T}_{A,w}}E_P[Y-E_P(Y\mid W)-(A-g_P(1\mid W))\tau]^2.\]
Therefore, we can use the following loss function for learning $\tau_{A,P}$:
\[
L_{\theta_P,g_P}(\tau)=[Y-\theta_P(W)-(A-g_P(1\mid W))\tau]^2,\]
where this loss function is indexed by the nuisance parameters $\theta_P(W)=E_P(Y\mid W)$ and $g_P(1\mid W)=P(A=1\mid W)$. This loss function is double robust with respect to misspecification of $\theta_P$ and $g_P$ in the sense that $\tau_{A,P}=\arg\min_{\tau\in\mathcal{T}_{A,w}}E_PL_{\theta,g}(\tau)$ if either $g=g_P$ or $\theta=\theta_P$.
\end{lemma}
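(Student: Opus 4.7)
The plan is to exploit the identity that, since $A \in \{0,1\}$,
\[
\bar{Q}_P(W,A) - E_P(Y\mid W) = (A - g_P(1\mid W))\, \tau_{A,P}(W),
\]
which follows by evaluating the left-hand side at $A=1$ and $A=0$ and using the decomposition $E_P(Y\mid W) = g_P(1\mid W)E_P(Y\mid W,A=1) + (1-g_P(1\mid W))E_P(Y\mid W,A=0)$. This single identity drives every claim.

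To derive the projection form, I would profile out $\theta$ in the squared-error objective $E_P(\bar{Q}_P - \theta(W) - A\tau_{A,\beta}(W))^2$ by conditioning on $W$, which yields the stationarity condition $\theta(W) = E_P(Y\mid W) - g_P(1\mid W)\tau_{A,\beta}(W)$. Substituting back produces the asserted form $\bar{Q}_{w,P} = E_P(Y\mid W) + (A - g_P(1\mid W))\tau_{A,\beta_P}(W)$, and reduces the $\beta$-optimization to minimizing $E_P(\bar{Q}_P - E_P(Y\mid W) - (A - g_P(1\mid W))\tau_{A,\beta})^2$. Applying the key identity converts this into $E_P[(A - g_P(1\mid W))^2(\tau_{A,P} - \tau_{A,\beta})^2]$; conditioning on $W$ pulls out the weight $g_P(1\mid W)(1-g_P(1\mid W))$, establishing the weighted $L^2$ characterization of $\beta_P$.

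To drop the $E_P(Y\mid W)$ term from the criterion, I would expand $E_P(\bar{Q}_P - (A - g_P(1\mid W))\tau_{A,\beta})^2$ and show that all pieces not matching the previous display are $\beta$-free, using the orthogonality $E_P[E_P(Y\mid W)(A - g_P(1\mid W))\tau_{A,\beta}(W)] = 0$ via the tower property applied at the $W$-level. To then replace $\bar{Q}_P$ by $Y$, I would invoke the bias--variance decomposition $E_P(Y - f(W,A))^2 = E_P\,\Var_P(Y\mid W,A) + E_P(\bar{Q}_P - f(W,A))^2$ with $f(W,A) = (A - g_P(1\mid W))\tau_{A,\beta}(W)$; the conditional variance is constant in $\beta$, so the argmin is preserved. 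The version with $Y - E_P(Y\mid W)$ and the variational characterization of $\tau_{A,P}$ over $\mathcal{T}_{A,w}$ follow in parallel.

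Finally, for the double robustness of $L_{\theta,g}(\tau) = (Y - \theta(W) - (A - g(1\mid W))\tau(W))^2$: when $\theta = \theta_P$ the claim is the previous display. When $g = g_P$ but $\theta$ is arbitrary, I would view $E_P[L_{\theta,g_P}(\tau)\mid W]$ as a quadratic in the scalar $\tau(W)$. Its quadratic coefficient $E_P[(A - g_P(1\mid W))^2 \mid W] = g_P(1\mid W)(1 - g_P(1\mid W))$ is $\theta$-free, and its linear coefficient $-2E_P[(Y - \theta(W))(A - g_P(1\mid W))\mid W]$ has its $\theta$-part vanish by $E_P[A - g_P(1\mid W)\mid W]=0$, while a short calculation shows the remainder equals $-2g_P(1\mid W)(1-g_P(1\mid W))\tau_{A,P}(W)$. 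Hence the pointwise minimizer is $\tau_{A,P}(W)$ regardless of $\theta$, giving double robustness; over the working model $\mathcal{T}_{A,w}$ one recovers the weighted $L^2$-projection $\tau_{A,\beta_P}$ identified earlier. The main obstacle is simply bookkeeping: tracking which expectations are conditional on $W$ versus $(W,A)$ and invoking the orthogonality of the residual $A - g_P(1\mid W)$ to functions of $W$ at the right moments.
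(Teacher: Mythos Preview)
Your proposal is correct and follows essentially the same route as the paper. Both arguments hinge on the orthogonal decomposition of the working model $\bar{\mathcal{Q}}_w$ into $L^2(P_W)$ plus the span of $\{(A-g_P(1\mid W))\phi_j(W):j\}$, and both arrive at the weighted $L^2$ characterization via $E_P[(A-g_P(1\mid W))^2\mid W]=g_P(1-g_P)(1\mid W)$. The only cosmetic difference is that you profile out $\theta$ and invoke the identity $\bar{Q}_P-E_P(Y\mid W)=(A-g_P)\tau_{A,P}$ directly, whereas the paper writes $\bar{Q}_P=E_P(Y\mid W,A=0)+A\tau_{A,P}$ and splits $A\tau_{A,P}=(A-g_P)\tau_{A,P}+g_P\tau_{A,P}$ before projecting; these are the same orthogonality statement in two guises. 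Your write-up is in fact more complete than the paper's: the paper's proof stops at the weighted-$L^2$ characterization of $\beta_P$, while you also sketch the equivalence with the unweighted criteria in $\bar{Q}_P$ and $Y$ and the double-robustness claim, all of which the lemma asserts but the paper leaves to the reader.
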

The proof is provided in Appendix A. Interestingly, according to Lemma \ref{lem:psi_tilde_loss}, the squared-error projection of the outcome regression $\bar{Q}_P$ onto the semi-parametric regression working model $
\bar{\mathcal{Q}}_w$ corresponds to a weighted squared-error projection of the CATE function $\tau_{A,P}$ onto the corresponding working model $\mathcal
{T}_{A,w}$, where the weights are $g_P(1-g_P)(1\mid W)$ \cite{adml}. As demonstrated in the efficient influence function derived in the following subsection, these weights contribute to stabilizing the variance estimate, explaining why we recommended this particular projection.

Since the true CATE function is unknown, Lemma \ref{lem:psi_tilde_loss} also provides a valid and practical loss function for learning CATE. This loss function is commonly referred to as the $R$-loss \cite{nie_quasi_2021}. We could  fit a relaxed highly adaptive lasso (relaxed-HAL) paired with this loss function with nuisance estimators $\theta_n$ for $\theta_0$ and $g_n$ for $g_0$ \cite{vdl_efficient_2023}. Specifically, given a rich linear model $\{\tau_{A,\beta}=\sum_j\beta(j)\phi_j:\beta\}$ with a large set of 0th-order HAL spline basis functions $\{\phi_j:j\}$, we compute the lasso-estimator
\[
\beta_n=\arg\min_{\beta,\pl\beta\pl_1\leq C_n}\sum_{i=1}^n \left(Y_i-\theta_n(W_i)-(A_i-g_n(1\mid W_i))\sum_j\beta(j)\phi_j(W_i)\right)^2,\]
where $C_n$ is an upper bound on the sectional variation norm that equals the $L_1$-norm of the $\beta$-coefficients \cite{benkeser_hal_2016}. One may also use higher-order spline HAL-MLEs as analyzed and presented in detail in \cite{vdl_higher_2023}. The working model for $\tau_{S,P}$ is then given by ${\cal T}_{A,w}=\{\sum_{j,\beta_{n,j}\not =0}\beta(j)\phi_j:\beta\}$, i.e. all linear combinations of the basis functions with non-zero coefficients. Finally, we use ordinary least squares (OLS) to obtain a maximum likelihood estimator (MLE) $\beta_n^\star$ for $\beta$ within the learned working model.

\subsection{Canonical gradient of $\tilde{\Psi}_{\mathcal{M}_{A,w}}$ at $P$}
The working model $\mathcal{T}_{A,w}$ implies a semiparametric working model $\mathcal{M}_{A,w}$ for $P_0$. Let $\tilde{\Psi}_{\mathcal{M}_{A,w}}(P_0)\equiv\tilde{\Psi}(\Pi_{\mathcal{M}_{A,w}}(P_0))$ be the corresponding projection estimand, where $\Pi_{\mathcal{M}_{A,w}}(P_0)$ is the projection of $P_0$ onto the working model $\mathcal{M}_{A,w}$. The next step of A-TMLE involves deriving the canonical gradient (efficient influence function) of the projection parameter. First, we find the canonical gradient of the $\beta_P$-component.
\begin{lemma}\label{lem:psi_tilde_beta_gradient}
Let $\beta_P$ be defined as in Lemma \ref{lem:psi_tilde_loss} on a nonparametric model. 
The canonical gradient of $\beta$ at $P$ is given by
\[
D_{\beta,P}^r=I_P^{-1}(A-g_P(1\mid W)){\bf \phi}(W)\left(Y-\theta_P(W)-(A-g_P(1\mid W))\sum_j\beta_{P}(j)\phi_j(W)\right),\]
where $I_P=E_P g_P(1-g_P)(1\mid W){\bf \phi}{\bf \phi}^{\top}(W)$. 
By Lemma \ref{lem:psi_tilde_loss}, this can also be written as 
\[
D_{\beta,P}^r=I_P^{-1}(A-g_P(1\mid W)){\bf \phi}(W)(Y-\bar{Q}_{w,P}(W,A)).\]
\end{lemma}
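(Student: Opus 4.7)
The plan is to view $\beta_P$ as an M-estimand defined through a Neyman-orthogonal score equation and to derive its canonical gradient by differentiating the score equation along a one-dimensional submodel. Concretely, I would start from the second characterization of $\beta_P$ in Lemma \ref{lem:psi_tilde_loss},
$$\beta_P=\arg\min_{\beta} E_P\left(Y-\theta_P(W)-(A-g_P(1\mid W))\phi^\top(W)\beta\right)^2,$$
and write its first-order optimality condition as $E_P m(\beta_P,O;P)=0$, where
$$m(\beta,O;P)=(A-g_P(1\mid W))\phi(W)\left(Y-\theta_P(W)-(A-g_P(1\mid W))\phi^\top(W)\beta\right).$$

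Next, I would consider an arbitrary one-dimensional regular submodel $\{P_\epsilon\}$ through $P$ with score $S(O)$ at $\epsilon=0$, and differentiate the identity $E_{P_\epsilon}m(\beta_{P_\epsilon},O;P_\epsilon)=0$ at $\epsilon=0$. This produces three contributions: a measure term $E_P[m(\beta_P,O;P)S(O)]$; a $\beta$-derivative term $E_P[\nabla_\beta m]\cdot\frac{d\beta_{P_\epsilon}}{d\epsilon}\big|_0=-I_P\cdot\frac{d\beta_{P_\epsilon}}{d\epsilon}\big|_0$, where the identity $E_P[(A-g_P(1\mid W))^2\mid W]=g_P(1-g_P)(1\mid W)$ produces the Gram matrix $I_P$; and nuisance contributions from perturbations of $\theta_P$ and $g_P$. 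The crux is Neyman orthogonality, which eliminates the nuisance contributions. Orthogonality in $\theta$ is immediate: differentiating $m$ in $\theta$ along direction $\delta$ gives $-(A-g_P(1\mid W))\phi(W)\delta(W)$, whose expectation vanishes by $E_P[A-g_P(1\mid W)\mid W]=0$. Orthogonality in $g$ splits into two pieces: the piece from differentiating the outer factor $(A-g)$ produces an expression proportional to the residual, which has conditional mean zero given $W$ because $\theta_P(W)=E_P(Y\mid W)$; the piece from differentiating the $(A-g)$ appearing inside the residual produces an expression proportional to $(A-g_P(1\mid W))$, which again has conditional mean zero given $W$.

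With the nuisance contributions eliminated, solving for the pathwise derivative yields $\frac{d\beta_{P_\epsilon}}{d\epsilon}\big|_0=E_P[I_P^{-1}m(\beta_P,O;P)S(O)]$ for an arbitrary score $S$. Since $I_P^{-1}m(\beta_P,O;P)$ lies in $L^2_0(P)$ (its expectation is zero by the first-order condition for $\beta_P$), it is the canonical gradient of $\beta$ on the nonparametric model, which is precisely the displayed formula for $D_{\beta,P}^r$. The equivalent form in terms of $\bar Q_{w,P}$ follows by substituting the identity $\bar Q_{w,P}(W,A)=\theta_P(W)+(A-g_P(1\mid W))\phi^\top(W)\beta_P$ established in Lemma \ref{lem:psi_tilde_loss}.

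The main technical obstacle I anticipate is the clean verification of Neyman orthogonality in $g$, because $g_P$ enters $m$ in two places (the outer weight and the residual); both resulting Gateaux-derivative pieces must be expanded and shown to vanish, and the cancellation relies on combining the zero-conditional-mean property of $A-g_P(1\mid W)$ with the identity $\theta_P(W)=E_P(Y\mid W)$. The remaining bookkeeping—computing $E_P[\nabla_\beta m]=-I_P$ via the variance identity for the Bernoulli variable $A\mid W$, and checking that the arbitrary score $S$ can be cancelled by a Riesz-representation argument on the nonparametric tangent space—is routine.
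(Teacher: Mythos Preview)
Your argument is correct and reaches the same conclusion as the paper, but the route is genuinely different. The paper starts from the non-orthogonal characterization $\beta_P=\arg\min_\beta E_P\bigl(Y-(A-g_P(1\mid W))\phi^\top(W)\beta\bigr)^2$ (no $\theta_P$ in the score), applies the implicit function theorem, and then explicitly computes the contribution $D_{2,P}$ arising from the dependence of $g_P$ on $P$; this $g$-contribution does \emph{not} vanish and is precisely what produces the $-\theta_P(W)$ correction in the final gradient, so that $D_{1,P}+D_{2,P}$ collapses to the displayed formula. You instead start from the Neyman-orthogonal (R-learner) score with $\theta_P$ already subtracted, so that the $\theta$- and $g$-nuisance derivatives both vanish at $P$ and the canonical gradient is read off directly as $I_P^{-1}m$. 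Your approach is shorter and makes transparent why the R-loss is the natural estimating equation here; the paper's approach is more from first principles and shows explicitly how the $\theta_P$ term is generated by the $g$-pathway rather than assumed in the score. Your handling of the two $g$-derivative pieces is exactly right: the outer-factor piece vanishes because $E_P[Y-\theta_P(W)\mid W]=0$ and $E_P[A-g_P(1\mid W)\mid W]=0$, and the inner-residual piece vanishes by the latter identity alone.
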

The proof can be found in Appendix A. The next lemma presents the canonical gradient of the projection parameter $\tilde{\Psi}_{\mathcal{M}_{A,w}}$ at $P$.
\begin{lemma}\label{lem:psi_tilde_gradient}
Let the projection parameter $\tilde{\Psi}_{{\cal M}_{A,w}}(P)=E_P [\tau_{A,\beta_P}(W)]=E_P (\bar{Q}_{w,P}(W,1)-\bar{Q}_{w,P}(W,0))$, while the nonparametrically defined parameter $\tilde{\Psi}(P)=E_P [\tau_{A,P}(W)]=E_P (\bar{Q}_P(W,1)-\bar{Q}_P(W,0))$, where $\beta_P=\arg\min_{\beta}E_P g_P(1-g_P)(1\mid W)(\tau_{A,P}(W)-\tau_{A,\beta}(W))^2$, and $\tau_{A,\beta}(W)=\sum_j \beta(j)\phi_j(W)$. 
We have that the canonical gradient of $\tilde{\Psi}_{{\cal M}_{A,w}}$ at $P$ is given by:
\[
D_{{\cal M}_{A,w},\tilde{\Psi},P}=\tau_{A,\beta_P}(W)-E_P [\tau_{A,\beta_P}(W)]+\sum_j D^r_{\beta,P,j}E_P \phi_j(W),\]
where
\[
D_{\beta,P}^r=I_P^{-1}(A-g_P(1\mid W)){\bf \phi}(W)(Y-\bar{Q}_{w,P}(W,A)),\]
with $D_{\beta,P,j}^r$ denoting the $j$-th component of $D_{\beta,P}^r$.
\end{lemma}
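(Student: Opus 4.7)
The plan is to exploit the explicit decomposition
\[
\tilde{\Psi}_{\mathcal{M}_{A,w}}(P) = E_P[\tau_{A,\beta_P}(W)] = \sum_j \beta_P(j)\, E_P[\phi_j(W)],
\]
so that $\tilde{\Psi}_{\mathcal{M}_{A,w}}$ is a linear combination of products of two simpler functionals on $\mathcal{M}$: the coefficient $\beta_P(j)$, whose canonical gradient is already provided by Lemma \ref{lem:psi_tilde_beta_gradient}, and the marginal moment $E_P[\phi_j(W)]$, whose canonical gradient in the (essentially) nonparametric model is the centered score $\phi_j(W)-E_P[\phi_j(W)]$. Once written in this form, the derivation becomes an application of the product rule for pathwise derivatives.

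Concretely, I would pick a regular one-dimensional submodel $\{P_\epsilon:\epsilon\}\subset\mathcal{M}$ through $P=P_0$ with score $s\in L^2_0(P)$ in the tangent space, and compute
\[
\left.\frac{d}{d\epsilon}\tilde{\Psi}_{\mathcal{M}_{A,w}}(P_\epsilon)\right|_{\epsilon=0}
= \sum_j \left\{ \left.\frac{d\beta_{P_\epsilon}(j)}{d\epsilon}\right|_{\epsilon=0} E_P[\phi_j(W)] + \beta_P(j)\left.\frac{d E_{P_\epsilon}[\phi_j(W)]}{d\epsilon}\right|_{\epsilon=0}\right\}.
\]
By Lemma \ref{lem:psi_tilde_beta_gradient}, the first derivative equals $E_P[D^r_{\beta,P,j}\, s]$, and by standard nonparametric calculations the second equals $E_P[(\phi_j(W)-E_P\phi_j(W))\, s]$. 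Summing over $j$ and using linearity yields
\[
\left.\frac{d}{d\epsilon}\tilde{\Psi}_{\mathcal{M}_{A,w}}(P_\epsilon)\right|_{\epsilon=0}
= E_P\!\left[\left(\tau_{A,\beta_P}(W)-E_P[\tau_{A,\beta_P}(W)]+\sum_j E_P[\phi_j(W)]\, D^r_{\beta,P,j}\right) s\right],
\]
which identifies the expression inside the inner product as a gradient of $\tilde{\Psi}_{\mathcal{M}_{A,w}}$ at $P$.

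To promote this gradient to the canonical gradient, I would verify that it lies in the tangent space $T(P)$ of $\mathcal{M}$. The first two terms are functions of $W$ alone, hence elements of the marginal-$W$ tangent subspace (since $\mathcal{M}$ puts no restriction on $P_W$), and each $D^r_{\beta,P,j}$ is, by Lemma \ref{lem:psi_tilde_beta_gradient}, a conditional-mean-zero function of $(W,A,Y)$ built from the $Y$-residual $Y-\bar{Q}_{w,P}(W,A)$ times a function of $(W,A)$; since $\mathcal{M}$ leaves $p_Y(\cdot\mid S,W,A)$ unrestricted, this residual-type score belongs to the $Y$-component of the tangent space. Thus the proposed gradient already lies in $T(P)$, so it must equal the canonical gradient.

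The main subtlety I anticipate is handling the pathwise differentiability of $\beta_P$ cleanly—one must be sure that the $Z$-estimator definition of $\beta_P$ as the minimizer of the weighted squared-error criterion yields the gradient stated in Lemma \ref{lem:psi_tilde_beta_gradient}—but since that computation is delegated to the prior lemma, the present proof reduces to the product-rule calculation above and a tangent-space check. I would also remark, as in the statement, that by Lemma \ref{lem:psi_tilde_loss} one may rewrite $D^r_{\beta,P,j}$ equivalently with $Y-\bar{Q}_{w,P}(W,A)$ in place of the longer residual, which is the form most convenient for the subsequent TMLE construction.
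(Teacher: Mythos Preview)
Your proposal is correct and matches the paper's approach. The paper does not include an explicit proof of this lemma, but the proof of the analogous Lemma~\ref{lem:psi_pound_gradient} for $\Psi^{\#}_{\mathcal{M}_{S,w}}$ proceeds exactly as you outline: decompose the parameter by the likelihood factors on which it depends (here $P_W$ and, through $\beta_P$, the outcome conditional), apply the delta method/product rule to each piece using Lemma~\ref{lem:psi_tilde_beta_gradient} for the $\beta$-part and the centered-score identity for the $P_W$-part, and sum. Your tangent-space check is also the right finishing step; note that you can shortcut it by observing that $D^r_{\beta,P}$ is already the canonical gradient of $\beta_P$ (hence lies in $T(P)$), so any fixed linear combination of its components remains in $T(P)$.
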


\subsection{TMLE of the projection parameter $\tilde{\Psi}_{{\cal M}_{A,w}}$}
To construct a TMLE for the pooled-ATE projection parameter, we first obtain nuisance estimators $g_n$ and $\theta_n$ of $g_0$ and $\theta_0$, respectively. If we use a relaxed-HAL for learning the working model for the conditional effect of $A$ on $Y$, then we have $\beta_n^\star=\arg\min_{\beta}P_n L_{\theta_n,g_n}(\tau_{A,\beta})$, which solves the score equation for $\beta$, $P_n D^r_{\beta,\theta_n,g_n,\beta_n^\star}=0$. We estimate the marginal distribution of $W$, i.e., $P_{W,0}$, with the empirical measure of $W$ given by $P_{W,n}$. We have then solved $P_n D_{{\cal M}_{A,w},\tilde{\Psi},\beta_n^\star,\theta_n,g_n,P_{W,n}}=0$. The TMLE is now the plug-in estimator $\tilde{\Psi}_{\mathcal{M}_{A,w}}(P_n^\star)\equiv \tilde{\Psi}_{{\cal M}_{A,w}}(P_{W,n},\beta_n^\star)$. We refer interested readers to Section \ref{sec:implementation} for a more detailed description on the implementation.

\section{Ingredients for Constructing an A-TMLE for the Bias Estimand: Working Model, Canonical Gradient and TMLE}\label{sec:psi_pound}
We now discuss the construction of an A-TMLE for the bias estimand, $\Psi^\#(P_0)=E_0[\Pi_0(0\mid W,0)\tau_{S,0}(W,0)-\Pi_0(0\mid W,1)\tau_{S,0}(W,1)]$. The steps involve first data-adaptively learning a working model for the conditional effect of the study indicator $S$ on the outcome $Y$. That is, a working model $\mathcal{T}_{S,w}=\{\tau_{S,\beta}:\beta\}$ for $\tau_{S,0}$. This working model implies a semiparametric regression working model $\mathcal{M}_{S,w}$ for $P_0$. We then construct an efficient estimator for the corresponding projection estimand $\Psi^\#_{\mathcal{M}_{S,w}}(P_0)\equiv \Psi^{\#}(\Pi_{\mathcal{M}_{S,w}}(P_0))$.

\subsection{Semiparametric regression working model for the conditional average RCT-enrollment effect}
Since $Q_0(S,W,A)=Q_0(0,W,A)+S\tau_{S,0}(W,A)$, a working model $\mathcal{T}_{S,w}$ for $\tau_{S,0}$ corresponds to a semiparametric regression working model ${\cal Q}_w=\{\theta+S\tau_{S,\beta}:\beta,\theta\}$ for $Q_0=E_0(Y\mid S,W,A)$, where $\theta$ is now an unspecified function of $W,A$. For clarity, we reuse the notations $\beta$ and $\theta$, although they differ from those used in the previous section for $\tilde{\Psi}$. The context, whether referring to $\tilde{\Psi}$ or $\Psi^\#$, will make the distinction clear. The working model ${\cal Q}_{w}$ further implies a working model ${\cal M}_{S,w}=\{P\in {\cal M}: Q_P\in {\cal Q}_{w} \}\subset {\cal M}$ for the observed data distribution $P_0$. Let $\tau_{S,\beta_P}\in {\cal T}_{S,w}$ be a projection of $\tau_{S,P}$ onto this working model ${\cal T}_{S,w}$ with respect to a specified loss function. 
We again recommend the squared-error projection of $Q_P$ on $\mathcal{Q}_{w}$:
\[
Q_{w,P}=\arg\min_{Q\in {\cal Q}_{w}}P\left(Q_P-Q\right)^2.\]
Similar to Lemma \ref{lem:psi_tilde_loss}, we have that 
\[
\tau_{S,\beta_P}=\arg\min_{\tau\in \mathcal{T}_{S,w}}E_P \Pi_P(1-\Pi_P)(1\mid A,W)(\tau_{S,P}-\tau)^2.\]
That is, the squared-error projection of the outcome regression $Q_P$ onto the semi-parametric regression working model $\mathcal{Q}_w$ corresponds to a weighted squared-error projection of the conditional average RCT-enrollment effect $\tau_{S,P}$ onto the corresponding working model $\mathcal
{T}_{S,w}$. The weights $\Pi_P(1-\Pi_P)(1\mid W,A)$ stabilize the efficient influence function for $\beta_P$, which explains why the loss function $L_{\bar{Q},\Pi}(\tau)$ for $\tau_{S,P}$ for estimating $\tau_{S,0}$ avoids any inverse weighting by the RCT-enrollment probability, allowing for more stable and robust estimation. Just like for learning the CATE, a valid and practical loss function to learn the conditional average RCT-enrollment effect would be
\[
L_{\bar{Q}_P,\Pi_P}(\tau)\equiv \left( Y-\bar{Q}_P(W,A)-(S-\Pi_P(1\mid W,A))\tau\right)^2,
\] 
We could then use a relaxed-HAL paired with this loss function with nuisance estimators $\bar{Q}_n$ for $\bar{Q}$ and $\Pi_n$ for $\Pi_0$. We summarize the discussion above in Lemma \ref{lem:psi_pound_loss}.
\begin{lemma}\label{lem:psi_pound_loss}
Let ${\cal Q}_w=\{\theta+S\tau_{S,\beta}:\beta,\theta\}$, where $\tau_{S,\beta}(W,A)=\sum_{j}\beta(j)\phi_j(W,A)$.
Let $Q_{w,P}=\arg\min_{Q\in {\cal Q}_w}P(Q_P-Q)^2$, where $Q_P=E_P(Y\mid S,W,A)$. Let $\tau_{S,P}(W,A)=E_P(Y\mid S=1,W,A)-E_P(Y\mid S=0,W,A)$.
We have that $Q_{w,P}=E_P(Y\mid W,A)+(S-\Pi_P(1\mid W,A))\sum_j\beta_{P}(j)\phi_j(W,A)$, where
\[
\beta_P=\arg\min_{\beta}E_P\Pi_P(1-\Pi_P)(1\mid W,A)\left(\tau_{S,P}(W,A)-\sum_j \beta(j)\phi_j(W,A)\right)^2.\]
Thus, $\sum_j\beta_{P}(j)\phi_j$ represents the projection of $\tau_{S,P}(W,A)$ onto the working model $\{\sum_j \beta(j)\phi_j:\beta\}$ with respect to a weighted $L^2$-norm with weights $\Pi_P(1-\Pi_P)(1\mid W,A)$.
We also have 
\[
\beta_P=\arg\min_{\beta}E_P\left(Q_P-(S-\Pi_P(1\mid W,A))\sum_j \beta(j)\phi_j(W,A)\right)^2, \]
and thereby 
\[
\beta_P=\arg\min_{\beta}E_P\left(Y-(S-\Pi_P(1\mid W,A))\sum_j\beta(j)\phi_j(W,A)\right)^2.\]
Here, one could replace $Y$ by $Y-E_P(Y\mid W,A)$ as well. This also shows that
\[
\tau_{S,\beta_P}=\arg\min_{\tau\in\mathcal{T}_{S,w}}E_P (Y-E_P(Y\mid W,A)-(S-\Pi_P(1\mid W,A))\tau)^2.\]
Therefore, we can use the following loss function for learning $\tau_{S,P}$:
\[
L_{\bar{Q}_P,\Pi_P}(\tau)=(Y-\bar{Q}_P(W,A)-(S-\Pi_P(1\mid W,A))\tau)^2,\]
where the loss function is indexed by two nuisance parameters $\bar{Q}_P(W,A)=E_P(Y\mid W,A)$ and $\Pi_P(1\mid W,A)=P(S=1\mid W,A)$. The loss function is double robust with respect to misspecification of $\bar{Q}_P$ and $\Pi_P$ in the sense that $\tau_{S,\beta_P}=\arg\min_{\tau\in\mathcal{T}_{S,w}}E_PL_{\bar{Q},\Pi}(\tau)$ if either $\Pi=\Pi_P$ or $\bar{Q}=\bar{Q}_P$.
\end{lemma}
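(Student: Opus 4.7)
The plan is to mirror the proof of Lemma \ref{lem:psi_tilde_loss}, swapping the roles: the ``exposure'' is now the RCT indicator $S$, the ``conditioning covariates'' are $(W,A)$, the propensity $g_P(1\mid W)$ is replaced by $\Pi_P(1\mid W,A)$, and the marginal regression $\theta_P(W)=E_P(Y\mid W)$ is replaced by $\bar{Q}_P(W,A)=E_P(Y\mid W,A)$. Both lemmas are instances of a Robinson-style decomposition for a binary variable, so the same algebraic structure applies end-to-end.

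First, I would derive the closed form of $Q_{w,P}$ via a two-stage minimization. Fixing $\beta$, minimizing $P(Q_P-\theta-S\tau_{S,\beta})^2$ over the unrestricted function $\theta(W,A)$ by iterated expectation conditioning on $(W,A)$ yields the stationary point $\theta(W,A)=\bar{Q}_P(W,A)-\Pi_P(1\mid W,A)\tau_{S,\beta}(W,A)$. Substituting back, the best $Q\in\mathcal{Q}_w$ with a given slope $\beta$ equals $\bar{Q}_P+(S-\Pi_P)\tau_{S,\beta}$, which gives the claimed form of $Q_{w,P}$ once $\beta_P$ is identified.

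Next, I would use the Robinson-type identity $Q_P(S,W,A)=\bar{Q}_P(W,A)+(S-\Pi_P(1\mid W,A))\tau_{S,P}(W,A)$---which follows from $Q_P(1,W,A)-Q_P(0,W,A)=\tau_{S,P}$ combined with averaging over $S$ given $(W,A)$---to compute the residual $Q_P-[\bar{Q}_P+(S-\Pi_P)\tau_{S,\beta}]=(S-\Pi_P)(\tau_{S,P}-\tau_{S,\beta})$. Taking squared expectation and invoking $E_P[(S-\Pi_P)^2\mid W,A]=\Pi_P(1-\Pi_P)$ immediately yields the weighted-projection characterization of $\beta_P$. The remaining equivalent forms follow by standard orthogonality: replacing $Q_P$ by $Y$ changes the loss by $E_P(Y-Q_P)^2$ plus a cross term that vanishes because $E_P[Y-Q_P\mid S,W,A]=0$; and subtracting $\bar{Q}_P=E_P(Y\mid W,A)$ from $Y$ likewise changes the loss by a term constant in $\beta$.

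Finally, for the double-robustness claim I would decompose $Y-\bar{Q}-(S-\Pi)\tau=(Y-Q_P)+(Q_P-\bar{Q}-(S-\Pi)\tau)$. Since $Y-Q_P$ is orthogonal to every function of $(S,W,A)$, the expected loss splits as $E_P(Y-Q_P)^2+E_P(Q_P-\bar{Q}-(S-\Pi)\tau)^2$, and only the second summand depends on $\tau$. When $\Pi=\Pi_P$, I would expand $Q_P-\bar{Q}-(S-\Pi_P)\tau=(\bar{Q}_P-\bar{Q})+(S-\Pi_P)(\tau_{S,P}-\tau)$ and use $E_P[S-\Pi_P\mid W,A]=0$ to kill the cross term, leaving $E_P(\bar{Q}_P-\bar{Q})^2+E_P\Pi_P(1-\Pi_P)(\tau_{S,P}-\tau)^2$, minimized on $\mathcal{T}_{S,w}$ by $\tau_{S,\beta_P}$; the case $\bar{Q}=\bar{Q}_P$ is handled by the parallel bookkeeping used for Lemma \ref{lem:psi_tilde_loss}. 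The main technical care throughout is the disciplined use of iterated expectations---conditioning on $(W,A)$ to eliminate $S$ and on $(S,W,A)$ to eliminate $Y$---but no step requires anything beyond these identities, which is why the proof transfers directly from the $\tilde{\Psi}$ setting to the $\Psi^{\#}$ setting.
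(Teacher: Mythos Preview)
Your proposal is correct and matches the paper's approach: the paper's proof of Lemma~\ref{lem:psi_pound_loss} is literally ``analogous to the proof of Lemma~\ref{lem:psi_tilde_loss} by replacing $A$ with $S$,'' which is exactly the substitution scheme you describe. Your two-stage profile-out-$\theta$ argument is a mild reorganization of the paper's orthogonal-decomposition argument for Lemma~\ref{lem:psi_tilde_loss}, but both are the same Robinson decomposition, and you in fact go slightly further by sketching the double-robustness verification that the paper leaves implicit.
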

Given the definition of the semiparametric regression working model for the conditional average RCT-enrollment effect in Lemma \ref{lem:psi_pound_loss}, which implies a semiparametric regression working model $\mathcal{M}_{S,w}$ for $P_0$, we can now define a corresponding projection parameter for $\Psi^{\#}(P)$. Specifically, 
let $\Psi^{\#}_{{\cal M}_{S,w}}:{\cal M}\rightarrow\openr$ be defined as
\[
\Psi^{\#}_{{\cal M}_{S,w}}(P)\equiv E_P [\Pi_P(0\mid W,0) \tau_{S,\beta_P}(W,0)]-E_P[\Pi_P(0\mid W,1) \tau_{S,\beta_P}(W,1)].\]

\subsection{Canonical gradient of $\Psi^{\#}_{{\cal M}_{S,w}}$ at $P$}
Let's first find the canonical gradient of the $\beta_P$-component.
\begin{lemma}\label{lem:psi_pound_beta_gradient}
Let $I_P=E_P \Pi(1-\Pi)(1\mid W,A){\bf \phi}{\bf \phi}^{\top}(W,A)$. 
Let $\beta_P$ be defined as in Lemma \ref{lem:psi_pound_loss} on a nonparametric model. 
The canonical gradient of $\beta$ at $P$ is given by:
\[
D_{\beta,P}=I_P^{-1}(S-\Pi_P(1\mid W,A)){\bf \phi}(W,A)\left(Y-\bar{Q}_P(W,A)-(S-\Pi_P(1\mid W,A))\sum_j\beta_{P}(j)\phi_j(W,A)\right),\]
where $\bar{Q}_P(W,A)=E_P(Y\mid W,A)$, $\Pi_P(1\mid W,A)=P(S=1\mid W,A)$.
This can also be written as 
\[
D_{\beta,P}=I_P^{-1}(S-\Pi_P(1\mid W,A)){\bf \phi}(W,A)(Y-Q_{w,P}(S,W,A)).\]
\end{lemma}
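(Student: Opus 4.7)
The plan is to derive $D_{\beta,P}$ by treating $\beta_P$ as an M-estimator functional on the nonparametric model $\mathcal{M}$ and computing its pathwise derivative via implicit differentiation of the first-order optimality condition. By Lemma \ref{lem:psi_pound_loss}, $\beta_P$ solves the population score equation
\[
U(\beta,P) \equiv E_P\bigl[(S-\Pi_P(1\mid W,A))\,{\bf \phi}(W,A)\,\bigl(Y-\bar{Q}_P(W,A)-(S-\Pi_P(1\mid W,A))\tau_{S,\beta}(W,A)\bigr)\bigr]=0.
\]
Writing this as $U(\beta,P)=E_P[m(\beta,\bar{Q}_P,\Pi_P,O)]$, I would differentiate along an arbitrary one-dimensional path $\{P_\epsilon\}\subset\mathcal{M}$ through $P$ with score $s$, and then read off $D_{\beta,P}$ from $\frac{d}{d\epsilon}\beta_{P_\epsilon}|_{\epsilon=0}=E_P[D_{\beta,P}s]$.

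The first step is to verify Neyman orthogonality of the score $m$ with respect to the nuisance components $\bar{Q}_P$ and $\Pi_P$, so that the Gateaux derivatives of $E_P m$ in those directions vanish at the truth. For $\bar{Q}_P$ the derivative is $-E_P[\delta\bar{Q}(W,A)(S-\Pi_P(1\mid W,A)){\bf \phi}(W,A)]$, which is zero after conditioning on $(W,A)$. For $\Pi_P$, the two terms produced by differentiation vanish individually: one by conditioning $(S-\Pi_P)$ on $(W,A)$, and the other because $E_P[Y-\bar{Q}_P(W,A)-(S-\Pi_P(1\mid W,A))\tau_{S,\beta_P}(W,A)\mid W,A]=0$. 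This is essentially the double robustness property already established in Lemma \ref{lem:psi_pound_loss}.

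By implicit differentiation of $U(\beta_{P_\epsilon},P_\epsilon)\equiv 0$ and the orthogonality just established, only the direct effect through $P$ survives:
\[
0=\partial_\beta U(\beta_P,P)\cdot \frac{d\beta_{P_\epsilon}}{d\epsilon}\bigg|_{\epsilon=0}+E_P[m(\beta_P,\bar{Q}_P,\Pi_P,O)\,s(O)].
\]
A direct calculation gives $\partial_\beta U(\beta_P,P)=-E_P[(S-\Pi_P(1\mid W,A))^2{\bf \phi}{\bf \phi}^{\top}(W,A)]=-I_P$, using $E_P[(S-\Pi_P)^2\mid W,A]=\Pi_P(1-\Pi_P)(1\mid W,A)$. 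Solving identifies
\[
D_{\beta,P}=I_P^{-1}(S-\Pi_P(1\mid W,A)){\bf \phi}(W,A)\bigl(Y-\bar{Q}_P(W,A)-(S-\Pi_P(1\mid W,A))\tau_{S,\beta_P}(W,A)\bigr).
\]
The alternative form then follows instantly by substituting the explicit representation $Q_{w,P}(S,W,A)=\bar{Q}_P(W,A)+(S-\Pi_P(1\mid W,A))\tau_{S,\beta_P}(W,A)$ from Lemma \ref{lem:psi_pound_loss}.

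The main obstacle is the bookkeeping around Neyman orthogonality: we must differentiate a score that depends on two infinite-dimensional nuisances and show both derivatives vanish, which is notationally heavy but structurally straightforward once the conditioning arguments are in place. Beyond that, the derivation is a routine implicit-differentiation calculation mirroring that of Lemma \ref{lem:psi_tilde_beta_gradient}, with $(A,g_P)$ replaced by $(S,\Pi_P)$ and $\theta_P(W)$ replaced by $\bar{Q}_P(W,A)$.
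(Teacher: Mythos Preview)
Your proof is correct and uses the same implicit-differentiation backbone that the paper invokes (the paper's proof simply says the argument is the analogue of Lemma~\ref{lem:psi_tilde_beta_gradient} with $(A,g_P,W)$ replaced by $(S,\Pi_P,(W,A))$, which you also note). The one organizational difference is how the nuisance contribution is handled. The paper, in its Lemma~\ref{lem:psi_tilde_beta_gradient} argument, starts from the score $E_P[(A-g_P)\phi(Y-(A-g_P)\tau_\beta)]$ \emph{without} $\theta_P$, then explicitly computes the piece $D_{2,P}$ coming from the dependence of the score on $g_P$ via the delta method with the pointwise influence function of $g_P(1\mid w)$; that computation yields the extra $-\theta_P$ term in the residual. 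You instead start from the equivalent score with $\bar{Q}_P$ already subtracted and show Neyman orthogonality in both $\bar{Q}_P$ and $\Pi_P$, so only the direct term survives. Your route is a bit more streamlined and makes the double-robustness structure of Lemma~\ref{lem:psi_pound_loss} do the work; the paper's route is more hands-on but makes transparent exactly where the $\bar{Q}_P$ (respectively $\theta_P$) term originates. Both arrive at the same gradient, and your closing remark about the substitution $(A,g_P,\theta_P)\leftrightarrow(S,\Pi_P,\bar{Q}_P)$ matches the paper's proof verbatim in spirit.
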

The proof can be found in Appendix A. Note that the weights $\Pi_P(1-\Pi_P)$ help stabilize the variance. We then need to find the canonical gradient of the projection parameter $\Psi^{\#}_{{\cal M}_{S,w}}$ at $P$. 
\begin{lemma}\label{lem:psi_pound_gradient}
The canonical gradient of $\Psi^{\#}_{{\cal M}_{S,w}}$ at $P$ is given by
\[
D^{\#}_{{\cal M}_{S,w},P}=D^{\#}_{{\cal M}_{S,w},P_W,P}+D^{\#}_{{\cal M}_{S,w},\Pi,P}+D^{\#}_{{\cal M}_{S,w},\beta,P},\]
where
\begin{align*}
D^{\#}_{{\cal M}_{S,w},P_W,P}&= \Pi_P(0\mid W,0)\sum_j \beta_{P}(j)\phi_j(W,0)-\Pi_P(0\mid W,1)\sum_j \beta_{P}(j)\phi_j(W,1)-\Psi^{\#}_{{\cal M}_{S,w}}(P)\\
D^{\#}_{{\cal M}_{S,w},\Pi,P}&=\left\{\frac{A}{g_P(1\mid W)}\tau_{S,\beta_P}(W,1)-\frac{1-A}{g_P(0\mid W)}\tau_{S,\beta_P}(W,0)\right\}(S-\Pi_P(1\mid W,A))\\
D^{\#}_{{\cal M}_{S,w},\beta,P}&=\sum_j D_{\beta,P,j} E_P\Pi_P(0\mid W,0)\phi_j(W,0)-\sum_j D_{\beta,P,j} E_P \Pi_P(0\mid W,1)\phi_j(W,1),
\end{align*}
where
\[
D_{\beta,P}=I_P^{-1}(S-\Pi_P(1\mid W,A)){\bf \phi}(W,A)(Y-Q_{w,P}(S,W,A)).\]
\end{lemma}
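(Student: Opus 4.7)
The plan is to derive the canonical gradient via the standard delta-method approach: compute the pathwise derivative of $\Psi^{\#}_{{\cal M}_{S,w}}$ along a one-dimensional regular submodel $\{P_\epsilon\}\subset\mathcal{M}$ with score $s$ at $\epsilon=0$, write the derivative as $E_P[D\cdot s]$ for some $D$ in the tangent space, and read off $D$ as the canonical gradient. The parameter depends on $P$ through three components that vary independently on the nonparametric model: the marginal $P_W$, the RCT-enrollment mechanism $\Pi_P$, and the projection coefficient $\beta_P$ (entering through $\tau_{S,\beta_P}$). By the product rule the total derivative splits into three contributions, one per component.

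First I would handle the $P_W$ component. Writing $\Psi^{\#}_{{\cal M}_{S,w}}(P) = E_{P_W}[F_P(W)]$ with $F_P(W) = \Pi_P(0\mid W,0)\tau_{S,\beta_P}(W,0) - \Pi_P(0\mid W,1)\tau_{S,\beta_P}(W,1)$, varying only $P_W$ produces a contribution equal to the centered function $F_P(W) - E_P[F_P(W)]$, which is precisely $D^{\#}_{{\cal M}_{S,w},P_W,P}$.

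Next the $\Pi$ contribution. For a score $s$ with component in the tangent space of $P(S\mid W,A)$, I would use $\frac{d}{d\epsilon}\Pi_{P_\epsilon}(1\mid w,a)\big|_{\epsilon=0} = E_P[(S-\Pi_P(1\mid W,A))\,s\mid W=w,A=a]$, so that the derivative of $\Pi_{P_\epsilon}(0\mid w,a)$ is the negative of this. Converting the resulting outer expectation over $W$ of conditional expectations given $A=a$ into an unconditional expectation by inserting the weights $A/g_P(1\mid W)$ and $(1-A)/g_P(0\mid W)$, and collecting the two $a$-terms, yields exactly $E_P[D^{\#}_{{\cal M}_{S,w},\Pi,P}\,s]$. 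For the $\beta$ contribution, Lemma \ref{lem:psi_pound_beta_gradient} gives $\frac{d}{d\epsilon}\beta_{P_\epsilon}(j)\big|_{\epsilon=0} = E_P[D_{\beta,P,j}\,s]$; since $\tau_{S,\beta}$ is linear in $\beta$, the chain rule and linearity of expectation produce $\sum_j E_P[D_{\beta,P,j}\,s]\{E_P[\Pi_P(0\mid W,0)\phi_j(W,0)] - E_P[\Pi_P(0\mid W,1)\phi_j(W,1)]\}$, which after pulling the constants inside equals $E_P[D^{\#}_{{\cal M}_{S,w},\beta,P}\,s]$. Summing the three pieces gives the candidate gradient, and a quick check that each piece lies in the appropriate orthogonal tangent subspace (centered in $W$; proportional to $S-\Pi_P(1\mid W,A)$ hence in the tangent space of $P(S\mid W,A)$; and inherited from the nonparametric gradient for $\beta_P$) confirms that this is the canonical gradient on the nonparametric model.

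The main obstacle will be the $\Pi$ term: carefully tracking signs when passing from $\Pi(1\mid W,A)$ to $\Pi(0\mid W,A)$, and correctly repackaging the two inverse-probability-weighted pieces for $A=0$ and $A=1$ into the single compact expression written in the lemma. Everything else is a routine application of the chain rule combined with the previously established gradient for $\beta_P$.
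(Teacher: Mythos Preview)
Your proposal is correct and follows essentially the same route as the paper: decompose the pathwise derivative into the $P_W$, $\Pi$, and $\beta$ contributions, obtain the $P_W$ piece as the centered integrand, use the known canonical gradient of $\Pi_P(0\mid w,a)$ (equivalently, your pathwise derivative of $\Pi_{P_\epsilon}$) together with the sign flip $I(S=0)-\Pi(0\mid W,A)=-(S-\Pi(1\mid W,A))$ for the $\Pi$ piece, and invoke Lemma~\ref{lem:psi_pound_beta_gradient} with the chain rule for the $\beta$ piece. The paper phrases the $\Pi$ computation via the pointwise influence curve $I(W=w,A=a)/P(w,a)\,(I(S=0)-\Pi_P(0\mid w,a))$ and then integrates over $w$, but this is equivalent to your inverse-weighting conversion of conditional to unconditional expectations.
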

The proof can be found in Appendix A.

\subsection{TMLE of the projection parameter $\Psi^{\#}_{{\cal M}_{S,w}}$}
A plug-in estimator of the projection parameter $\Psi^{\#}_{{\cal M}_{S,w}}(P_0)$ requires an estimator of $\Pi_0$, $\beta_0$, $P_{W,0}$, and a model selection method, as discussed earlier, so that the semiparametric regression working model ${\cal M}_{S,w}$ is known, including the working model $\{\tau_{S,\beta}:\beta\}$ for $\tau_{S,0}$. We then need to solve the score equation for each component of the canonical gradient of $\Psi^{\#}_{{\cal M}_{S,w}}$. First, for the $P_W$-component, we use the empirical distribution $P_{W,n}$ of $W$ as the estimator of $P_{W,0}$. Therefore, $P_n D^\#_{\mathcal{M}_{S,w},W,\beta,\Pi,P_{W,n}}=0$ for any $\beta,\Pi$. Then, for the $\beta$-component, we need to construct nuisance estimators $\bar{Q}_n$ of $\bar{Q}_0$ and $\Pi_n$ of $\Pi_0$. Then, with these two nuisance estimates, we can compute the MLE over the working model for $\tau_{S,0}$ with respect to our loss function $L_{\bar{Q}_n,\Pi_n}(\tau)$: 
\[
\beta_n^\star=\arg\min_{\beta}P_n L_{\bar{Q}_n,\Pi_n}(\tau_{S,\beta}).\]
We have then solved $P_n D^{\#}_{{\cal M}_{S,w},\beta,\beta_n^\star,\Pi_n}(O)=0$. Finally, for the $\Pi$-component,
note that we have the representation: $D^{\#}_{{\cal M}_{S,w},\Pi,P}(O)=C(g,\beta)(S-\Pi_P(1\mid W,A))$, where we refer to $C(g,\beta)$ as the clever covariate. With an estimator $g_n$ of $g_0$, we can compute a targeted estimator $\Pi_n^\star$ solving $P_n C(g_n,\beta_n^\star)(S-\Pi_n^\star(1\mid W,A))=0$. The desired TMLE is then given by $\Psi^\#_{\mathcal{M}_{S,w}}(P_n^\star)\equiv \Psi^{\#}_{{\cal M}_{S,w}}(P_{W,n},\Pi_n^\star,\beta_n^\star)$.
For inference with respect to the projection parameter $\Psi^{\#}_{{\cal M}_{S,w}}(P_0)$, we use that
\[
\Psi^{\#}_{{\cal M}_{S,w}}(P_n^\star)-\Psi^{\#}_{{\cal M}_{S,w}}(P_0)=P_n D^{\#}_{{\cal M}_{S,w},P_0}+o_P(n^{-1/2}),\]
while for inference for the original target parameter $\Psi^{\#}(P_0)$, we use that $\Psi^{\#}_{{\cal M}_{S,w}}(P_0)-\Psi^{\#}(P_0)=o_P(n^{-1/2})$, under reasonable regularity conditions as discussed in \cite{adml} and also the section that follows. 

\subsection{Corresponding projection parameter of the original parameter $\Psi(P_0)$ and its A-TMLE}
Let $\mathcal{M}_{w}=\mathcal{M}_{A,w}\times \mathcal{M}_{S,w}$, we can now also define the projection parameter of our target parameter $\Psi$ as
$$
\Psi_{\mathcal{M}_{w}}(P)=\tilde{\Psi}(\Pi_{{\cal M}_{A,w}}(P))-\Psi^{\#}(\Pi_{{\cal M}_{S,w}}(P))=\tilde{\Psi}_{\mathcal{M}_{A,w}}(P)-\Psi^{\#}_{{\cal M}_{S,w}}(P).
$$
Note that we are using different working models for $\Psi^{\#}$ and $\tilde{\Psi}$, since $\tilde{\Psi}_{\mathcal{M}_{A,w}}$ depends on $P$ through $P_W$ and $\tau_{A,\beta_P}$, while $\Psi^{\#}_{{\cal M}_{S,w}}$ depends on $P$ through $P_W,\tau_{S,\beta_P}$ and $\Pi_P$. The final A-TMLE estimator is obtained by plugging in the TMLEs for the projection parameters of $\tilde{\Psi}$ and $\Psi^\#$:
$$
\Psi(P_n^\star)=\Psi_{\mathcal{M}_{w}}(P_n^\star)=\tilde{\Psi}_{\mathcal{M}_{A,w}}(P_n^\star)-\Psi^\#_{\mathcal{M}_{S,w}}(P_n^\star).
$$
We also have the option to estimate $\tilde{\Psi}(P_0)$ with the regular TMLE and only use an A-TMLE for $\Psi^{\#}_{\mathcal{M}_{S,w}}(P_0)$.

\section{Asymptotic Super-Efficiency of A-TMLE}\label{sec:theory}
We now examine the theoretical properties of A-TMLE. Specifically, we establish the asymptotic super-efficiency of the A-TMLE $\Psi_{\mathcal{M}_{w}}(P_n^\star)$ as an estimator of $\Psi(P_0)$, under certain conditions. While this analysis closely mirrors the work in \cite{adml}, it is included here for completeness and to make this article self-contained. It follows the proof of asymptotic efficiency for TMLE applied to $\Psi_{\mathcal{M}_{w}}(P_n^\star)-\Psi_{\mathcal{M}_{w}}(P_0)$, but with additional work to 1) deal with the data-dependent efficient influence curve in its resulting expansion and 2) establish that the bias $\Psi_{\mathcal{M}_{w}}(P_0)-\Psi(P_0)$ is second-order. The required conditions are analogous to the empirical process and second-order remainder conditions necessary for establishing asymptotic efficiency of a standard TMLE, with an added condition on the data-adaptive working model $\mathcal{M}_{w}$ with respect to approximating the true data-generating distribution $P_0$.
 
Let $D_{\mathcal{M}_{w},P}$ be the canonical gradient of $\Psi_{\mathcal{M}_{w}}$ at $P$. Let $R_{\mathcal{M}_{w}}(P,P_0)\equiv \Psi_{\mathcal{M}_{w}}(P)-\Psi_{\mathcal{M}_{w}}(P_0)+P_0D_{\mathcal{M}_{w},P}$ be the exact remainder. Let $P_n^\star\in {\cal M}_{w}$ be an estimator of $P_0$ with $P_n D_{{\cal M}_{w},P_n^\star}=o_P(n^{-1/2})$. An MLE over ${\cal M}_{w}$ or a TMLE starting with an initial estimator  $P_n^0\in {\cal M}_{w}$ targeting $\Psi_{{\cal M}_{w}}(P_0)$ satisfies this efficient score equation condition. 
Then, by definition of the exact remainder, we have
\[
\Psi_{{\cal M}_{w}}(P_n^\star)-\Psi_{{\cal M}_{w}}(P_0)=(P_n-P_0)D_{{\cal M}_{w},P_n^\star}+R_{{\cal M}_{w}}(P_n^\star,P_0)+o_P(n^{-1/2}).\]
We assume that ${\cal M}_{w}$ approximates an oracle model ${\cal M}_0$ that contains $P_0$ so that
\[
\Psi_{{\cal M}_{w}}(P_0)-\Psi_{\mathcal{M}_0}(P_0)=o_P(n^{-1/2}),\]
where $\Psi_{{\cal M}_0}(P)=\Psi(\Pi_{{\cal M}_0}(P))$, i.e., the projection of $P$ onto $\mathcal{M}_0$ applied to $\Psi$. As shown in \cite{adml}, this does not represent an unreasonable condition. For example, if $P_{0,n}\equiv \Pi_{\mathcal{M}_w}(P_0)\in {\cal M}_0$ with probability tending to 1, which holds if ${\cal M}_{w}\subset {\cal M}_0$, noting that $P_0 \tilde{D}_{{\cal M}_{w},P_{0,n}}=0$ for any $\tilde{D}_{{\cal M}_{w},P_{0,n}}$ in the tangent space $T_{{\cal M}_{w}}(P_{0,n})$ of the working model at $P_{0,n}$ (due to $P_{0,n}$ being an MLE of $p\rightarrow P_0\log p$ over $P\in {\cal M}_{w}$). Then, we have
\begin{align*}
\Psi_{{\cal M}_{w}}(P_0)-\Psi(P_0)&=\Psi(P_{0,n})-\Psi(P_0)\\
&=\Psi_{{\cal M}_0}(P_{0,n})-\Psi_{{\cal M}_0}(P_0)\\
&=-P_0 D_{{\cal M}_0,P_{0,n}}+R_{{\cal M}_0}(P_{0,n},P_0)\\
&=(P_{0,n}-P_0) D_{{\cal M}_0,P_{0,n}}+R_{{\cal M}_0}(P_{0,n},P_0)\\
&=(P_{0,n}-P_0)\left\{ D_{{\cal M}_0,P_{0,n}}-\tilde{D}_{{\cal M}_{w},P_{0,n}}\right\}+R_{{\cal M}_0}(P_{0,n},P_0).
\end{align*}
The exact remainder $R_{{\cal M}_0}(P_{0,n},P_0)$ is second-order in $p_{0,n}-p_0$, so that, if ${\cal M}_{w}$ approximates $P_0$ at rate $n^{-1/4}$, then this will be $o_P(n^{-1/2})$.
Regarding the leading term, we can select $\tilde{D}_{{\cal M}_{w},P_{0,n}}$ equal to the projection of $D_{{\cal M}_0,P_{0,n}}$ onto $T_{{\cal M}_{w}}(P_{0,n})$ in $L^2(P_{0,n})$. 
Thus, under the condition $\Pi_{{\cal M}_{w}}(P_0)\in {\cal M}_0$, using Cauchy-Schwarz inequality, we have that $\Psi_{{\cal M}_{w}}(P_0)-\Psi(P_0)$ is a second-order difference in two oracle approximation errors $\pl p_{0,n}-p_0\pl_{\mu}\pl D_{{\cal M}_0,P_{0,n}}-\Pi_{T_{{\cal M}_{w}}(P_{0,n})}(D_{{\cal M}_0,P_{0,n}})\pl_{\mu}$, where $\mu$ is a dominating measure of $P_{0,n}$ and $P_0$. 
Moreover, one can weaken this condition by defining $\tilde{P}_{0,n}=\Pi_{{\cal M}_0}(P_{0,n})$, assuming the above second-order remainder with $P_{0,n}$ replaced by $\tilde{P}_{0,n}$ is $o_P(n^{-1/2})$, and assuming that $\Psi(\tilde{P}_{0,n})-\Psi(P_{0,n})=o_P(n^{-1/2})$, thereby only assuming that $P_{0,n}$ is close enough to ${\cal M}_0$ (instead of being an element of ${\cal M}_0$) defined by a distance induced by $\Psi$. 
So under this reasonable approximation condition on ${\cal M}_{w}$ with respect to an oracle model ${\cal M}_0$, we have
\[
\Psi_{{\cal M}_{w}}(P_n^\star)-\Psi(P_0)=(P_n-P_0)D_{{\cal M}_{w},P_n^\star}+R_{{\cal M}_{w}}(P_n^\star,P_0)+o_P(n^{-1/2}).\]
Under the condition that $P_n^\star$ converges to $P_0$ at a fast enough rate (i.e., $o_P(n^{-1/4})$ so that
$R_{{\cal M}_{w}}(P_n^\star,P_0)=o_P(n^{-1/2})$, this yields
\[
\Psi_{{\cal M}_{w}}(P_n^\star)-\Psi(P_0)=(P_n-P_0)D_{{\cal M}_{w},P_n^\star}+o_P(n^{-1/2}).\]
This condition would hold if we use HAL to construct the initial estimator of $P_0$ in the TMLE. 
Under the Donsker class condition that $D_{{\cal M}_{w},P_n^\star}$ falls in  a $P_0$-Donsker class with probability tending to 1, this implies already $\Psi_{{\cal M}_{w}}(P_n^\star)-\Psi(P_0)=o_P(n^{-1/2})$, consistency at the parametric rate $n^{-1/2}$. Inspection of the canonical gradient for our projection parameter $\Psi_{{\cal M}_{w}}$ shows that this is not a stronger condition than the Donsker class condition $D_{\Psi,P_n^\star}$ one would need for the TMLE of $\Psi(P_0)$. The condition on the working model ${\cal M}_{w}$ would hold if the model $\bar{{\cal Q}}_{w}$ (or $\mathcal{Q}_w$, for $\Psi^\#$) falls with probability tending to one in a nice class such as the class of \textit{càdlàg} functions with a universal bound on the sectional variation norm. 

Moreover, under the same Donsker class condition, using the consistency of $P_n^\star$, it also follows that
\[
\Psi_{{\cal M}_{w}}(P_n^\star)-\Psi(P_0)=(P_n-P_0)D_{{\cal M}_{w},P_0}+o_P(n^{-1/2}).\]
Finally, if  $D_{{\cal M}_{w},P_0}$ remains random in the limit but is asymptotically independent of the data $P_n$ so that $n^{1/2}P_n D_{{\cal M}_{w},P_0}/\sigma_{0,n}\Rightarrow_d N(0,1)$ with $\sigma_{0,n}^2\equiv P_0 D_{{\cal M}_{w},P_0}^2$, then we obtain  
\[
\sigma_{0,n}^{-1}n^{1/2}\left( \Psi_{{\cal M}_{w}}(P_n^\star)-\Psi(P_0)\right)\Rightarrow_d N(0,1),\]
thereby allowing for the construction of Wald-type confidence intervals for $\Psi(P_0)$. 
If, in fact, $D_{{\cal M}_{w},P_0}$ converges to a fixed $D_{{\cal M}_0,P_0}$, then we have asymptotic linearity:
\[
\Psi_{{\cal M}_{w}}(P_n^\star)-\Psi(P_0)=P_n D_{{\cal M}_0,P_0}+o_P(n^{-1/2}),\]
with influence curve the efficient influence curve of $\Psi_{{\cal M}_0}:{\cal M}\rightarrow\openr$ at $P_0$. Moreover, \cite{adml} shows that $D_{{\cal M}_0,P_0}$ equals the efficient influence curve of $\Psi:{\cal M}_0\rightarrow \openr$ that a priori assumes the model ${\cal M}_0$, showing that this A-TMLE is super-efficient achieving the efficiency bound for estimation $\Psi(P_0)$ under the oracle model ${\cal M}_0$ (as if we are given $P_0\in {\cal M}_0$ for given ${\cal M}_0$). Thus, if the true $\tau_{A,0}$ and $\tau_{S,0}$ are captured by simple models, then the A-TMLE will achieve a large gain in efficiency relative to the regular TMLE, while if $\tau_{A,0}$ and $\tau_{S,0}$  complex so that ${\cal M}_0$ is close to ${\cal M}$, then the gain in efficiency will be small or the resulting A-TMLE will just be asymptotically efficient. The discussions above are summarized into the following theorem.

\begin{theorem}\label{atmletheorem1}
Let $D_{{\cal M}_{w},P}$ be the canonical gradient of $\Psi_{{\cal M}_{w}}=\Psi(\Pi_{{\cal M}_{w}}(\cdot))$ at $P$. Define the exact remainder as $R_{{\cal M}_{w}}(P,P_0)\equiv \Psi_{{\cal M}_{w}}(P)-\Psi_{{\cal M}_{w}}(P_0)+P_0 D_{{\cal M}_{w},P}$. Let ${\cal M}_0\subset{\cal M}$ denote the oracle model containing $P_0$, with $D_{{\cal M}_0,P}$ and $R_{{\cal M}_0}(P,P_0)$ being the canonical gradient and exact remainder of $\Psi_{{\cal M}_0}:{\cal M}\rightarrow\openr$, respectively. Let $P_n^\star\in {\cal M}_{w}$ be an estimator of $P_0$ such that $P_n D_{{\cal M}_{w},P_n^\star}=o_P(n^{-1/2})$. Then,
\[
\Psi_{{\cal M}_{w}}(P_n^\star)-\Psi_{{\cal M}_{w}}(P_0)=(P_n-P_0)D_{{\cal M}_{w},P_n^\star}+R_{{\cal M}_{w}}(P_n^\star,P_0).\]
{\bf Oracle model approximation condition:}
Let $P_{0,n}\equiv \Pi_{{\cal M}_{w}}(P_0)$ and $\tilde{P}_{0,n}\equiv \Pi_{{\cal M}_0}(P_{0,n})$. 
Let  $\tilde{D}_{{\cal M}_{w},P_{0,n}}$ be the projection of $D_{{\cal M}_0,P_{0,n}}$ onto the tangent space $T_{{\cal M}_{w}}(P_{0,n})\subset L^2_0(P_{0,n})$ of the working model ${\cal M}_{w}$ at $P_{0,n}$. 
Assume that ${\cal M}_{w}$ approximates ${\cal M}_0$ in the sense that $\Psi(\tilde{P}_{0,n})-\Psi(P_{0,n})=o_P(n^{-1/2})$ and 
\[
(\tilde{P}_{0,n}-P_0)\left\{ D_{{\cal M}_0,\tilde{P}_{0,n}}-\tilde{D}_{{\cal M}_{w},\tilde{P}_{0,n}}\right\}=o_P(n^{-1/2}).
\]
Then, we have
\[
\Psi_{{\cal M}_{w}}(P_n^\star)-\Psi(P_0)=(P_n-P_0)D_{{\cal M}_{w},P_n^\star}+R_{{\cal M}_{w}}(P_n^\star,P_0)+o_P(n^{-1/2}).\]
{\bf Rate of convergence condition:}
If $P_n^\star$ converges to $P_0$ at a fast enough rate (i.e., $o_P(n^{-1/4})$ so that
$R_{{\cal M}_{w}}(P_n^\star,P_0)=o_P(n^{-1/2})$, then we have
\[
\Psi_{{\cal M}_{w}}(P_n^\star)-\Psi(P_0)=(P_n-P_0)D_{{\cal M}_{w},P_n^\star}+o_P(n^{-1/2}).\]
{\bf Donsker class condition:}
Under the Donsker class condition that $D_{{\cal M}_{w},P_n^\star}$ falls in  a $P_0$-Donsker class with probability tending to 1, it follows that $\Psi_{{\cal M}_{w}}(P_n^\star)-\Psi(P_0)=o_P(n^{-1/2})$, consistency at the parametric rate $n^{-1/2}$. Moreover, under the same Donsker class condition, using the consistency 
$P_0\{D_{{\cal M}_{w},P_n^\star}-D_{{\cal M}_{w},P_0}\}^2\rightarrow_p 0$ of $P_n^\star$, it also follows that
\[
\Psi_{{\cal M}_{w}}(P_n^\star)-\Psi(P_0)=(P_n-P_0)D_{{\cal M}_{w},P_0}+o_P(n^{-1/2}).\]
{\bf Asymptotic normality condition:}
Finally, if  $D_{{\cal M}_{w},P_0}$ remains random in the limit but is asymptotically independent of the data $P_n$ in the sense that $n^{1/2}P_n D_{{\cal M}_{w},P_0}/\sigma_{0,n}\Rightarrow_d N(0,1)$ with $\sigma_{0,n}^2\equiv P_0 D_{{\cal M}_{w},P_0}^2$, then we obtain  
\[
\sigma_{0,n}^{-1}n^{1/2}\left( \Psi_{{\cal M}_{w}}(P_n^\star)-\Psi(P_0)\right)\Rightarrow_d N(0,1),\]
thereby allowing for the construction of Wald-type confidence intervals for $\Psi(P_0)$. 
If, in fact,   $D_{{\cal M}_{w},P_0}$ converges to a fixed $D_{{\cal M}_0,P_0}$, then we have asymptotic linearity:
\[
\Psi_{{\cal M}_{w}}(P_n^\star)-\Psi(P_0)=P_n D_{{\cal M}_0,P_0}+o_P(n^{-1/2}),\]
with influence curve the efficient influence curve of $\Psi_{{\cal M}_0}:{\cal M}\rightarrow\openr$ at $P_0$. Moreover, $D_{{\cal M}_0,P_0}$ equals the efficient influence curve of $\Psi:{\cal M}_0\rightarrow \openr$ that a priori assumes the model ${\cal M}_0$.
\end{theorem}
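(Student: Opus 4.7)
The plan is to execute the standard TMLE second-order expansion while carefully tracking the fact that the working model $\mathcal{M}_w$ is data-adaptive. The first identity is essentially algebra: from the efficient-score condition $P_n D_{{\cal M}_w, P_n^\star}=o_P(n^{-1/2})$ and the definition of $R_{{\cal M}_w}$, the decomposition follows by adding and subtracting $P_n D_{{\cal M}_w,P_n^\star}$. From there the argument splits into four pieces mirroring the four stated conditions: (i) bridge $\Psi_{{\cal M}_w}(P_0)$ to $\Psi(P_0)$, (ii) show the second-order remainder is $o_P(n^{-1/2})$, (iii) control the empirical-process term $(P_n-P_0)D_{{\cal M}_w,P_n^\star}$, and (iv) derive the limit distribution.

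The bridge step (i) is the crux. I would apply the same TMLE-style expansion to the oracle-model projection $\tilde P_{0,n}=\Pi_{{\cal M}_0}(P_{0,n})$ with $P_{0,n}=\Pi_{{\cal M}_w}(P_0)$, yielding
\[
\Psi_{{\cal M}_0}(\tilde P_{0,n})-\Psi_{{\cal M}_0}(P_0) = -P_0 D_{{\cal M}_0,\tilde P_{0,n}}+R_{{\cal M}_0}(\tilde P_{0,n},P_0).
\]
Since $\tilde P_{0,n}$ is a maximum-likelihood projection onto a model containing ${\cal M}_w$, any element of the working-model tangent space $T_{{\cal M}_w}(\tilde P_{0,n})$ integrates to zero against $\tilde P_{0,n}$; subtracting the $L^2(\tilde P_{0,n})$-projection $\tilde D_{{\cal M}_w,\tilde P_{0,n}}$ of $D_{{\cal M}_0,\tilde P_{0,n}}$ onto $T_{{\cal M}_w}(\tilde P_{0,n})$ rewrites the leading first-order term as $(\tilde P_{0,n}-P_0)\{D_{{\cal M}_0,\tilde P_{0,n}}-\tilde D_{{\cal M}_w,\tilde P_{0,n}}\}$, which is precisely the quantity assumed to be $o_P(n^{-1/2})$. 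Combined with $R_{{\cal M}_0}(\tilde P_{0,n},P_0)$ being second-order in $\tilde p_{0,n}-p_0$ and with the continuity hypothesis $\Psi(\tilde P_{0,n})-\Psi(P_{0,n})=o_P(n^{-1/2})$, this yields $\Psi_{{\cal M}_w}(P_0)-\Psi(P_0)=o_P(n^{-1/2})$, which can then be spliced into the first identity to replace $\Psi_{{\cal M}_w}(P_0)$ with $\Psi(P_0)$.

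Steps (ii)--(iv) are routine in the modern TMLE literature. The rate-of-convergence hypothesis makes $R_{{\cal M}_w}(P_n^\star,P_0)=o_P(n^{-1/2})$ because the exact remainder is second-order in $P_n^\star-P_0$. The Donsker-class hypothesis, together with $L^2(P_0)$-consistency $P_0\{D_{{\cal M}_w,P_n^\star}-D_{{\cal M}_w,P_0}\}^2\rightarrow_p 0$, permits invoking a standard equicontinuity lemma (e.g., Lemma 19.24 of van der Vaart, 1998) to replace $D_{{\cal M}_w,P_n^\star}$ by $D_{{\cal M}_w,P_0}$ in the empirical-process term at cost $o_P(n^{-1/2})$. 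The asymptotic-normality conclusion then follows from a conditional Lindeberg--Feller CLT applied to the (still random) $D_{{\cal M}_w,P_0}$ under the asymptotic-independence hypothesis, and when $D_{{\cal M}_w,P_0}$ converges in $L^2(P_0)$ to a fixed $D_{{\cal M}_0,P_0}$ the same argument yields ordinary asymptotic linearity, with the identification of $D_{{\cal M}_0,P_0}$ as the efficient gradient of $\Psi:{\cal M}_0\rightarrow\openr$ imported from \cite{adml}. The main obstacle is clearly the bridge step: verifying the oracle-approximation hypothesis in concrete applications requires showing that the data-adaptive HAL-based class $\mathcal{M}_w$ asymptotically nests ${\cal M}_0$, approximates $P_0$ at $n^{-1/4}$ rate, and produces a tangent space whose projection of $D_{{\cal M}_0,\tilde P_{0,n}}$ is sharp enough to kill the first-order remainder. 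That is where the bulk of the technical weight lies.
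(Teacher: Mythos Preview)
Your plan mirrors the paper's argument in Section~\ref{sec:theory} almost step for step: the algebraic first identity, the bridge $\Psi_{{\cal M}_w}(P_0)\to\Psi(P_0)$ via an exact-remainder expansion at the projected point, followed by the rate, Donsker, and CLT steps. One correction in your bridge step: the reason you may subtract $\tilde D_{{\cal M}_w,\cdot}$ is \emph{not} that it integrates to zero against $\tilde P_{0,n}$ (that is automatic for any tangent element), but that it integrates to zero against $P_0$; the paper obtains this from the first-order condition of the Kullback--Leibler projection, namely $P_{0,n}=\arg\max_{P\in{\cal M}_w}P_0\log p$ implies $P_0 s=0$ for every score $s\in T_{{\cal M}_w}(P_{0,n})$, which is what turns $-P_0 D_{{\cal M}_0,P_{0,n}}$ into $(P_{0,n}-P_0)\{D_{{\cal M}_0,P_{0,n}}-\tilde D_{{\cal M}_w,P_{0,n}}\}$. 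The paper carries out this derivation at $P_{0,n}$ (assuming ${\cal M}_w\subset{\cal M}_0$) and only afterwards weakens to $\tilde P_{0,n}$ by absorbing $\Psi(\tilde P_{0,n})-\Psi(P_{0,n})$ into the assumption; your version jumps directly to $\tilde P_{0,n}$, where the KL first-order condition does not apply as stated, so you should either follow the paper's two-stage route or explicitly add $P_0\tilde D_{{\cal M}_w,\tilde P_{0,n}}=o_P(n^{-1/2})$ as part of the hypothesis.
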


\section{Implementation of A-TMLE}\label{sec:implementation}
In this section, we describe the implementation of A-TMLE for data augmentation and discuss practical strategies for data-adaptively learning the working models. Broadly, the objective is to construct two estimators, one for the pooled-ATE estimand and the other for the bias estimand. The pooled-ATE is estimated by combining the RCT data with the external RWD. However, direct pooling is likely to introduce bias, requiring a subsequent step to estimate and correct this bias. The bias can be expressed as a weighted combination of the conditional average RCT-enrollment effects for the treatment and control arms. Estimating the bias estimand parallels the estimation of an ATE, with trial enrollment acting as the treatment variable. The final estimate of the target parameter is then obtained by subtracting the bias estimate from the pooled-ATE estimate. The complete procedure for implementing the A-TMLE algorithm is detailed in Algorithm \ref{alg:atmle}.
\begin{algorithm}
\caption{Adaptive-TMLE}
\label{alg:atmle}
\begin{algorithmic}[1]
\State Construct initial estimator $\tilde{\Psi}_{\mathcal{M}_{A,w}}(P_n)$ of $\tilde{\Psi}_{\mathcal{M}_{A,w}}(P_0)$:
\Indent
\State Compute nuisance estimate $g_n$ of $g_0(1\mid W)\equiv P_0(A=1\mid W)$;
\State Compute nuisance estimate $\theta_n$ of $\theta_0(W)\equiv E(Y\mid W)$;
\State Define pseudo outcome $\tilde{Y} \equiv (Y-\theta_n)/(A-g_n)$ and weight $\tilde{\mathcal{W}}\equiv (A-g_n)^2$;
\State Fit a relaxed-HAL using $W$ as covariates, $\tilde{Y}$ as outcome, and $\tilde{\mathcal{W}}$ as weight;
\State Get initial estimate $\tilde{\Psi}_{\mathcal{M}_{A,w}}(P_n)= 1/n\sum_{i=1}^n\tau_{A,\beta_n^\star}(W_i)= \tilde{\Psi}_{\mathcal{M}_{A,w}}(P_n^\star)$;
\EndIndent
\State Construct initial estimator $\Psi_{\mathcal{M}_{S,w}}^\#(P_n)$ of $\Psi_{\mathcal{M}_{S,w}}^\#(P_0)$:
\Indent
\State Obtain nuisance estimate $\Pi_n$ of $\Pi_0(1\mid W,A)\equiv P(S=1\mid W,A)$;
\State Obtain nuisance estimate $\bar{Q}_n$ of $\bar{Q}_0(W,A)\equiv E(Y\mid W,A)$;
\State Compute pseudo outcome $Y^\#\equiv (Y-\bar{Q}_n)/(S-\Pi_n)$ and weight $\mathcal{W}^\#\equiv (S-\Pi_n)^2$;
\State Fit a relaxed-HAL using $(W,A)$ as covariates, $Y^\#$ as outcome, and $\mathcal{W}^\#$ as weight;
\State Get initial estimate $\Psi_{\mathcal{M}_{S,w}}^\#(P_n)= 1/n\sum_{i=1}^n[\Pi_n(0\mid W_i,0)\tau_{S,\beta_n^\star}(W_i,0)-\Pi_n(0\mid W_i,1)\tau_{S,\beta_n^\star}(W_i,1)]$;
\EndIndent
\State Additional targeting for $\Pi$-component:
\Indent
\State Compute clever covariate $C(g_n,\beta_n^\star)=A/g_n(1\mid W)\tau_{S,\beta_n^\star}(W,1)-(1-A)/g_n(0\mid W)\tau_{S,\beta_n^\star}(W,0)$; 
\State Use $C(g_n,\beta_n^\star)$ to perform a TMLE targeting and obtain an updated estimate $\Pi^\star_n$ of $\Pi_0$;
\State Get TMLE $\Psi_{\mathcal{M}_{S,w}}^{\#}(P_n^\star)\equiv 1/n\sum_{i=1}^n[\Pi_n^\star(0\mid W_i,0)\tau_{S,\beta_n^\star}(W_i,0)-\Pi_n^\star(0\mid W_i,1)\tau_{S,\beta_n^\star}(W_i,1)]$;
\EndIndent
\State Compute the final estimate $\Psi_{\mathcal{M}_{w}}(P_n^\star)=\tilde{\Psi}_{\mathcal{M}_{A,w}}(P_n^\star)-\Psi^{\#}_{{\cal M}_{S,w}}(P_n^\star)$;
\State Compute the 95\% confidence interval given by $\Psi_{\mathcal{M}_{w}}(P_n^\star)\pm 1.96\sqrt{P_n(D_{\mathcal{M}_{A,w},\tilde{\Psi}}-D_{\mathcal{M}_{S,w},\Psi^\#})^2/n}$.
\end{algorithmic}
\end{algorithm}

We make the following remarks regarding the algorithm. To estimate nuisance parameters, one might use either highly adaptive lasso (HAL) \cite{benkeser_hal_2016,hejazi_hal_2020}, or super learner \cite{vdl_sl_2007}. Super learner, in particular, allows the integration of a rich library of flexible machine learning algorithms. For a practical guidance on specifying a super learner, we recommend \cite{phillips_practical_2023}. For learning the two working models, we suggest using relaxed-HAL with the loss functions introduced in Sections \ref{sec:psi_tilde} and \ref{sec:psi_pound}. The pseudo-outcome and weight transformations defined in Algorithm \ref{alg:atmle} result from rewriting the loss function. Relaxed-HAL involves initially fitting a HAL where the $L_1$-norm of the coefficients is penalized. Cross-validation is used to select the optimal $L_1$-norm. After that, an ordinary least squares (OLS) regression is applied to the spline basis functions with non-zero coefficients. Because relaxed-HAL is a maximum likelihood estimator (MLE), the empirical mean of the $\beta$-component of the canonical gradient is automatically solved. This eliminates the need to further target the initial estimator $\tilde{\Psi}_{\mathcal{M}_{A,w}}$. Notably, HAL with 0th-order spline basis functions converges in loss-based dissimilarity at a rate of $n^{-1/3}(\log n)^{2(d-1)/3}$ \cite{bibaut_fast_2019}, only assuming that the target functions are \textit{càdlàg} (right-continuous with left limits) and have finite sectional variation norm. Importantly, this rate depends on the dimensionality $d$ only via a power of the $\log n$-factor thus is essentially dimension-free. This allows the learned working model to approximate the truth well, enabling valid inference. Moreover, if the target functions are smooth, one can use $k$-th order spline HAL-MLEs that converge at even better rates $n^{-k^\star/(2k^\star+1)},k^\star=k+1$, up till a $\log n$-factor: for example, the first-order spline HAL-MLE converges at the dimension-free rate $n^{-2/5}$ up till a $\log n$-factor.

In Appendix D, we also discussed several alternative approaches for constructing data-adaptive working models, including learning the working model on independent data sets, and using deep learning architectures to generate dimension reductions of data sets. To complement the methods discussed in Appendix D, we also proposed and analyzed a cross-validated version of A-TMLE in Appendix C to address concerns regarding potential overfitting of the working models as a result of being too data-adaptive. In addition, as suggested in the Introduction, opting for a slightly larger model than indicated by cross-validation could reduce bias. Techniques like undersmoothing as described in \cite{vdl_efficient_2023} are worth considering. Furthermore, to ensure robustness in model selection, one might enforce a minimal working model based on domain knowledge, specifying essential covariates that must be included in the working model. In the context of HAL, this would mean unpenalizing certain covariates or basis functions. This option is available in standard lasso software like the `glmnet' R package \cite{friedman_regularization_2010}.

To estimate the nuisance parameters, one can use a super learner with a library of flexible machine learning algorithms. A discrete super learner, in particular, acts as an ``oracle selector,'' meaning that asymptotically it selects the algorithm that performs best within the library. This property holds even when the number of learners grows polynomially with the sample size, enabling the inclusion of a rich and diverse library of learners. If the super learner library includes HAL as a candidate learner, using it to estimate the nuisance parameters guarantees efficiency due to HAL's fast rate of convergence. We note that proving efficiency also requires satisfying a Donsker class condition. This condition can often be bypassed by using cross-fitting. However, since HAL is an empirical risk minimizer within a cádlág function class, which is known to be Donsker, this condition is automatically satisfied when using HAL. In our simulations and real-data application, we applied cross-fitting when estimating the nuisance parameters. Intuitively, cross-fitting the nuisance parameters prevents overfitting, ensuring sufficient signal remains to accurately learn the working models for the conditional average treatment effect and the conditional average RCT-enrollment effect.

\section{Simulation Studies}\label{sec:simulation}
An ideal estimator in our problem setting should have several important operating characteristics including valid type I error control, nominal confidence interval coverage, and finite sample efficiency gain (compared to using RCT data only). We conducted two sets of simulations to evaluate the performance of A-TMLE in these aspects. These simulations covered a range of scenarios, including simple to complex bias functions, varying sample sizes of external data relative to the RCT, and practical violations of trial enrollment positivity. Descriptions of the five estimators we assessed are shown in Table \ref{tab:estimators}.
\begin{table}[h]
\centering
\resizebox{12cm}{!}{
\begin{tabular}{>{\raggedright\arraybackslash}p{3cm}>{\raggedright\arraybackslash}p{9cm}}
\toprule
\textbf{Estimator} & \textbf{Description} \\ \midrule
A-TMLE & Our proposed estimator, applying the A-TMLE framework to estimate both the pooled-ATE $\tilde{\Psi}$ and the bias $\Psi^{\#}$. \\ \midrule
ES-CVTMLE & An estimator for integrating RCT with RWD within the TMLE framework, data-adaptively chooses between RCT-only or pooled-ATE to optimize bias-variance trade-off \cite{escvtmle}.\\ \midrule
PROCOVA & A prognostic score covariate-adjustment method \cite{schuler_increasing_2022}. \\ \midrule
TMLE & A standard TMLE for the target parameter. \\ \midrule
RCT-only & A standard TMLE for ATE parameter using RCT data alone, serving as the best estimator in the absence of external data.\\
\bottomrule
\end{tabular}
}
\caption{Estimators assessed in simulation studies and their descriptions.}
\label{tab:estimators}
\end{table}
Our primary metrics for evaluating the performance of the estimators include mean-squared-error (MSE), relative MSE, coverage and average width of 95\% confidence intervals. All data generating processes are detailed in Appendix B.

Figure \ref{fig:simple} presents results for scenarios with simple bias functions and varying external data sample sizes relative to the RCT. In scenarios (a) and (b), the bias functions (i.e., the conditional effect of the study indicator $S$ on the outcome $Y$) have simple main-term linear forms. Directly pooling the two data sources would yield a biased estimate of the true ATE due to the presence of the bias function. In the RCT, the probability of assignment to the treatment group is 0.67, which is common in settings where a drug has been approved and the focus is on assessing safety or secondary endpoints (also known as ``asymmetric allocation''). The external dataset comprises both treatment and control arms patients. Treatment assignment in the external data is determined by baseline patient characteristics, as would be the case in routine clinical practice. To reflect different stages of a drug’s post-market lifecycle and varying availability of external real-world data, we vary the external data sample sizes from one to three times that of the RCT.
\begin{figure}[h!]
\centering
\includegraphics[width=0.75\textwidth]{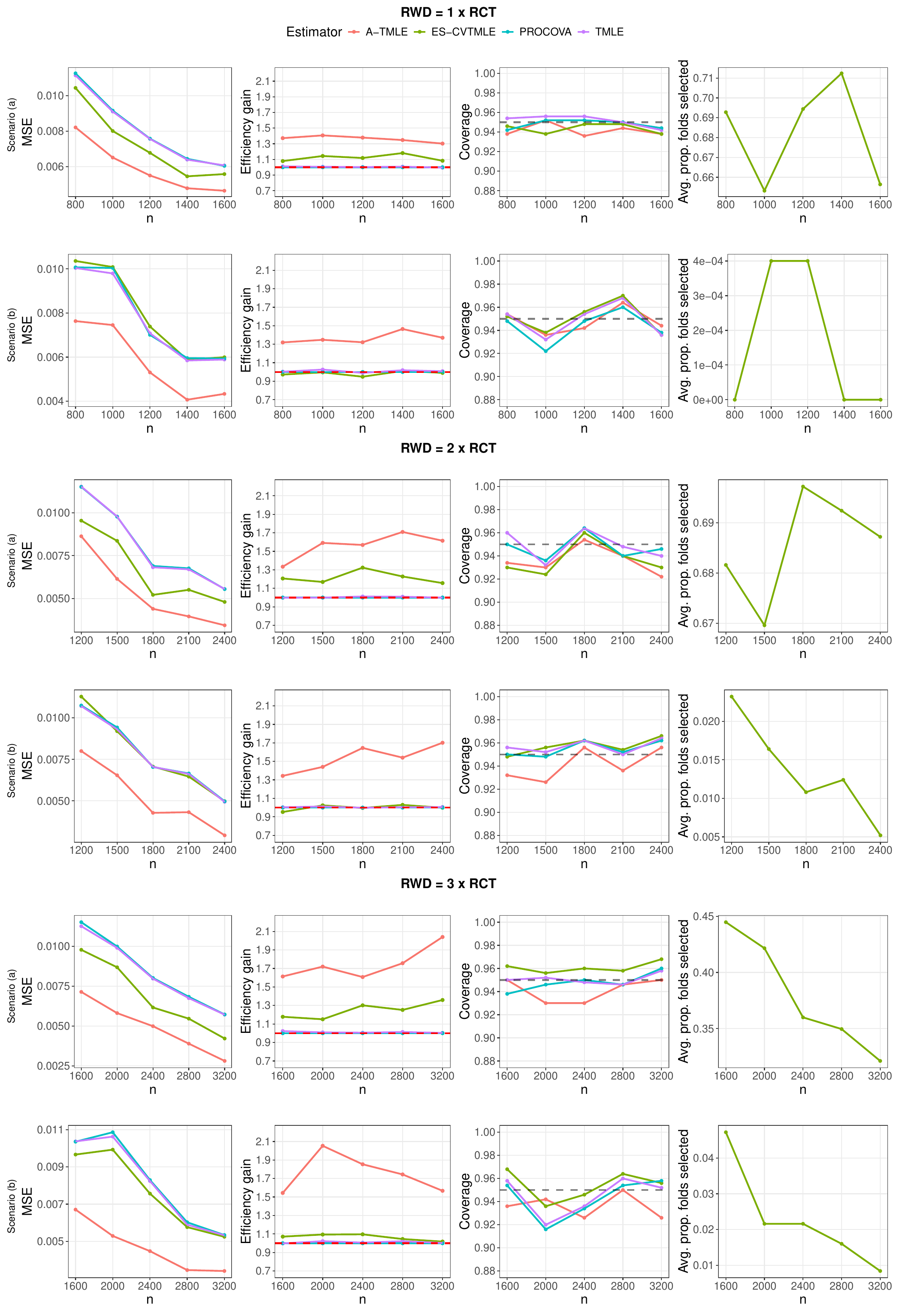}
\caption{Comparison of MSE, efficiency gain, and 95\% confidence interval coverage for A-TMLE, ES-CVTMLE, PROCOVA, and TMLE estimators across increasing sample sizes of both the RCT and external data. Efficiency gain is calculated as the ratio of the MSE of the reference estimator (red dashed line, representing the RCT-only estimator) to the MSE of other estimators. The plots in the right-most column show the average proportion of cross-validation folds in which the ES-CVTMLE estimator pooled the external data, averaged across Monte Carlo runs for each sample size. Scenarios (a) and (b) represent two different data-generating-processes involving main-term parametric models for the conditional average RCT-enrollment effect.}
\label{fig:simple}
\end{figure}
We observe that the A-TMLE estimator consistently achieves the lowest MSE compared to alternative estimators. As the external data sample size increases from one to three times that of the RCT, the efficiency gain relative to a regular TMLE applied to the RCT data alone grows from approximately 1.1 to 2 times. All estimators show roughly 95\% confidence interval coverage. The ES-CVTMLE method partitions the data into $V$ folds. For each fold, it estimates the MSE on the larger $(V-1)$-fold training set, then data-adaptively selects either the $\tilde{\Psi}$ or $\Psi$ based on the magnitude of MSE and estimates the chosen target parameter on the held-out fold. The final estimate is obtained as the average across folds. For ES-CVTMLE, we plotted the average proportion of folds that pooled the external data across Monte Carlo runs in the last column of Figure \ref{fig:simple}. 

In scenario (a), the bias introduced by pooling external data is relatively small compared to scenario (b). This smaller bias allows ES-CVTMLE to achieve moderate efficiency gains over regular TMLE, with the pooled parameter selected in about 30-70\% of its cross-validation folds on average. This enables ES-CVTMLE to leverage external data for efficiency improvements. However, in scenario (b), the bias magnitude introduced by the external data is significantly larger, nearly eliminating ES-CVTMLE's efficiency gain. Notice that ES-CVTMLE rejects external data in most cross-validation folds due to the large estimated bias. In contrast, A-TMLE effectively learns the simple bias function in this scenario, achieving efficiency gains while maintaining valid type I error control. Recall that in the Introduction section, we mentioned that an efficient estimator for the target parameter $\Psi$ may still offer no gain in the presence of external data, this is evident in the relative MSE plots with a regular TMLE that goes after $\Psi$ matching the MSE of a regular TMLE applied to the RCT data alone (red dashed line). Therefore, even theoretically optimal estimators may not yield efficiency gains from external data integration in finite sample.
\begin{figure}[h!]
\centering
\includegraphics[width=0.75\textwidth]{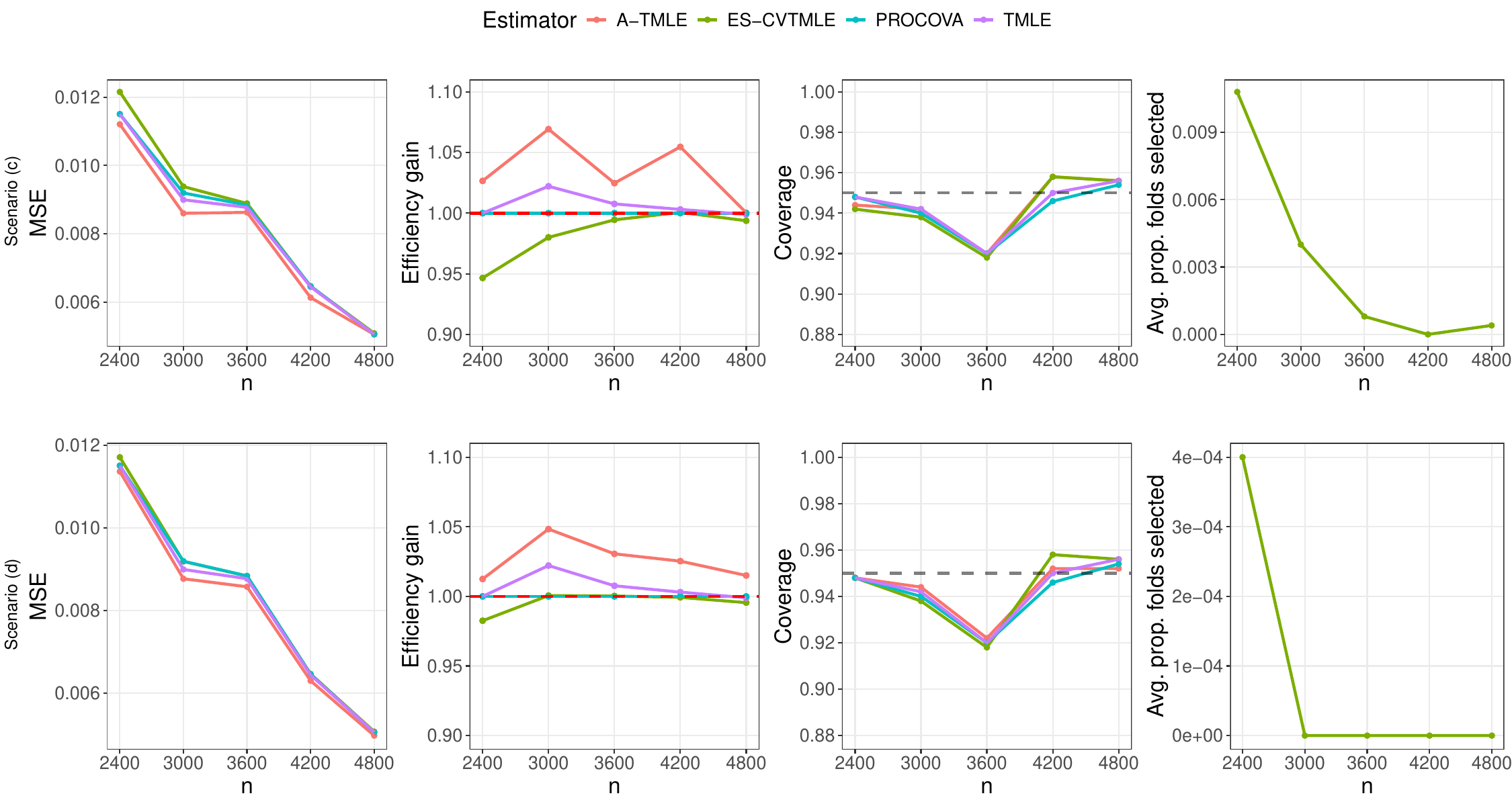}
\caption{Comparing the MSE, relative MSE, 95\% confidence interval coverage of A-TMLE, ES-CVTMLE, and TMLE estimator for increasing sample sizes of both the RCT and external data. The reference (red dashed line) in the relative MSE plots are the MSEs of the RCT-only estimator. The plots in the right-most column show the average proportion of cross-validation folds the ES-CVTMLE estimator pooled the external data. Scenarios (c) and (d) represent two different data-generating-processes where the conditional average RCT-enrollment effect involves indicator functions, higher-order polynomials, and higher-way interaction terms.}
\label{fig:complex}
\end{figure}

In some practical settings, the bias function may be highly non-linear and may not be well-approximated by a linear combination of main terms. Figure \ref{fig:complex} shows results for such scenarios. In particular, the true bias terms in scenarios (c) and (d) involve indicator functions, higher-order polynomials, and higher-way interactions. The data generating processes are provided in Appendix B. The use of more flexible nonparametric learning algorithms for estimating the bias function is necessary in these settings. We utilize HAL with a rich set of first-order spline basis functions to data-adaptively learn the bias working model. In scenarios with complex bias, ES-CVTMLE fails to achieve any efficiency gain, and at times, it underperforms by having a higher MSE compared to that of a regular TMLE on the RCT data alone. This may stem from not estimating the bias accurately or as a result of its cross-validation method not fully leveraging the data due to the requirements for bias and variance estimation for the selection criteria. On the other hand, A-TMLE continues to offer efficiency gain while having close to 95\% coverage. However, the efficiency gain is less compared to scenarios (a) and (b) with simpler bias functions, which is expected given the increased complexity of the bias. 

\subsection{Practical violations of trial enrollment positivity}
The causal interpretation of the target parameter $\Psi$ requires that the external population either coincides with or is a subset of the trial-eligible population \cite{dahabreh_extending_2019}. In Section \ref{sec:formulation}, we further emphasized that to achieve maximal efficiency in estimating the target parameter $\Psi$, the dependence of trial enrollment on patient baseline covariates should be made as weak as possible by design when selecting external patients. However, the positivity assumption (\ref{A3}), which requires that every patient has a non-zero probability of being enrolled in the trial, may still be violated practically. To evaluate the performance of A-TMLE under such conditions, we designed a simulation setup with trial enrollment near positivity violations, making the covariate distributions between the RCT and external data differ significantly. Specifically, we considered two scenarios where the smallest trial enrollment probability $P_0(S=1\mid W)$ was approximately 0.05 and 0.003, respectively.
\begin{figure}[h!]
\centering
\includegraphics[width=0.8\linewidth]{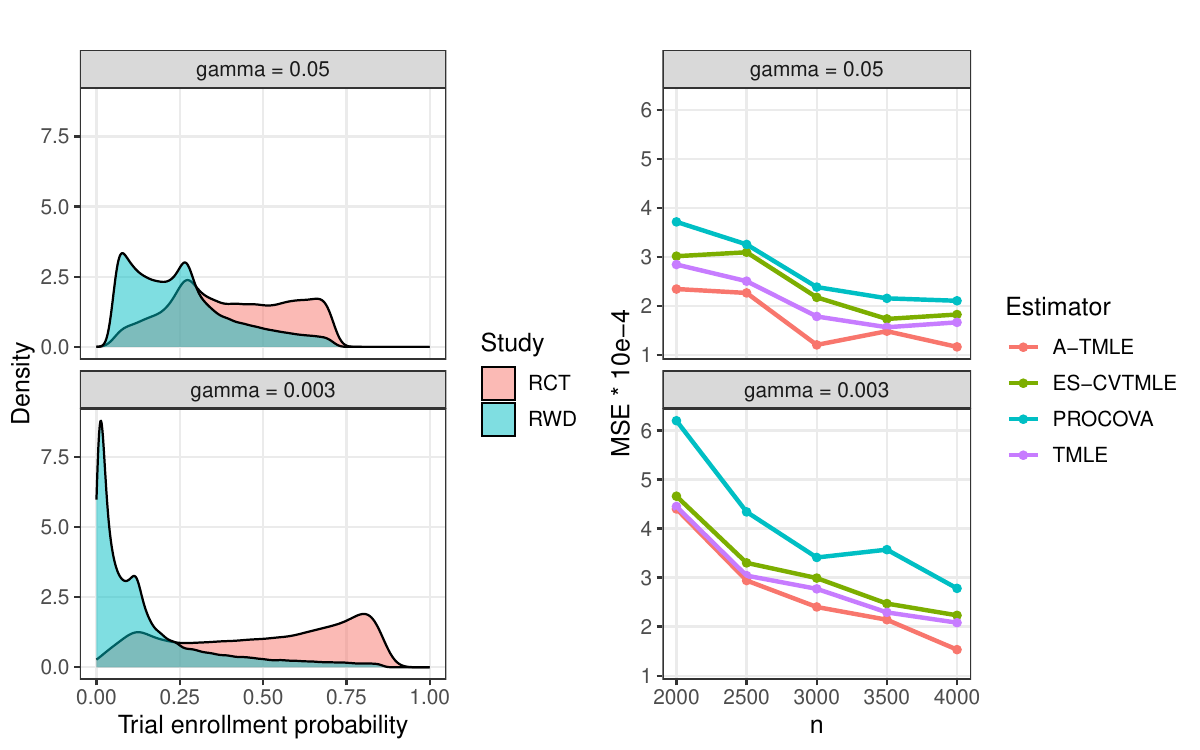}
\caption{Settings with trial enrollment positivity violations. The two subplots on the left display the densities of trial enrollment probabilities. The parameter $\gamma$ is defined as the infimum of the trial enrollment probability, $P_0(S=1\mid W)$.}
\label{fig:positivity}
\end{figure}
As shown in Figure \ref{fig:positivity}, A-TMLE remains to achieve the largest efficiency gain with the smallest MSE compared to other estimators. This is likely due to its avoidance of inverse weighting by the trial enrollment probability, as discussed in Section \ref{sec:psi_pound}, making it relatively more stable in near positivity violation settings. Interestingly, in this simulation setting, ES-CVTMLE never selects the pooled-ATE estimand but still achieves a moderate efficiency gain compared to an efficient estimator applied to the RCT data alone. This occurs because when ES-CVTMLE rejects external data, it estimates the parameter $\Psi$, which quantifies the treatment effect in the trial while averaging over the covariate distribution in the pooled data. In this scenario, because certain patient subgroups underrepresented in the trial are more prevalent in the external data, such increased representation allows ES-CVTMLE to achieve moderate efficiency gain by averaging over a more representative sample of the target population.

\section{Data application}\label{sec:data_app}
We applied our method to augment the DEVOTE trial with external data from Optum’s de-identified Clinformatics database. The DEVOTE trial is a double-blind, treat-to-target, event-driven study designed to evaluate the cardiovascular safety of degludec, an ultralong-acting, once-daily basal insulin approved by the FDA for various patient populations with diabetes \cite{marso_efficacy_2017}. Its primary objective was to determine whether degludec posed no greater cardiovascular risk than glargine, using a pre-specified noninferiority hazard ratio margin of 1.3 (allowing up to 30\% excess risk). The treatment arm ($n=3,818$) included patients receiving insulin degludec, while the comparator arm ($n=3,817$) consisted of patients receiving insulin glargine. The primary endpoint was a composite outcome for major adverse cardiac events (MACE), which included death from cardiovascular causes, nonfatal myocardial infarction, and nonfatal stroke. Our scientific question of interest focuses on the 365-, 540-, and 730-day (1-, 1.5-, 2-year) risk differences in MACE between the treatment and comparator arms.

Augmenting the trial with external data to improve efficiency in this context is especially compelling for several reasons. First, there is a hypothesis that insulin degludec may reduce the risk of MACE compared to insulin glargine. However, achieving sufficient statistical power to test for superiority often requires much larger cohorts and longer follow-up periods, typically three to five years, to observe relatively rare events such as MACE \cite{regier_more_2016}. Second, once drugs like insulin degludec meet the pre-specified noninferiority margin and receive FDA approval, substantial amounts of real-world data often become available. Such data may provide evidence of cardiovascular safety advantages of the given drug over its alternatives. In light of these findings, continuing to keep patients in the comparator arm without allowing treatment switches to a safer option might raise ethical concerns, particularly when the sole purpose for adhering to the assigned treatment status is to achieve sufficient power for a superiority test \cite{lin_propensity_2018}. These considerations highlight the potential benefits of integrating external data with RCTs. By augmenting the DEVOTE trial with real-world data, researchers may enhance statistical power, explore opportunities for establishing treatment superiority, and improve the generalizability of the trial findings \cite{dahabreh_extending_2019}.

Challenges in augmenting the DEVOTE trial with external data include inconsistencies in outcome measurement across data sources, non-concurrency, potential unmeasured confounding in the external data, and other differences between trial settings and routine care. For example, in the DEVOTE trial, adverse events were closely monitored and confirmed by an event-adjudication committee, whereas in the external data, some events may have been missed or inconsistently recorded. The trial concluded in 2017, but we extended the external data sampling timeframe to 2021 to ensure a decent number of patients receiving insulin degludec in the external cohort. To establish comparable cohorts between the RCT and observational data we emulate the DEVOTE trial following the framework detailed in \cite{hernan_using_2016}. We translated and applied the trials inclusion and exclusion criteria and used the initiation of insulin degludec or insulin glargine as the study start date. Although efforts have been made to address potential sources of bias, concerns about potential residual heterogeneity still remain, such as systematic differences in adherence between the controlled trial setting and routine clinical care. These differences can lead to non-comparable exposure definitions, violating the consistency assumption, which further underscores the necessity of applying our method that does not rely on assumptions beyond those that are known to likely hold true based on the design of the RCT. Censoring events were handled using the inverse probability of censoring weights technique documented in Appendix B.
\begin{figure}[h!]
\centering
\includegraphics[width=1\linewidth]{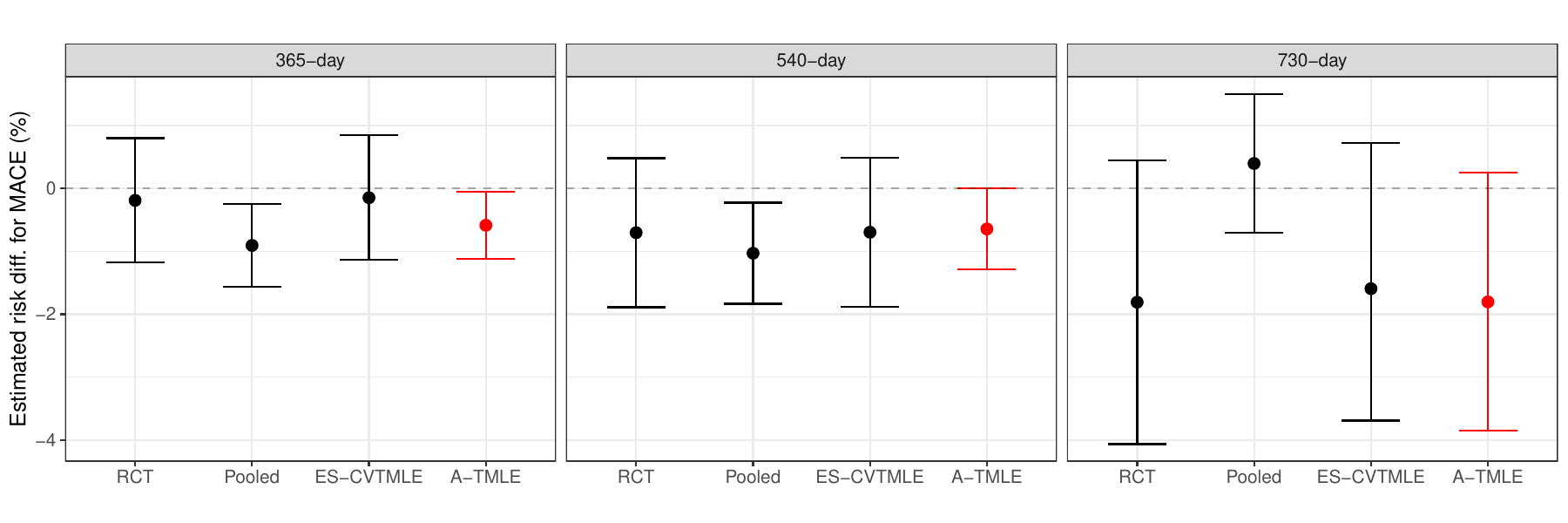}
\caption{Estimated 365-, 540-, and 730-day risk differences (\%) of MACE between the insulin degludec and insulin glargine groups, with corresponding 95\% confidence intervals. `RCT' refers to results from an augmented inverse propensity score weighted estimator applied to DEVOTE trial data only. `Pooled' refers to results from the same estimator applied to combined data from the DEVOTE trial and external Optum data. `ES-CVTMLE' and `A-TMLE' refer to results from the ES-CVTMLE and A-TMLE estimator applied to the pooled data, respectively.}
\label{fig:devote_optum}
\end{figure}
\begin{table}[h!]
\centering
\resizebox{\linewidth}{!}{
\begin{tabular}{cc|ccc}
\toprule
 & & \multicolumn{3}{c}{Estimated risk difference (\%) of MACE with 95\% CIs} \\
\midrule
Estimator & Study design & 365-day & 540-day & 730-day \\
\midrule
AIPW & DEVOTE & -0.190 (-1.177, 0.798) & -0.704 (-1.888, 0.481) & -1.809 (-4.065, 0.447) \\ 
AIPW & DEVOTE + Optum & -0.906 (-1.565, -0.248)$^\star$ & -1.031 (-1.834, -0.227)$^\star$ & 0.396 (-0.705, 1.497) \\
ES-CVTMLE & DEVOTE + Optum & -0.148 (-1.137, 0.842) & -0.697 (-1.883, 0.488) & -1.594 (-3.692, 0.718) \\ 
A-TMLE & DEVOTE + Optum & -0.586 (-1.118, -0.054)$^\star$ & -0.645 (-1.288, -0.001)$^\star$ & -1.804 (-3.855, 0.246) \\ 
\bottomrule
\end{tabular}
}
\caption{Estimated risk differences of MACE between the insulin degludec and insulin glargine groups, with corresponding 95\% confidence intervals. Note that the AIPW estimator applied to the pooled data may be biased. $^\star$ denotes statistical significance under $\alpha=0.05$ for a two-sided test.}
\label{tab:devote_optum}
\end{table}

Figure \ref{fig:devote_optum} shows the estimated 1-, 1.5-, and 2-year risk differences of MACE using different estimation strategies, with corresponding values presented in Table \ref{tab:devote_optum}. Specifically, the label `RCT' refers to estimates obtained using augmented inverse propensity weighted (AIPW) estimators applied to the RCT data. The label `Pooled' corresponds to AIPW estimates of pooled-ATE estimand $\tilde{\Psi}$, which, as discussed in Section \ref{sec:formulation}, may be biased had the mean exchangeability over studies assumption been violated. In fact, we do observe some evidence suggesting that this might be the case, since the `Pooled' effect estimate shows the opposite direction to the RCT estimate for 730-day. Even though the confidence interval overlaps between the two, large deviations from the RCT only estimate are still concerning because in this case, we expect RCT to give unbiased estimate, so the point estimates should be close. This further highlights the need to use estimators including ES-CVTMLE and A-TMLE which do not rely on those assumptions. As one could see, the ES-CVTMLE and A-TMLE point estimates are more aligned with the RCT estimate. Comparing the two, A-TMLE consistently produces narrower confidence intervals. For the 1- and 1.5-year risk estimates, A-TMLE provides statistically significant results supporting the superiority of insulin degludec in reducing MACE risk. The larger confidence intervals at the 2-year mark may be attributed to a higher number of censored patients, leading to increased variability in the estimates. We also provide the values for the point estimates and corresponding 95\% confidence intervals in Table \ref{tab:devote_optum}.

\section{Discussion}\label{sec:discussion}
In this article, we proposed an estimator for the average treatment effect using RCT data combined with RWD. We considered a target parameter that fully respects the RCT as the gold standard for treatment effect estimation without relying on extra identification assumptions beyond those that are known to be true for RCTs. The only additional assumption is that every individual in the pooled population must have a non-zero probability of participating in the RCT, a condition typically could be made more plausible during the sampling of external real-world patients. Despite focusing on a conservative target parameter for which an efficient estimator might still lack statistical power, our proposed A-TMLE demonstrated potential efficiency gains. These gains are primarily driven by the complexity of the bias working model rather than the bias magnitude, which is a desirable property, particularly in scenarios where the bias is expected to be large.

From a theoretical perspective, although A-TMLE gave up some regularity along perturbation paths of the data distribution that fall outside the learned working model, the oracle model, defined as the model that contains the true data distribution $P_0$, includes only relevant perturbations of the data distribution, so such loss of regularity is generally not a problem. Nonetheless, since the data-adaptive working model approximates the oracle model, going for a larger working model than what cross-validation suggests might result in finite sample bias reductions. Therefore, in practice, one may consider strategies including undersmoothing \cite{vdl_efficient_2023} or enforcing a minimal working model, such as a main-term only minimal working model, to further protect against finite sample bias. In this manner, the family of A-TMLEs is rich enough to determine a most sensible trade-off between regularity (i.e., possible bias) and efficiency gain.

We highlight some possible extensions of A-TMLE. First, since we do not impose extra assumptions, virtually any external data, even those with an outcome defined slightly differently from the RCT, could be utilized. This approach could extend to scenarios involving surrogate outcomes, where the difference in the outcomes might be adequately approximated as a function of covariates and treatment. Additionally, A-TMLE could be useful when researchers wish to augment a well-designed observational study, where the desired causal effect could be identified, with other observational data potentially subject to unmeasured confounding. Extending this approach to handle multiple external data sources could be achieved either by stratifying based on the choice of external study to learn a separate bias function for each external data source, or by pooling across the external studies to jointly learn a bias function. 

One limitation of A-TMLE is that, although theoretically the efficiency gain is driven by the complexity of the bias working model, it remains unclear in practice how much efficiency gain one should expect. For instance, our simulations have shown situations where A-TMLE is almost twice as efficient, yet other cases where the efficiency gain is small. Future research could focus on characterizing the expected efficiency gain in finite samples and identifying conditions under which substantial gains are most likely. Another challenge is determining how to enforce a minimal working model without compromising efficiency gains. Generally, adding more terms to the working model reduces bias but may limit potential efficiency gain. However, without such a minimal working model, one runs the risk of introducing bias if cross-validation overly favors a sparse model when the true model is actually complex. Addressing this may involve implementing undersmoothing methods that find the right trade-off between finite sample bias and efficiency. Even with robust estimation mechanisms in place, conducting a sensitivity analysis remains important to provide an additional layer of confidence. Future work could explore the possibility of integrating, for example, negative control outcomes (or treatments), as part of A-TMLE.

\section*{Acknowledgement}
The authors thank members of the Joint Initiative for Causal Inference (JICI) working group for insightful comments and discussions. We also extend our gratitude to the two anonymous reviewers for their valuable suggestions, which greatly improved the manuscript. This work was funded by Novo Nordisk.

\newpage
\printbibliography

\newpage
\newpage
\section*{Appendix A: Proof of Lemmas}
\begin{proof}[Proof of Lemma \ref{lem:gradient_psi_and_psi_2}]
First consider the case that we do not assume $S\perp W$ but possibly have a model on $g(A\mid S,W)$. Then the efficiency bound corresponds to the one in the nonparametric model due to the target parameter only depending on $P_W$ and $Q$, and that the tangent space of $g$ is orthogonal to the tangent space of these nuisance parameters. Therefore, it suffices to derive the influence curve of the empirical plug-in estimator acting as if $W$ is discrete, which then also establishes the general efficient influence curve by an approximation argument. The influence curve of the empirical plug-in estimator $\Psi(P_n)$ is straightforward to derive by using that the canonical gradient/influence curve of the empirical estimate of $E(Y\mid W=w,A=a,S=s)$ is given by  $E(Y\mid W=w,A=a,S=1)=I(S=1,A=a,W=w)/p(S=1,W=w,A=a)(Y-Q(S,W,A))$, while we use the empirical measure of $W_1,\ldots,W_n$ as estimator of $P_W$.  Consider now the statistical model that also assumes that $S\perp W$. The only way this model assumption can affect the estimator is that it might yield a different estimator of $P_W$ than the empirical measure. However, the empirical distribution of $W$ is still an efficient estimator of $P_W$, even when $S$ is independent of $W$. In other words, if we have two samples of $W$ from the same population, then the efficient estimator of the marginal distribution of $W$  is still the empirical measure of the combined sample on $W$.
The proof for $\Psi_2 $ is similar. If we assume $S\perp W$, then the canonical gradient for $\Psi_2$ changes a little due to an efficient estimator of  $P_{W\mid S=1}$ should now use the empirical measure of $W$ for the combined sample, instead of using the empirical measure of $W$, given $S=1$.
\end{proof}

\begin{proof}[Proof of Lemma \ref{lem:psi_tilde_loss}]
We have $\bar{Q}_P(W,A)=\tilde{\theta}_P(W)+A\tau_{A,P}(W)$, where $\tilde{\theta}_P(W)=E_P(Y\mid A=0,W)$ and $\tau_{A,P}(W)=E_P(Y\mid A=1,W)-E_P(Y\mid A=0,W)$. We have to compute a projection of $\bar{Q}_P$ onto the linear space $\tilde{\theta}(W)+A\sum_j\beta(j)\phi_j(W)$. We can write 
\[
\tilde{\theta}(W)+A\sum_j\beta(j)\phi_j(W)=\left(\tilde{\theta}(W)+g(1\mid W)\sum_j \beta(j)\phi_j(W)\right)+(A-g(1\mid W))\sum_j\beta(j)\phi_j(W).\]
Thus, the linear space on which we are projecting is an orthogonal sum of $L^2(P_W)$ and the linear span of $\{(A-g(1\mid W))\phi_j(W):j\}$.
In $L^2(P)$, functions of $W$ are orthogonal to the linear span of $\{(A-g(1\mid W))\phi_j(W):j\}$.
Therefore, the projection of $\bar{Q}_P$ on this orthogonal sum space  is given by $\theta_P(W)=E(\bar{Q}_P\mid W)=E_P(Y\mid W)$ plus the projection of $\bar{Q}_P=\tilde{\theta}_P+A \tau_{A,P}$ onto $\{(A-g(1\mid W))\phi_j(W):j\}$.
The projection of $\tilde{\theta}_P$ on this linear span equals zero since a function of $W$ is orthogonal to $(A-g(1\mid W))\phi_j(W)$. The projection of $A\tau_{A,P}$ is given by $(A-g(1\mid W))\sum_j \beta_{P}(j)\phi_j$ with  $\beta_P=\arg\min_{\beta}E_P(A\tau_{A,P}(W)-(A-g(1\mid W))\sum_j\beta(j)\phi_j(W))^2$.
We can write $A\tau_{A,P}=(A-g(1\mid W))\tau_{A,P}(W)+g(1\mid W)\tau_{A,P}(W)$. So 
$\beta_P=\arg\min_{\beta}E_P (A-g(1\mid W))^2(\tau_{A,P}(W)-\sum_j \beta(j)\phi_j(W))^2$, which equals
$\arg\min_{\beta}E_P g_P(1-g_P)(1\mid W)(\tau_{A,P}(W)-\sum_j\beta(j)\phi_j(W))^2$. This proves the lemma.
\end{proof}

\begin{proof}[Proof of Lemma \ref{lem:psi_tilde_beta_gradient}]
We want to derive the canonical gradient of $\beta_P$ defined on a nonparametric model.
Our starting point is that
\[
\beta_P=\arg\min_{\beta}E_P\left(Y-(A-g_P(1\mid W))\sum_j\beta(j)\phi_j(W)\right)^2.\]
This shows that $\beta(P)$ solves the equation
\[
U(P,g_P,\beta)=E_P (A-g_P(1\mid W)){\bf \phi}(W)\left(Y-(A-g_P(1\mid W))\sum_j \beta(j)\phi_j(W)\right)=0.\]
By the implicit function theorem,  for paths $\{P_{\epsilon}:\epsilon\}$ through $P$ at $\epsilon =0$, we have at $\epsilon =0$:
\[
\frac{d}{d\epsilon}\beta(P_{\epsilon})=-\frac{d}{d\beta}U(P,g,\beta)^{-1}\frac{d}{d\epsilon}U(\beta(P),P_{\epsilon},g_{\epsilon}).\]
Let $I_P=E_Pg(1-g)(W){\bf \phi}{\bf \phi}^{\top}$, then it follows that 
\[
D_{\beta,P}=I_P^{-1}\{D_{1,P}+D_{2,P}\},\]
where $D_{1,P}$ is the canonical gradient of $P\rightarrow U(\beta_1,P,g_1)$ at $\beta_1=\beta_P$ and $g_1=g_P$, and $D_{2,P}$ is the canonical gradient of $P\rightarrow U(\beta_1,P_1,g_P)$ at $\beta_1=\beta_P$ and $P_1=P$. 
Since $U(\beta_1,P,g_1)$ is just a mean parameter with respect to $P$ (like $E_PD$ which has canonical gradient $D-E_PD$), it follows that 
\[
D_{1,P}=(A-g_P(1\mid W)){\bf \phi}(W)\left(Y-(A-g_P(1\mid W))\sum_j \beta_{P}(j)\phi_j(W)\right).\]
Moreover, 
\[
U(\beta_1,P_1,g_P)=E_{P_1}( A-g_P(1\mid W)){\bf \phi}(W)\left(\bar{Q}_{P_1}-(A-g_P(1\mid W))\sum_j \beta_{P_1}(j)\phi_j\right).\]
Therefore, it remains to determine the canonical gradient of 
\begin{align*}
P&\rightarrow \int_{w,a} p_1(w,a) (a-g_P(1\mid w)){\bf \phi}(w)(\bar{Q}_1(w,a)-(a-g_P(1\mid w))\sum_j \beta_1(j)\phi_j(w))d\mu(w,a)\\
&=\int_{w,a}p_1(w,a) (a-g_P(1\mid w)){\bf \phi}(w)\bar{Q}_1(w,a)-
\int_{w,a} p_1(w,a)(a-g_P(1\mid w))^2{\bf \phi}(w)\sum_j \beta_1(j)\phi_j(w)
\end{align*}
at $p_1=p$.
We know that the canonical gradient of $P\rightarrow g_P(1\mid w)$ is given by $I(W=w)/p(w)(A-g_P(1\mid W))$. So, by the delta-method, the canonical gradient is given by
\begin{align*}
D_{2,P}(W,A)&=-\int_{w,a} p(w)g(a\mid w) I(W=w)/p(w)(A-g_P(1\mid W)){\bf \phi}(w) \bar{Q}_P(w,a)d\mu(w,a)\\
&+2\int_{w,a} p(w)g(a\mid w) (a-g_P(1\mid w)){\bf \phi}(w)I(W=w)/p(w)(A-g_P(1\mid W))\sum_j \beta_j\phi_j(w)d\mu(w,a)\\
&=-(A-g_P(1\mid W)){\bf \phi}(W)\int_a g(a\mid W)\bar{Q}_P(W,a)d\mu(a\mid w)\\
&+2(A-g_P(1\mid W)){\bf \phi}(W)\sum_j \beta_j\phi_j(W) \int_ag(a\mid W)(a-g_P(1\mid W))d\mu(a\mid w)\\
&=-(A-g_P(1\mid W)){\bf \phi}(W)\theta_P(W),
\end{align*}
since $\int_a g_P(a\mid W)(a-g_P(1\mid W))d\mu(a\mid w)=E_P(A-E_P(A\mid W))=0$.
Thus,
\begin{align*}
D_{1,P}+D_{2,P}&= (A-g_P(1\mid W)){\bf \phi}(W)\left(Y-(A-g_P(1\mid W))\sum_j \beta_P(j)\phi_j(W)\right)-(A-g_P(1\mid W)){\bf \phi}(W)\theta_P(W)\\
&=(A-g_P(1\mid W)){\bf \phi}(W)\left(Y-\theta_P(W)-(A-g_P(1\mid W))\sum_j \beta_P(j)\phi_j(W)\right).
\end{align*}
This proves that
\[
D_{\beta,P}^r=I_P^{-1} (A-g_P(1\mid W)){\bf \phi}(W)(Y-\theta_P(W)-(A-g_P(1\mid W))\sum_j \beta_P(j)\phi_j(W)),\]
where $I_P=E_Pg(1-g)(1\mid W){\bf \phi}{\bf \phi}^{\top}$.
\end{proof}

\begin{proof}[Proof of Lemma \ref{lem:psi_pound_loss}]
The proof is analogous to the proof of Lemma \ref{lem:psi_tilde_loss} by replacing $A$ with $S$. Specifically, in Lemma \ref{lem:psi_tilde_loss}, we showed the loss function for learning the working model for the conditional effect of $A$ on $Y$. Here, in Lemma \ref{lem:psi_pound_loss}, we have a similar loss function for learning the conditional effect of $S$ on $Y$.
\end{proof}

\begin{proof}[Proof of lemma \ref{lem:psi_pound_beta_gradient}]
The proof is analogous to the proof of Lemma \ref{lem:psi_tilde_beta_gradient} by replacing $A$ with $S$ and $W$ with $W,A$.
\end{proof}

\begin{proof}[Proof of lemma \ref{lem:psi_pound_gradient}]
The $p_{Y\mid S,W,A}$-score component of $D_{{\cal M}_{S,w},P}^{\#}$ is given by
\begin{align*}
D_{{\cal M}_{S,w},\beta,P}^{\#}&\equiv \left\{ E_P\Pi_P(0\mid W,0)\frac{d}{d\beta_P}\tau_{S,\beta_P}(W,0)\right\}^{\top} D_{\beta,P}\\
&-\left\{E_P \Pi_P(0\mid W,1)\frac{d}{d\beta_P}\tau_{S,\beta_P}(W,1)\right\}^{\top} D_{\beta,P}\\
&=\sum_j D_{\beta,P,j} E_P\Pi_P(0\mid W,0)\phi_j(W,0)-\sum_j D_{\beta,P,j} E_P \Pi_P(0\mid W,1)\phi_j(W,1).
\end{align*}
The $p_W$-score component of $D_{{\cal M}_{S,w},P}^{\#}$ is given by 
$$
D^{\#}_{{\cal M}_{S,w},P_W,P}\equiv \Pi_P(0\mid W,0)\tau_{S,\beta_P}(W,0)-\Pi_P(0\mid W,1) \tau_{S,\beta_P}(W,1) -\Psi_{{\cal M}_{S,w}}^{\#}(P).
$$
Finally, we need the contribution from the dependence of $\Psi^{\#}_{{\cal M}_{S,w}}(P)$ on the conditional distribution $\Pi_P$.
The influence curve of $\Pi_P(0\mid w,a)$ is given by $D_{{\cal M}_{S,w},\Pi,(w,a),P}=I(W=w,A=a)/P(w,a)(I(S=0)-\Pi_P(0\mid w,a))$.
So, $\Pi_P$-score component of $D_{{\cal M}_{S,w},P}^{\#}$ is given by
\begin{align*}
D^{\#}_{{\cal M}_{S,w},\Pi,P}&\equiv \int_w I(W=w,A=0)/P(w,0)\tau_{S,\beta_P}(w,0)  (I(S=0)-\Pi_P(0\mid W,0))dP(w)\\
&-\int_{w} I(W=w,A=1)/P(w,1)\tau_{S,\beta_P}(w,1) (I(S=0)-\Pi_P(0\mid W,1))dP(w) \\
&=\left\{  \frac{I(A=0)}{g_P(0\mid W)} \tau_{S,\beta_P}(W,0)-\frac{I(A=1)}{g_P(1\mid W)} \tau_{S,\beta_P}(W,1)\right\}(I(S=0)-\Pi_P(0\mid W,A)).
\end{align*}
Now, note that $I(S=0)-\Pi(0\mid W,A)=-(I(S=1)-\Pi(1\mid W,A))$. Therefore, we have
$$
D^{\#}_{{\cal M}_{S,w},\Pi,P}=\left\{  \frac{A}{g_P(1\mid W)} \tau_{S,\beta_P}(W,0)-\frac{1-A}{g_P(0\mid W)} \tau_{S,\beta_P}(W,1)\right\}(S-\Pi_P(1\mid W,A)).
$$
So, we have every component in
\[
D^{\#}_{{\cal M}_{S,w},P}=D^\#_{{\cal M}_{S,w},P_W,P}+D^\#_{{\cal M}_{S,w},\Pi,P}+D^\#_{{\cal M}_{S,w},\beta,P}.\]
\end{proof}

\section*{Appendix B: Data-Generating Processes, Missing Outcome Data Structure}
The data-generating process for scenarios (a) and (b) in Section \ref{sec:simulation} is:
\begin{align*}
U_Y&\sim\mathcal{N}(0,1)\\
W_1&\sim\mathcal{N}(0,1)\\
W_2&\sim\mathcal{N}(0,1)\\
W_3&\sim\mathcal{N}(0,1)\\
A&\sim \left\{
\begin{array}{ll}
\text{Bern}(0.67) & \text{if } S=1 \\
\text{Bern}(\text{expit}(0.5W_1)) & \text{if } S=0
\end{array}
\right.\\
B&= \left\{
\begin{array}{ll}
0.2+1.1W_1\cdot I(A=0) & \text{scenario (a)} \\
0.5+3.1W_1\cdot I(A=0)+0.8W_3 & \text{scenario (b)}
\end{array}
\right.\\
Y &= 2.5+0.9W_1+1.1W_2+2.7W_3+1.5A+U_Y+I(S=0)\cdot B.
\end{align*}
The data-generating process for scenarios (c) and (d) in Section \ref{sec:simulation} is:
\begin{align*}
U_Y&\sim\mathcal{N}(0,1)\\
W_1&\sim\text{Uniform}(0,1)\\
W_2&\sim\text{Uniform}(0,1)\\
W_3&\sim\text{Uniform}(0,1)\\
A&\sim \left\{
\begin{array}{ll}
\text{Bern}(0.67) & \text{if } S=1 \\
\text{Bern}(\text{expit}(W_1)) & \text{if } S=0
\end{array}
\right.\\
B&= \left\{
\begin{array}{ll}
0.3+0.9W_2\cdot I(A=0)+0.7W_3\cdot I(W_2 > 0.5) & \text{scenario (c)} \\
0.3+1.1W_1\cdot I(A=0)+0.9W_2^2W_3 & \text{scenario (d)}
\end{array}
\right.\\
Y &= 1.9+4.2A+0.9W_1+1.4W_2+2.1W_3+U_Y+I(S=0)\cdot B.
\end{align*}
The data-generating process for scenarios with practical violations of trial enrollment positivity in Section \ref{sec:simulation} is:
\begin{align*}
U_Y&\sim\mathcal{N}(0,0.2^2)\\
W_1&\sim\text{Uniform}(-1,1)\\
W_2&\sim\text{Uniform}(-1,1)\\
W_3&\sim\text{Uniform}(-1,1)\\
S&\sim\text{Bern}(\text{expit}(\alpha(-2+W_1+W_2+\sin(2W_1)+\sin(2W_2)))), \ \alpha\in\{0.5,1\}\\
A&\sim \left\{
\begin{array}{ll}
\text{Bern}(0.67) & \text{if } S=1 \\
\text{Bern}(\text{expit}(-0.5W_1)) & \text{if } S=0
\end{array}
\right.\\
B&=0.2+2.1W_1A\\
Y&=1.9+1.5A+0.9W_1+1.4W_2+2.1W_3+U_Y+I(S=0)\cdot B.
\end{align*}
When the outcome is subject to missingness, we consider the more general missing data structure $O=(S,W,A,\Delta,\Delta\cdot Y)$ where $\Delta$ is an indicator that equals to one if the outcome is observed. In this case, we consider the target parameter given by
$$
\Psi(P)=E_WE(Y_1-Y_0\mid S=1,W,\Delta=1).
$$
In other words, we are interested in the causal effect in the world where everyone's outcome is available. The identification of this target parameter requires an additional coarsening at random (or missing at random) assumption, which states that $(Y_1,Y_0)\perp \Delta\mid W,S=1$. Together with assumptions \ref{A1}, \ref{A2} and \ref{A3} from Section \ref{sec:formulation}, we have the target estimand given by
$$
\Psi(P_0)=E_0[E_0(Y\mid S=1,W,A=1,\Delta=1)-E_0(Y\mid S=1,W,A=0,\Delta=1)].
$$
Now, to learn the working models for the effect of $S$ on $Y$ and the effect of $A$ on $Y$, we use the following loss functions, respectively:
\begin{align*}
L_{\bar{Q}_P,\Pi_P,g^\Delta_P}(\tau)&=\frac{\Delta}{g^\Delta_P(1\mid S,W,A)}\left(Y-\bar{Q}_P(W,A)-(S-\Pi_P(1\mid W,A))\tau\right)^2,\\
L_{\theta_P,g_P,\tilde{g}^\Delta_P}(\tau)&=\frac{\Delta}{\tilde{g}^\Delta_P(1\mid W,A)}\left(Y-\theta_P(W)-(A-g_P(1\mid W))\tau\right)^2,
\end{align*}
where $g^\Delta_P(1\mid S,W,A)\equiv P(\Delta=1\mid S,W,A)$ and $\tilde{g}^\Delta_P(1\mid W,A)\equiv P(\Delta=1\mid W,A)$. In addition, we also multiply the factor $\Delta/g^\Delta_P(1\mid S,W,A)$ and $\Delta/\tilde{g}^\Delta_P(1\mid W,A)$ to the $Y$-components of the canonical gradients of the projection parameters we showed in Lemma \ref{lem:psi_tilde_gradient} and \ref{lem:psi_pound_gradient} respectively.

\section*{Appendix C: Adaptive CV-TMLE}
The A-TMLE is targeting a data dependent target parameter $\Psi_{{\cal M}_{w}}(P_0)$ ignoring that the parameter itself already depends on the data. In the literature for data-dependent target parameters, we have generally recommended the use of cross-validated TMLE (CV-TMLE) to minimize reliance on Donsker class conditions. Therefore, in this appendix, we describe and analyze the adaptive CV-TMLE for this particular type of data-adaptive target parameter $P_n\rightarrow \Psi_{{\cal M}_w(P_n)}(P_0)$. 

Let $P_{n,v}$ and $P_{n,v}^1$ be the empirical measures of the training and validation sample, respectively, for $v=1,\ldots,V$, using $V$-fold cross-validation scheme. Let ${\cal M}_{w,n,v}={\cal M}_w(P_{n,v})$. This defines a data-adaptive target parameter $\Psi_{{\cal M}_{w,n,v}}:{\cal M}\rightarrow\openr$.
Let $P_{n,v}^0$ be an initial estimator of $P_0$ based on $P_{n,v}$. Let $P_{n,v}^\star$ be a TMLE of $\Psi_{{\cal M}_{w,n,v}}(P_0)$ based on this initial estimator $P_{n,v}^0$, where the TMLE-update is applied to the validation sample $P_{n,v}^1$, so that $P_{n,v}^1 D_{{\cal M}_{w,n,v},P_{n,v}^\star}=o_P(n^{-1/2})$, for $v=1,\ldots,V$. One could also carry out a pooled-TMLE-update step by minimizing the cross-validated empirical risk $\epsilon\rightarrow 1/V\sum_v P_{n,v}^1 L(P_{n,v}^0(\epsilon))$ using the same least favorable path $\{P_{n,v}^0(\epsilon):\epsilon\}$. In both cases, we end up with $P_{n,v}^\star$, $v=1,\ldots,V$, that solves $1/V\sum_{v=1}^V P_{n,v}^1 D_{{\cal M}_{n,w,v},P_{n,v}^\star}=o_P(n^{-1/2})$. We note that if ${\cal M}_{w,n,v}$ is a parametric model, and a sample size of $n/V$ is large enough to train the parameters of this working model, then one could estimate $P_0$ with an MLE $P_{n,v}^\star=\arg\min_{P\in {\cal M}_{w,n,v}}P_{n,v}^1 L(P)$, thus not using an initial estimator. Using a least favorable path with many more extra parameters in the pooled-TMLE allows one to use a simple initial estimator (less data-adaptive), the pooled-TMLE starts resembling a parametric MLE over the cross-validated empirical risk. 

The adaptive CV-TMLE of $\Psi(P_0)$ is defined as
$$
\psi_n^\star\equiv \frac{1}{V}\sum_{v=1}^V \Psi_{{\cal M}_{w,n,v}}(P_{n,v}^\star).
$$
We note that this is the same as the CV-TMLE of the data adaptive target parameter $1/V\sum_{v=1}^V \Psi_{{\cal M}_{w,n,v}}(P_0)$ as proposed and analyzed in the targeted learning literature for general data-adaptive target parameters but applied to this particular type of data-adaptive target parameter \cite{hubbard_statistical_2016}.

We now analyze the adaptive CV-TMLE of $\Psi(P_0)$, analogue to the analysis of the A-TMLE in Section \ref{sec:theory}. 
As usual, for CV-TMLE we have
\[
\psi_n^\star-\frac{1}{V}\sum_{v=1}^V \Psi_{{\cal M}_{w,n,v}}(P_0)=
\frac{1}{V}\sum_{v=1}^V(P_{n,v}^1-P_0)D_{{\cal M}_{w,n,v},P_{n,v}^\star}+\frac{1}{V}\sum_{v=1}^V R_{{\cal M}_{w,n,v}}(P_{n,v}^\star,P_0)+o_P(n^{-1/2}).\]
We assume the analogue of $\Psi_{{\cal M}_{w}}(P_0)-\Psi(P_0)=o_P(n^{-1/2})$ as presented in in Section \ref{sec:theory}, which is now given by
\[
\frac{1}{V}\sum_{v=1}^V\{ \Psi_{{\cal M}_{w,n,v}}(P_0)-\Psi(P_0)\}=o_P(n^{-1/2}).\]
The sufficient condition for this is given by the following. 
Let $P_{0,n,v}\equiv \Pi_{{\cal M}_{w,n,v}}(P_0)$; $\tilde{P}_{0,n,v}=\Pi(P_{0,n,v}\mid {\cal M}_0)$. 
Let  $\tilde{D}_{{\cal M}_{w,n,v},P_{0,n,v}}$ be the projection of $D_{{\cal M}_0,P_{0,n,v}}$ onto the tangent space $T_{{\cal M}_{w,n,v}}(P_{0,n,v})\subset L^2_0(P_{0,n,v})$ of the working model ${\cal M}_{w,n,v}$ at $P_{0,n,v}$. 
Assume that ${\cal M}_{w,n,v}$ approximates ${\cal M}_0$ in the sense that $1/V\sum_{v=1}^V\{ \Psi(\tilde{P}_{0,n,v})-\Psi(P_{0,n,v})\}=o_P(n^{-1/2})$ and 
\[
\frac{1}{V}\sum_{v=1}^V (\tilde{P}_{0,n,v}-P_0)\left\{ D_{{\cal M}_0,\tilde{P}_{0,n,v}}-\tilde{D}_{{\cal M}_{w,n,v},\tilde{P}_{0,n,v}}\right\}=o_P(n^{-1/2}).
\]
Given this we have
\[
\psi_n^\star-\Psi(P_0)=
\frac{1}{V}\sum_{v=1}^V(P_{n,v}^1-P_0)D_{{\cal M}_{w,n,v},P_{n,v}^\star}+\frac{1}{V}\sum_{v=1}^V R_{{\cal M}_{w,n,v}}(P_{n,v}^\star,P_0)+o_P(n^{-1/2}).\]
The analogue rate of convergence condition is given by $1/V\sum_{v=1}^V R_{{\cal M}_{w,n,v}}(P_{n,v}^\star,P_0)=o_P(n^{-1/2})$. 
If conditional on the training sample $P_{n,v}$, $D_{{\cal M}_{w,n,v},P_{n,v}^\star}$ falls in a $P_0$-Donsker class (note that only the targeting step depends on $P_{n,v}^1$, making this a very weak condition), then it follows
\[
\psi_n^\star-\Psi(P_0)=O_P(n^{-1/2}).\]
Suppose $\pl D_{{\cal M}_{w,n,v},P_{n,v}^\star}-D_{{\cal M}_{w,n,v},P_{n,v,0}^\star}\pl_{P_0}=o_P(1)$, where $P_{n,v,0}^\star$ is the TMLE-update of $P_{n,v}^0$ under  $P_0$ (i.e., maximize $P_0 L(P_{n,v,\epsilon}^0)$ instead of $P_{n,v}^1 L(P_{n,v,\epsilon}^0)$ for a universal least favorable path through $P_{n,v}^0$). 
Then, we obtain
\[
\psi_n^\star-\Psi(P_0)=
\frac{1}{V}\sum_{v=1}^V(P_{n,v}^1-P_0)D_{{\cal M}_{w,n,v},P_{n,v,0}^\star}+o_P(n^{-1/2}).\]
For each $v$, we have that $(P_{n,v}^1-P_0)D_{{\cal M}_{w,n,v},P_{n,v,0}^\star}$ is a sum of mean zero independent random variables so that this term will be asymptotically normal if the variance converges to a fixed $\sigma^2_0$. Let's assume that $\pl D_{{\cal M}_{w,n,v},P_{n,v,0}^\star}-D_{{\cal M}_{w,n,v},P_0}\pl_{P_0}=o_P(1)$ so that we obtain
\[
\psi_n^\star-\Psi(P_0)=
\frac{1}{V}\sum_{v=1}^V(P_{n,v}^1-P_0)D_{{\cal M}_{w,n,v},P_0}+o_P(n^{-1/2}).\]
Let $\sigma_{w,n,v}^2=P_0 D_{{\cal M}_{w,n,v},P_0}^2$. Then we can write the leading term as:
\[
\frac{1}{V}\sum_{v=1}^V \sigma_{w,n,v} (P_{n,v}^1-P_0) D_{{\cal M}_{w,n,v},P_0}/\sigma_{w,n,v},\]
Let $\sigma_{w,n}\equiv 1/V\sum_{v=1}^V \sigma_{w,n,v}$. Thus,
\[
\sigma_{w,n}^{-1}(\psi_n^\star-\psi_0)=\sigma_{w,n}^{-1} \frac{1}{V}\sum_{v=1}^V (P_{n,v}^1-P_0) D_{{\cal M}_{w,n,v},P_0}/\sigma_{w,n,v}+o_P(n^{-1/2}).\]
Note that the right-hand side leading term is a weighted average over $v$ of sample means over $P_{n,v}^1$ of mean zero and variance one independent random variables, where each $v$-specific sample mean converges to $N(0,1)$. Therefore, it appears a rather weak condition to assume
\[\sigma_{w,n}^{-1} \frac{1}{V}\sum_{v=1}^V (P_{n,v}^1-P_0) D_{{\cal M}_{w,n,v},P_0}/\sigma_{w,n,v}\Rightarrow_d N(0,1).\]
In this manner, we have minimized the condition on ${\cal M}_w(P_n)$ with respect to convergence to the fixed oracle model, while still obtaining asymptotic normality. If we make the stronger assumption that $\pl D_{{\cal M}_{w,n,v},P_0}-D_{{\cal M}_0,P_0}\pl_{P_0}=o_P(1)$, then we obtain asymptotic linearity with a super-efficient influence curve: 
\[
\psi_n^\star-\Psi(P_0)=P_n D_{{\cal M}_0,P_0}+o_P(n^{-1/2}).\]
This proves the following theorem for the adaptive CV-TMLE of $\Psi(P_0)$.
\begin{theorem}\label{atmletheorem2}
Assume $1/V\sum_v P_{n,v}^1 D_{{\cal M}_{w,n,v},P_{n,v}^\star}=o_P(n^{-1/2})$. 
We have
\[
\psi_n^\star-\frac{1}{V}\sum_{v=1}^V \Psi_{{\cal M}_{w,n,v}}(P_0)=
\frac{1}{V}\sum_{v=1}^V(P_{n,v}^1-P_0)D_{{\cal M}_{w,n,v},P_{n,v}^\star}+\frac{1}{V}\sum_{v=1}^V R_{{\cal M}_{w,n,v}}(P_{n,v}^\star,P_0)+o_P(n^{-1/2}).\]
{\bf Oracle model approximation condition:}
Assume \[
\frac{1}{V}\sum_{v=1}^V\{ \Psi_{{\cal M}_{w,n,v}}(P_0)-\Psi(P_0)\}=o_P(n^{-1/2}).\]
A  sufficient condition for this is given by the following. 
Let $P_{0,n,v}\equiv \Pi_{{\cal M}_{w,n,v}}(P_0)$; $\tilde{P}_{0,n,v}=\Pi(P_{0,n,v}\mid {\cal M}_0)$. 
Let  $\tilde{D}_{{\cal M}_{w,n,v},P_{0,n,v}}$ be the projection of $D_{{\cal M}_0,P_{0,n,v}}$ onto the tangent space $T_{{\cal M}_{w,n,v}}(P_{0,n,v})\subset L^2_0(P_{0,n,v})$ of the working model ${\cal M}_{w,n,v}$ at $P_{0,n,v}$. 
Assume that ${\cal M}_{w,n,v}$ approximates ${\cal M}_0$ in the sense that $1/V\sum_{v=1}^V\{ \Psi(\tilde{P}_{0,n,v})-\Psi(P_{0,n,v})\}=o_P(n^{-1/2})$ and 
\[
\frac{1}{V}\sum_{v=1}^V (\tilde{P}_{0,n,v}-P_0)\left\{ D_{{\cal M}_0,\tilde{P}_{0,n,v}}-\tilde{D}_{{\cal M}_{w,n,v},\tilde{P}_{0,n,v}}\right\}=o_P(n^{-1/2}).
\]
Then,
\[
\psi_n^\star-\Psi(P_0)=
\frac{1}{V}\sum_{v=1}^V(P_{n,v}^1-P_0)D_{{\cal M}_{w,n,v},P_{n,v}^\star}+\frac{1}{V}\sum_{v=1}^V R_{{\cal M}_{w,n,v}}(P_{n,v}^\star,P_0)+o_P(n^{-1/2}).\]
{\bf Rate of convergence condition:}
Assume $1/V\sum_{v=1}^V R_{{\cal M}_{w,n,v}}(P_{n,v}^\star,P_0)=o_P(n^{-1/2})$. \newline
{\bf Weak Donsker class condition:}
Assume, conditional on the training sample $P_{n,v}$, $D_{{\cal M}_{w,n,v},P_{n,v}^\star}$ falls in a $P_0$-Donsker class (note that only the targeting step depends on $P_{n,v}^1$, making this a very weak condition). Then,
\[
\psi_n^\star-\Psi(P_0)=O_P(n^{-1/2}).\]
\newline
{\bf Consistency of TMLE $P_{n,v}^\star$ to $P_0$:}
Assume that $\pl D_{{\cal M}_{w,n,v},P_{n,v,0}^\star}-D_{{\cal M}_{w,n,v},P_0}\pl_{P_0}=o_P(1)$.
Let $\sigma_{w,n,v}^2=P_0 D_{{\cal M}_{w,n,v},P_0}^2$ and $\sigma_{w,n}=1/V\sum_{v=1}^V \sigma_{w,n,v}$.  
Then,
\[
\sigma_{w,n}^{-1}(\psi_n^\star-\psi_0)=\sigma_{w,n}^{-1} \frac{1}{V}\sum_{v=1}^V (P_{n,v}^1-P_0) D_{{\cal M}_{w,n,v},P_0}/\sigma_{w,n,v}+o_P(n^{-1/2}).\]
Note that the right-hand side leading term is a weighted average over $v$ of sample means over $P_{n,v}^1$ of mean zero and variance one independent random variables, so that each $v$-specific sample mean converges in distribution to $N(0,1)$.\newline
{\bf Asymptotic normality condition:} Assume
\[\sigma_{w,n}^{-1} \frac{1}{V}\sum_{v=1}^V (P_{n,v}^1-P_0) D_{{\cal M}_{w,n,v},P_0}/\sigma_{w,n,v}\Rightarrow_d N(0,1).\]
Then, \[
\sigma_{w,n}^{-1}(\psi_n^\star-\psi_0)\Rightarrow_d N(0,1).\]
{\bf Asymptotic linearity condition:}
Assume $\pl D_{{\cal M}_{w,n,v},P_0}-D_{{\cal M}_0,P_0}\pl_{P_0}=o_P(1)$. Then we have asymptotic linearity with a super-efficient influence curve: 
\[
\psi_n^\star-\Psi(P_0)=P_n D_{{\cal M}_0,P_0}+o_P(n^{-1/2}).\]
\end{theorem}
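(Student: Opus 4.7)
The plan is to mimic the analysis carried out for A-TMLE in Section \ref{sec:theory}, but performed fold-by-fold on $(P_{n,v}, P_{n,v}^1)$ and then averaged over the $V$ validation folds. The argument organizes naturally into stages corresponding to the bulleted conditions in the statement, so I would prove each consequence in turn using only the assumption just introduced.

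First, the base decomposition comes directly from the definition of the exact remainder: for each $v$,
\[
\Psi_{\mathcal{M}_{w,n,v}}(P_{n,v}^\star) - \Psi_{\mathcal{M}_{w,n,v}}(P_0) = -P_0 D_{\mathcal{M}_{w,n,v}, P_{n,v}^\star} + R_{\mathcal{M}_{w,n,v}}(P_{n,v}^\star, P_0).
\]
Adding $P_{n,v}^1 D_{\mathcal{M}_{w,n,v}, P_{n,v}^\star}$ to both sides, averaging over $v$, and invoking the efficient score equation hypothesis $(1/V)\sum_v P_{n,v}^1 D_{\mathcal{M}_{w,n,v}, P_{n,v}^\star} = o_P(n^{-1/2})$ produces the first displayed identity. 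To replace $(1/V)\sum_v \Psi_{\mathcal{M}_{w,n,v}}(P_0)$ by $\Psi(P_0)$ I would run the algebra in Section \ref{sec:theory} fold-by-fold: writing $\Psi_{\mathcal{M}_{w,n,v}}(P_0) - \Psi(P_0) = \{\Psi(\tilde P_{0,n,v}) - \Psi(P_{0,n,v})\} + \{\Psi_{\mathcal{M}_0}(\tilde P_{0,n,v}) - \Psi_{\mathcal{M}_0}(P_0)\}$, expanding the second difference through $D_{\mathcal{M}_0, \tilde P_{0,n,v}}$, subtracting the zero term $P_0 \tilde D_{\mathcal{M}_{w,n,v}, \tilde P_{0,n,v}}$ (tangent-space orthogonality at an MLE projection), and bounding the residual by $R_{\mathcal{M}_0}(\tilde P_{0,n,v}, P_0)$ plus an empirical-process-like term that vanishes by the stated sufficient conditions.

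Second, the rate of convergence hypothesis kills the remainder term directly. The weak Donsker step exploits the cross-fitting structure: conditional on the training sample $P_{n,v}$, the gradient $D_{\mathcal{M}_{w,n,v}, P_{n,v}^\star}$ is (modulo the minor targeting step, which is controlled by the hypothesis that it falls in a fixed Donsker class) a deterministic function evaluated on the independent validation sample $P_{n,v}^1$, so $(P_{n,v}^1 - P_0) D_{\mathcal{M}_{w,n,v}, P_{n,v}^\star} = O_P(n^{-1/2})$ follows immediately. The $L^2(P_0)$-consistency assumption $\|D_{\mathcal{M}_{w,n,v}, P_{n,v,0}^\star} - D_{\mathcal{M}_{w,n,v}, P_0}\|_{P_0} = o_P(1)$, combined with standard equicontinuity of the conditional empirical process, then justifies swapping $P_{n,v}^\star$ for $P_0$ inside the gradient, yielding the clean form with $D_{\mathcal{M}_{w,n,v}, P_0}$.

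Finally, for asymptotic normality, the leading term
\[
\sigma_{w,n}^{-1}\,\frac{1}{V}\sum_{v=1}^V (P_{n,v}^1 - P_0)\, D_{\mathcal{M}_{w,n,v}, P_0}
\]
can be rewritten as a weighted average of $v$-specific standardized sample means. Conditional on the training folds, each such sample mean is a centered and scaled sum of $n/V$ i.i.d.\ terms with variance $\sigma_{w,n,v}^2$; the stated asymptotic-normality assumption asserts precisely that the weighted average of these converges to $N(0,1)$. Under the stronger assumption $\|D_{\mathcal{M}_{w,n,v}, P_0} - D_{\mathcal{M}_0, P_0}\|_{P_0} = o_P(1)$, the fold dependence washes out, $\sigma_{w,n}^2 \to \sigma_0^2 = P_0 D_{\mathcal{M}_0, P_0}^2$, and the leading term collapses to $P_n D_{\mathcal{M}_0, P_0} + o_P(n^{-1/2})$, giving asymptotic linearity with the super-efficient influence curve identified in \cite{adml}. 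The main technical obstacle I anticipate is making the "conditional on the training sample" empirical-process argument uniformly valid across folds so that the weighted average converges jointly to a Gaussian, especially when the random normalizers $\sigma_{w,n,v}$ do not stabilize to a constant; the asymptotic-normality hypothesis in the theorem is exactly the minimal condition that sidesteps this difficulty.
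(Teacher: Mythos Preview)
Your proposal is correct and follows essentially the same route as the paper: the paper's proof (the discussion in Appendix~C preceding the theorem statement) is precisely a fold-by-fold replay of the Section~\ref{sec:theory} analysis, proceeding through the exact-remainder identity, the oracle-model approximation argument via tangent-space orthogonality at the projection, the remainder and Donsker conditions, the consistency swap to $D_{{\cal M}_{w,n,v},P_0}$, and finally the standardized weighted-average representation leading to asymptotic normality and linearity. The only minor slip is a sign in your decomposition $\Psi_{\mathcal{M}_{w,n,v}}(P_0)-\Psi(P_0)=\{\Psi(\tilde P_{0,n,v})-\Psi(P_{0,n,v})\}+\{\Psi_{\mathcal{M}_0}(\tilde P_{0,n,v})-\Psi_{\mathcal{M}_0}(P_0)\}$ (the first brace should carry a minus), but this does not affect the argument.
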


\section*{Appendix D: Alternative Methods for Constructing Data-Adaptive Working Models}
In Section \ref{sec:implementation}, we described how one could use relaxed-HAL to data-adaptively learn a working model for both the conditional effect of $S$ and $A$. In this appendix, we discuss alternative approaches for constructing working models.

We could use any adaptive estimator of $\theta_0=E_0(Y\mid S=0,W,A)$, which could be an HAL-MLE or super-learner stratifying by $S=0$. Let $\theta_n=\hat{\Theta}(P_n)$ be this estimator. We could then run an additional HAL on basis functions $\{S\phi_j(W,A): j\}$ to fit $\bar{Q}_0=\theta_n+\sum_j \beta_jS\phi_j$ using $\theta_n$ as off-set, and select the $L_1$-norm of $\beta$ of this last HAL-MLE with cross-validation. The resulting fit $\sum_j\beta_{j,n}S\phi_j$ now defines our data adaptive semiparametric regression working model ${\cal Q}_{w,sp,n}=\{\theta+\sum_{j,\beta_n(j)\not =0}\beta_j S\phi_j: \beta,\theta\}$ for $\bar{Q}_0$ that leaves $\theta$ unspecified and uses $\sum_j \beta_j S\phi_j$ for modeling $\tau_0$.
 
{\bf  Meta-HAL-MLE to generate working model for $\tau_0$:}
To obtain extra signal in the data for the second regression targeting $\tau_0$ we might apply cross-fitting again, analogue to above for the parametric regression working model.
 Let $P_{n,v}$ and $P_{n,v}^1$ be the training and validation sample for the $v$-th sample split, respectively, and $\theta_{n,v}=\hat{\Theta}(P_{n,v})$ be the resulting training sample fit of $\theta_0$, $v=1,\ldots,V$. We can define $\beta_n^C=\arg\min_{\beta,\pl \beta\pl\leq C} 1/V\sum_{v=1}^V P_{n,v}^1 L(\theta_{n,v}+\sum_j\beta_jS\phi_j)$ as the minimizer of the cross-fitted empirical risk. We can select $C$ with cross-validation resulting in a $\beta_n^{cf}$ which then generates the working model $\{\sum_{j,\beta_n^{cf}(j)\not =0}S\beta_j\phi_j:\beta\}$ for $\tau_0$. One can think of $\theta_n+\sum_j\beta_n^{cf}(j)S\phi_j$ as a meta-HAL-MLE, where cross-validation selects the best $\beta$-specific estimator $\hat{\theta}(P_n)+\sum_j S\beta_j\phi_j$, under an $L_1$-norm constraint on $\beta$.  
 Carrying out this meta-HAL-MLE (using cross-validation to select $C$) implies the semiparametric working model ${\cal Q}_{w,sp,n}=\{\theta+\sum_{j,\beta_n^{cf}(j)\not =0}\beta_jS\phi_j: \beta,\theta\}$. 
  
 {\bf Discrete super-learner to select among subspace specific meta-HAL-MLEs of $\tau_0$:}
  Analogue to this method for the parametric working model, we can use a discrete super-learner with a collection of the above described subspace-specific meta-HAL-MLEs. That will select one specific subspace-specific meta-HAL-MLEs with a corresponding working model for $\tau_0$. This then implies the working semiparametric regression model. Instead of using internal cross-validation within the subspace-specific meta-HAL-MLE for selecting $C$, one could also just use a discrete super-learner with candidate meta-HAL-MLEs that are both indexed by the subspace and the $C$ across the desired collection of subspaces and $C$-values. 
  
Once the working model ${\cal Q}_{w,n}$ is selected, we can apply our TMLEs for that choice of parametric or semiparametric regression working model. In that TMLE one could use the same initial estimator $\theta_n$ as was used to generate the working model. 
 
\subsection*{Learning the working model on an independent data set}
Imagine that one has access to a previous related study that collected the same covariates and outcome with possibly a different treatment. One could use this previous study to learn a working model for the outcome regression $\bar{Q}_0$, possibly apply some undersmoothing to make it not too adaptive towards the true regression function in that previous study. One wants to make sure that that study has a sample size larger or equal than the one in the current study so that the resulting working model is flexible enough for the sample size of the current study. One could now use this ${\cal M}_{w}$ as working model in the A-TMLE. One can now either use a TMLE or CV-TMLE of the target parameter $\Psi_{{\cal M}_{w}}(P_0)$. Below we present the theorem for the resulting adaptive CV-TMLE. The advantage of learning the working model on an external data set is that it allows us to establish asymptotic normality without requiring that $D_{{\cal M}_{w},P_0}$ converges to a fixed $D_{{\cal M}_0,P_0}$. One still needs that ${\cal M}_{w}$ approximates $P_0$ in the sense that $\Psi_{{\cal M}_{w}}(P_0)-\Psi(P_0)=o_P(n^{-1/2})$, but as we argued in \cite{adml} that condition can be achieved without relying on ${\cal M}_{w}$ to converge to a fixed oracle model ${\cal M}_0$ (in essence only relying on ${\cal M}_{w}$ to approximately capture $P_0$).

\begin{theorem}\label{atmletheorem3}
Let ${\cal M}_{w}$ be a fixed working model (learned on external data set). Let $\psi_n^\star=1/V\sum_{v=1}^V \Psi_{{\cal M}_{w}}(P_{n,v}^\star)$ be the CV-TMLE of $1/V\sum_{v=1}^V \Psi_{{\cal M}_{w}}(P_0)$ satisfying $1/V\sum_v P_{n,v}^1 D_{{\cal M}_{w},P_{n,v}^\star}=o_P(n^{-1/2})$.
We have
\[
\psi_n^\star-\Psi_{{\cal M}_{w}}(P_0)=
\frac{1}{V}\sum_{v=1}^V(P_{n,v}^1-P_0)D_{{\cal M}_{w},P_{n,v}^\star}+\frac{1}{V}\sum_{v=1}^V R_{{\cal M}_{w}}(P_{n,v}^\star,P_0)+o_P(n^{-1/2}).\]
{\bf Oracle model approximation condition:}
Assume \[
 \Psi_{{\cal M}_{w}}(P_0)-\Psi(P_0)=o_P(n^{-1/2}).\]
A  sufficient condition for this is presented in Theorem \ref{atmletheorem1}:
Let $P_{0,n}\equiv \Pi_{{\cal M}_{w}}(P_0)$; $\tilde{P}_{0,n}=\Pi(P_{0,n}\mid {\cal M}_0)$ for a submodel ${\cal M}_0\subset{\cal M}$ containing $P_0$. 
Let  $\tilde{D}_{{\cal M}_{w},P_{0,n}}$ be the projection of $D_{{\cal M}_0,P_{0,n}}$ onto the tangent space $T_{{\cal M}_{w}}(P_{0,n})\subset L^2_0(P_{0,n})$ of the working model ${\cal M}_{w}$ at $P_{0,n}$. 
Assume that ${\cal M}_{w}$ approximates $P_0$ in the sense that $\Psi(\tilde{P}_{0,n})-\Psi(P_{0,n})=o_P(n^{-1/2})$ and 
\[
(\tilde{P}_{0,n}-P_0)\left\{ D_{{\cal M}_0,\tilde{P}_{0,n}}-\tilde{D}_{{\cal M}_{w},\tilde{P}_{0,n}}\right\}=o_P(n^{-1/2}).
\]
Then,
\[
\psi_n^\star-\Psi(P_0)=
\frac{1}{V}\sum_{v=1}^V(P_{n,v}^1-P_0)D_{{\cal M}_{w},P_{n,v}^\star}+\frac{1}{V}\sum_{v=1}^V R_{{\cal M}_{w}}(P_{n,v}^\star,P_0)+o_P(n^{-1/2}).\]
{\bf Rate of convergence condition:}
Assume $1/V\sum_{v=1}^V R_{{\cal M}_{w}}(P_{n,v}^\star,P_0)=o_P(n^{-1/2})$. \newline
{\bf Weak Donsker class condition:}
Assume, conditional on the training sample $P_{n,v}$, $D_{{\cal M}_{w,n,v},P_{n,v}^\star}$ falls in a $P_0$-Donsker class (note that only the targeting step depends on $P_{n,v}^1$, making this a very weak condition).
Then,
\[
\psi_n^\star-\Psi(P_0)=O_P(n^{-1/2}).\]
{\bf Consistency of TMLE $P_{n,v}^\star$ to $P_0$:}
Suppose $\max_{v=1}^V\pl D_{{\cal M}_{w},P_{n,v}^\star}-D_{{\cal M}_{w},P_0}\pl_{P_0}=o_P(1)$.
Then,
\[
\psi_n^\star-\Psi(P_0)=
(P_n-P_0)D_{{\cal M}_{w},P_0}+o_P(n^{-1/2}).\]
Note that the right-hand side is a sample mean of mean zero independent random variables.\newline
Therefore, by the CLT, we have
\[
\sigma_{w,n}^{-1}(\psi_n^\star-\psi_0)\Rightarrow_d N(0,1),\]
where $\sigma_{w,n}^2=P_0 D_{{\cal M}_{w},P_0}^2$.\newline
{\bf Asymptotic linearity condition:}
Assume $\pl D_{{\cal M}_{w},P_0}-D_{{\cal M}_0,P_0}\pl_{P_0}=o_P(1)$. Under this stronger condition on ${\cal M}_{w}$, we have asymptotic linearity with a super-efficient influence curve: 
\[
\psi_n^\star-\Psi(P_0)=P_n D_{{\cal M}_0,P_0}+o_P(n^{-1/2}).\]
\end{theorem}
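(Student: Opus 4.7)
The plan is to follow the template of Theorem \ref{atmletheorem1} but exploit the sample splitting to replace the Donsker-class condition by a conditional $L^2$-consistency argument. I would begin, as in any TMLE analysis, with the exact-remainder identity
\[
\Psi_{{\cal M}_w}(P_{n,v}^\star)-\Psi_{{\cal M}_w}(P_0)=-P_0 D_{{\cal M}_w,P_{n,v}^\star}+R_{{\cal M}_w}(P_{n,v}^\star,P_0),
\]
add $P_{n,v}^1 D_{{\cal M}_w,P_{n,v}^\star}$ to both sides, and invoke the efficient score condition $\frac{1}{V}\sum_v P_{n,v}^1 D_{{\cal M}_w,P_{n,v}^\star}=o_P(n^{-1/2})$ after averaging over $v$. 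This yields the first displayed expansion with $\Psi_{{\cal M}_w}(P_0)$ on the left. The oracle-approximation condition then allows replacement of $\Psi_{{\cal M}_w}(P_0)$ by $\Psi(P_0)$ at cost $o_P(n^{-1/2})$; the sufficient condition claimed in the theorem follows directly from the same $\Pi_{{\cal M}_0}$ second-order decomposition used in the proof of Theorem \ref{atmletheorem1}, since $\mathcal{M}_w$ is fixed here so no averaging over $v$ enters. Applying the rate-of-convergence condition to kill $\frac{1}{V}\sum_v R_{{\cal M}_w}(P_{n,v}^\star,P_0)$ leaves
\[
\psi_n^\star-\Psi(P_0)=\frac{1}{V}\sum_{v=1}^V (P_{n,v}^1-P_0)D_{{\cal M}_w,P_{n,v}^\star}+o_P(n^{-1/2}).
\]

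The key step — and the main place where the CV structure pays off — is to pass from the data-dependent canonical gradient $D_{{\cal M}_w,P_{n,v}^\star}$ to the fixed $D_{{\cal M}_w,P_0}$ in the leading empirical process term without invoking a Donsker-class hypothesis. Conditional on the training sample $P_{n,v}$, the validation empirical process $(P_{n,v}^1-P_0)(D_{{\cal M}_w,P_{n,v}^\star}-D_{{\cal M}_w,P_0})$ is a centered sum of i.i.d.\ random variables with conditional variance bounded by $n_v^{-1}\|D_{{\cal M}_w,P_{n,v}^\star}-D_{{\cal M}_w,P_0}\|_{P_0}^2$ (where $n_v=n/V$). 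Under the stated consistency $\max_v\|D_{{\cal M}_w,P_{n,v}^\star}-D_{{\cal M}_w,P_0}\|_{P_0}=o_P(1)$, Chebyshev's inequality (applied conditionally and then integrated) gives this term as $o_P(n^{-1/2})$. Averaging over $v$, noting that $\frac{1}{V}\sum_v P_{n,v}^1=P_n$ for equal-sized folds, produces
\[
\psi_n^\star-\Psi(P_0)=(P_n-P_0)D_{{\cal M}_w,P_0}+o_P(n^{-1/2}),
\]
which is a sum of i.i.d.\ mean-zero terms. The classical CLT then delivers $\sigma_{w,n}^{-1}n^{1/2}(\psi_n^\star-\psi_0)\Rightarrow N(0,1)$ once one verifies a Lindeberg condition on $D_{{\cal M}_w,P_0}$ (for instance by assuming a uniformly bounded second moment), which is standard.

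The final asymptotic linearity statement is a cosmetic upgrade: under the stronger hypothesis $\|D_{{\cal M}_w,P_0}-D_{{\cal M}_0,P_0}\|_{P_0}=o_P(1)$, one simply re-runs the same $L^2$ argument, using that $(P_n-P_0)(D_{{\cal M}_w,P_0}-D_{{\cal M}_0,P_0})=o_P(n^{-1/2})$ by Chebyshev (again, $\mathcal{M}_w$ being fixed means no Donsker condition is needed). Substituting yields the efficient influence curve $D_{{\cal M}_0,P_0}$ in the limit.

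The hard part will be the conditional empirical process step: one must be careful that the training and validation samples are independent across the split, so that conditional on $P_{n,v}$ the function $D_{{\cal M}_w,P_{n,v}^\star}$ is deterministic, and that the conditional Chebyshev bound translates into an unconditional $o_P(n^{-1/2})$ statement (achieved via Markov's inequality applied to the conditional second moment). Everything else — the exact-remainder expansion, the oracle-approximation argument, and the application of the rate condition — reduces to bookkeeping identical to the non-cross-validated proof of Theorem \ref{atmletheorem1}, with the sum $\frac{1}{V}\sum_v$ treated termwise.
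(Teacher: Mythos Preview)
Your overall structure tracks the paper's argument closely, but there is a real gap in the ``key step.'' You assert that conditional on the training sample $P_{n,v}$ the function $D_{{\cal M}_w,P_{n,v}^\star}$ is deterministic, and then apply Chebyshev to the validation empirical process. This is not correct: in the CV-TMLE construction the initial estimator $P_{n,v}^0$ is fit on the training sample, but the TMLE targeting step is carried out on the validation sample $P_{n,v}^1$ (this is precisely what makes $P_{n,v}^1 D_{{\cal M}_w,P_{n,v}^\star}=o_P(n^{-1/2})$ achievable fold by fold). Hence $P_{n,v}^\star$, and therefore $D_{{\cal M}_w,P_{n,v}^\star}$, still depends on $P_{n,v}^1$ after conditioning on $P_{n,v}$, and the validation empirical process applied to $D_{{\cal M}_w,P_{n,v}^\star}-D_{{\cal M}_w,P_0}$ is \emph{not} a sum of conditionally i.i.d.\ mean-zero terms. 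Your pure Chebyshev argument therefore does not go through.

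This is exactly why the theorem retains a \textbf{Weak Donsker class condition}: conditional on $P_{n,v}$, the residual randomness in $D_{{\cal M}_w,P_{n,v}^\star}$ enters only through the low-dimensional targeting fluctuation, so requiring that this family lies in a $P_0$-Donsker class is mild, but it is still needed to conclude $(P_{n,v}^1-P_0)(D_{{\cal M}_w,P_{n,v}^\star}-D_{{\cal M}_w,P_0})=o_P(n^{-1/2})$ under the $L^2$-consistency assumption. Once that step is in place, the remainder of your outline --- averaging $\frac{1}{V}\sum_v P_{n,v}^1=P_n$, applying the CLT to $(P_n-P_0)D_{{\cal M}_w,P_0}$, and the final replacement of $D_{{\cal M}_w,P_0}$ by $D_{{\cal M}_0,P_0}$ (where your Chebyshev argument \emph{is} valid, since ${\cal M}_w$ is fixed relative to the current sample) --- matches the paper.
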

Note that the above theorem already provides asymptotically valid confidence intervals without relying on the asymptotic linearity condition. By also having the latter condition, the estimator is asymptotically linear with super-efficient influence curve $D_{{\cal M}_0,P_0}$ which comes with regularity with respect to the oracle model ${\cal M}_0$. 

\subsection*{Construction of data-adaptive working model through data-adaptive dimension reductions}
As in the definition of the adaptive CV-TMLE, let ${\cal M}_{w,n,v}\subset {\cal M}$ be a working model based on the training sample $P_{n,v}$, $v=1,\ldots,V$. However, here we want to consider a particular strategy for constructing such working models to which we can then apply the CV-TMLE, thereby obtaining a particular adaptive CV-TMLE described and analyzed in Appendix C. 

To demonstrate this proposal we consider a statistical model that corresponds to variation independent models of conditional densities. Specifically, let $O=(O_1,\ldots,O_J)$ and consider a  statistical model ${\cal M}$ that assumes $p(O)=\prod_{j=1}^J p_{O(j)\mid Pa(O_j)}(O(j)\mid Pa(O_j))$, but leaves all the conditional densities of $O(j)$, given its parents $Pa(O_j)$, unspecified, $j=1,\ldots,J$. For example, the likelihood might be factorized according to the ordering $p(O)=\prod_{j=1}^J p(O(j)\mid \bar{O}(j-1))$, where $\bar{O}(j-1)=(O(1),\ldots,O(j-1))$, and one might not make any assumptions, or one might assume that for some of the conditional densities $p(O(j)\mid \bar{O}(j-1))$ it is known to depend on $\bar{O}(j-1)$ through a dimension reduction $Pa(O(j))$. 
This defines then a statistical model ${\cal M}$ for $P_0$ only driven by conditional independence assumptions. 

{\bf Working model defined by estimated dimension reductions:}
For a given $j$, let $S_{n,j}=S_j(P_n)$ be a data dependent  dimension reduction in the sense that $S_{n,j}(Pa(O_j))$ is of (much) smaller dimension than $Pa(O_j)$. 
Consider the submodel ${\cal M}_{w}$ defined by assuming $p_{O(j)\mid Pa(O_j)}=p_{O(j)\mid S_{n,j}(Pa(O_j))}$, $j=1,\ldots,J$: in other words, this working model assumes that $O(j)$ is independent of $Pa(O_j)$, given the reduction $S_{n,j}(Pa(O_j))$, $j=1,\ldots,J$.
Clearly, we have ${\cal M}_{w}\subset{\cal M}$ by making stronger conditional independence assumptions than the ones defining ${\cal M}$.

The Kullback-Leibler projection of $p$ onto ${\cal M}_{w}$ would involve projecting each $p_j=p_{O(j)\mid Pa(O)j)}\in {\cal M}_{j}$ onto its smaller working model ${\cal M}_{j,n}$ as follows:
\[
\Pi(P_j\mid {\cal M}_{j,n})=\arg\min_{P_{1,j}\in {\cal M}_{j,n}}P L(P_{1,j}),\]
where $L(P)=-\log p$ is the log-likelihood loss. 
So this projection computes the MLE of $P_j$ over the working model under an infinite sample from $P$.
Therefore, we have a clear definition of $\Pi_P(P\mid {\cal M}_{w})$ and also a clear understanding of how one estimates this projection $\Pi_{P_0}(P_0\mid {\cal M}_{w})$ with an MLE $\tilde{P}_n=\arg\min_{P_1\in {\cal M}_{w}}P_n L(P_1)$ over ${\cal M}_{w}$, or, if ${\cal M}_{w}$ is too high dimensional, with a regularized MLE such as an HAL-MLE.  If $Pa(O(j))$ is very high dimensional, then an HAL-estimator of the conditional density $p_{0,j}$ is cumbersome, while, if ${\cal M}_{w}$ is indexed by a low dimensional unspecified function, then we can estimate this projection with a powerful HAL-MLE that is also computationally very feasible.

{\bf Obtaining dimension reduction through fitting the conditional density:}
Such scores $S_{n,j}$ could be learned by fitting the $j$-specific conditional density $p_{0,j}$ with a super-learner or with other state of the art machine learning algorithms. Typically, these estimators naturally imply a corresponding dimension reduction $S_{n,j}$, as we will show now.
 For example, if $O(j)$ is binary, then an estimator $p_{j,n}(1\mid Pa(O_j))$ of $p_{j,0}(1\mid Pa(O_j))$ implies the score $S_{n,j}(Pa(O_j))\equiv p_{j,n}(1\mid Pa(O_j))$. If $O(j)$ is discrete with $k_j$-values, then one could define $S_{n,j}(Pa(O_j))\equiv (p_{j,n}(k\mid Pa(O_j)): k=1,\dots,k_{j}-1)$ as a $ k_j-1$-dimensional score. Consider now the case that $p_n(y\mid x)$ is an estimator of a conditional density $p(y\mid x)$ of a continuous-valued $y$. One might then observe that $p_n(y\mid x)$ only depends on $x$ through a vector $S_n(x)$. 
 
{\bf Obtaining lower dimensional data-adaptive working model directly through a fit of the conditional density:}
Consider now the case that $Y$ is continuous and we want to determine a lower dimensional working model for $p(Y\mid X)$. Suppose our estimator $p_n(y\mid x)$ is fitted through a hazard fit $\lambda_n(y\mid x)$ that is of the form $\exp(\phi_n(y,x))$ for some low dimensional $\phi_n(y,x)$ (e.g., one dimensional, by taking the fit itself). This form $\lambda_n=\exp(\phi_n)$ does not necessarily suggest a score $S_n(x)$, but  one could use as submodel $\{ \lambda(y\mid x)=\exp(h(\phi_n(y,x))):h\}$ for  an arbitrary function $h$. This submodel is parameterized by a univariate function. 

{\bf Summary for obtaining low dimensional working models:}
Overall, we conclude that any kind of machine learning algorithm for fitting a conditional density, including highly aggressive super-learners,  will imply a low dimensional submodel for that conditional density, either a submodel that assumes conditional independence given a dimension reduction of the parent set or a submodel that parametrizes the conditional density in terms of a low dimensional function. 

{\bf General remarks:}
These type of working models ${\cal M}_w(P_n)$ could be highly data-adaptive making it important to carry out the adaptive CV-TMLE instead of the adaptive TMLE. Since the model ${\cal M}_{w,n,v}$ is much lower dimensional, one could use HAL-MLE or   a discrete super-learner based on various HAL-MLEs as initial estimator $P_{n,v}^0$ in the TMLE $P_{n,v}^\star$ targeting $\Psi_{{\cal M}_{w,n,v}}(P_0)$.  Therefore, once the data-adaptive working models ${\cal M}_{w,n,v}$ have been computed, the remaining part of the adaptive CV-TMLE is computationally feasible and can fully utilize a powerful theoretically grounded algorithm HAL with its strong rates of convergence (which then implies that  the rate of convergence condition and Donsker class condition of Theorem \ref{atmletheorem2} hold). To satisfy the conditions on ${\cal M}_w(P_n)$ w.r.t. approximating $P_0$, it will be important that the super-learners or other machine learning algorithms used to generate the working models for the conditional densities $p_{0,j}$ are converging to the true conditional densities at a rate $n^{-1/4}$.  Fortunately, these algorithms for learning the conditional densities have no restrictions and can be utilizing a large variety of machine learning approaches in the literature, including deep-learning, large language models and meta-HAL super learners \cite{wang_metahal_2023}. In this manner, we can utilize the full range of machine learning algorithms and computer power to obtain working models ${\cal M}_{w}(P_n)$ that approximate $P_0$ optimally.

\end{document}